\newcommand{\todo}[2][P]{\xspace\noindent{\color{blue}{[#1] #2}}\xspace}
\newcommand{\cut}[1]{}
\DeclarePairedDelimiter{\ceil}{\lceil}{\rceil}
\DeclarePairedDelimiter\abs{\lvert}{\rvert}%
\DeclarePairedDelimiter\norm{\lVert}{\rVert}%
\let\oldabs\abs
\def\abs{\@ifstar{\oldabs}{\oldabs*}}
\let\oldnorm\norm
\def\norm{\@ifstar{\oldnorm}{\oldnorm*}}
\newcommand{\ema}[1]{\ensuremath{#1}\xspace}
\newtheorem{theorem}{Theorem}
\newtheorem{lemma}{Lemma}
\theoremstyle{definition}
\def\vsls{0cm}
\def\vslm{0cm}
\def\vslb{0cm}
\newcommand{\vs}[1]{\vspace*{-#1}}
\newcommand{\vssm}{\vs{\vsls}}
\newcommand{\vsme}{\vs{\vslm}}
\newcommand{\vsmb}{\vs{\vslmb}}
\newcommand{\vsbi}{\vs{\vslb}}
\newcommand{\compsubsubsection}[1]{\subsubsection{#1}}
\newcommand{\ie}{\textit{i.e.}\xspace}
\newcommand{\eg}{\textit{e.g.}\xspace}
\newcommand{\inti}[2][1]{\ema{[#1..#2]}}
\newcommand{\clt}{Central Limit Theorem\xspace}
\newcommand{\step}[1]{\ema{\mathbf{1}_{#1}}}
\newcommand{\sdss}{search data structures\xspace}
\newcommand{\sds}{search data structure\xspace}
\newcommand{\Sdss}{Search data structures\xspace}
\newcommand{\sdstits}{Search Data Structures\xspace}
\newcommand{\ds}{data structure\xspace}
\newcommand{\dss}{data structures\xspace}
\newcommand{\shds}{structure\xspace}
\newcommand{\shdss}{structures\xspace}
\newcommand{\lf}{lock-free\xspace}
\newcommand{\Lf}{Lock-free\xspace}
\newcommand{\lftit}{Lock-Free\xspace}
\newcommand{\dsll}{linked list\xspace}
\newcommand{\dslls}{linked lists\xspace}
\newcommand{\dsLls}{Linked lists\xspace}
\newcommand{\dslltit}{Linked List\xspace}
\newcommand{\dssl}{skip list\xspace}
\newcommand{\dssls}{skip lists\xspace}
\newcommand{\dsSl}{Skip list\xspace}
\newcommand{\dsSls}{Skip lists\xspace}
\newcommand{\dssltit}{Skip List\xspace}
 \newcommand{\dsht}{hash table\xspace}
\newcommand{\dshts}{hash tables\xspace}
\newcommand{\dshttit}{Hash Table\xspace}
 \newcommand{\dsbt}{binary tree\xspace}
\newcommand{\dsbts}{binary trees\xspace}
\newcommand{\dsBts}{Binary trees\xspace}
\newcommand{\dsbttit}{Binary Tree\xspace}
\newcommand{\src}{\FuncSty{Search}\xspace}
\newcommand{\ins}{\FuncSty{Insert}\xspace}
\newcommand{\del}{\FuncSty{Delete}\xspace}
\newcommand{\nulltxt}{{\textit{null}}\xspace}
\newcommand{\castxtl}{\textit{Compare-and-Swap}\xspace}
\newcommand{\castxts}{\textit{CAS}\xspace}
\newcommand{\Read}{\textit{Read}\xspace}
\newcommand{\range}{\ema{\mathcal{R}}}
\newcommand{\nbn}{\ema{\mathcal{N}}}
\newcommand{\key}[1]{\ema{K_i}}
\def\fcas{\mathit{cas}}
\def\fread{\mathit{read}}
\def\ftrav{\mathit{trav}}
\def\fint{\mathit{int}}
\def\fext{\mathit{ext}}
\def\fmem{e}
\def\fcache{\mathit{cache}}
\def\fcmp{\mathit{cmp}}
\def\fapp{\mathit{app}}
\def\fdat{\mathit{dat}}
\def\ftlb{\mathit{tlb}}
\def\frec{\mathit{rec}}
\newcommand{\Cas}{\ema{cas}}
\newcommand{\formread}{\ema{\fread}}
\newcommand{\formcas}{\ema{\fcas}}
\newcommand{\formtrav}{\ema{\ftrav}}
\newcommand{\formmem}{\ema{\fmem}}
\newcommand{\formint}{\ema{\fint}}
\newcommand{\formext}{\ema{\fext}}
\newcommand{\formdat}{\ema{\fdat}}
\newcommand{\opcas}[1]{\ema{\fcas\!\left( #1 \right)}}
\newcommand{\opread}[1]{\ema{\fread\!\left( #1 \right)}}
\newcommand{\opmem}[1]{\ema{\formmem\!\left(#1\right)}}
\newcommand{\genlat}[1]{\ema{t^{#1}}}
\newcommand{\latdat}[1]{\ema{\genlat{\fdat}_{#1}}}
\newcommand{\lattlb}[1]{\ema{\genlat{\ftlb}_{#1}}}
\newcommand{\latcas}{\ema{\genlat{\fcas}}}
\newcommand{\latcmp}{\ema{\genlat{\fcmp}}}
\newcommand{\latapp}{\ema{\genlat{\fapp}}}
\newcommand{\latrec}{\ema{\genlat{\frec}}}
\newcommand{\genrate}[2]{\ema{\lambda_{#1}^{#2}}}
\newcommand{\rateCas}[1]{\ema{\genrate{#1}{\formcas}}}
\newcommand{\rateRead}[1]{\ema{\genrate{#1}{\formread}}}
\newcommand{\rateTra}[1]{\ema{\genrate{#1}{\formtrav}}}
\newcommand{\rateReadInt}[1]{\rateRead{\fint,#1}}
\newcommand{\rateCasInt}[1]{\rateCas{\fint,#1}}
\newcommand{\rateReadIE}[2]{\rateRead{#1,#2}}
\newcommand{\rateReadIntv}[2]{\rateRead{\fint,#1,#2}}
\newcommand{\rateCasEI}[2]{\rateCas{\fint,#1,#2}}
\newcommand{\rateCasEIA}[2]{\rateCas{#1,#2}}
\newcommand{\subt}[1]{\ema{\mathit{Sub}_{#1}}}
\newcommand{\proba}[1]{\ema{\mathbb{P}\left[ #1 \right]}}
\newcommand{\expec}[1]{\ema{\mathbb{E}\left[ #1 \right]}}
\newcommand{\DS}{\ema{D}}
\newcommand{\noded}[2]{\ema{N_{#1}^{#2}}}
\newcommand{\presprod}[2]{\ema{p_{#1}^{#2}}}
\newcommand{\nodet}[3]{\noded{#2,#3}{#1}}
\newcommand{\node}[1]{\noded{#1}{}}
\newcommand{\prespro}[1]{\presprod{#1}{}}
\newcommand{\formrou}{\mathit{rou}}
\newcommand{\nodero}[2]{\noded{#1,#2}{\formrou}}
\newcommand{\nodeda}[2]{\noded{#1,#2}{\formdat}}
\newcommand{\noderd}[3]{\noded{#2,#3}{\mathit{#1}}}
\newcommand{\presprord}[3]{\presprod{#2,#3}{\mathit{#1}}}
\newcommand{\nodeint}[1]{\noded{#1}{\formint}}
\newcommand{\presproint}[1]{\presprod{#1}{\formint}}
\newcommand{\nodeext}[1]{\noded{#1}{\formext}}
\newcommand{\presproext}[1]{\presprod{#1}{\formext}}
\newcommand{\presproie}[2]{\presprod{#1}{#2}}
\newcommand{\plin}[1]{\ema{q_{#1}}}
\newcommand{\nbvirt}[1]{\ema{H_{#1}}}
\newcommand{\nodeintv}[2]{\noded{#1,#2}{\formint}}
\newcommand{\presprointv}[2]{\presprod{#1,#2}{\formint}}
\newcommand{\pres}[1]{\ema{P_{#1}}}
\newcommand{\prs}[2]{\ema{Y_{#1}^{#2}}}
\newcommand{\nbpg}{\ema{\mathcal{M}}}
\newcommand{\keyv}[1]{\ema{K_{#1}}}
\newcommand{\hmax}{\ema{\mathit{h_{\mathit{max}}}}}
\newcommand{\loadf}{\ema{\mathit{lf}}}
\newcommand{\popu}[1]{\ema{s_{#1}}}
\newcommand{\op}[2]{\ema{op^{#1}_{#2}}}
\newcommand{\thr}{\ema{\mathcal{T}}}
\newcommand{\nbth}{\ema{P}}
\newcommand{\nbthS}[1]{\ema{P_{#1}}}
\newcommand{\latrecS}[1]{\ema{\genlat{\frec}_{#1}}}
\newcommand{\latrecSl}{\latrecS{\mathit{low}}}
\newcommand{\latrecSh}{\latrecS{\mathit{high}}}
\newcommand{\removelatexerror}{\let\@latex@error\@gobble}
\newcommand{\abstalgo}{
\removelatexerror
\begin{procedure}[H]
\SetKwData{key}{key}
\SetKwData{op}{operation}
\SetKwData{kpdf}{keyPMF}
\SetKwData{opdf}{operationPMF}
\SetKwData{res}{result}
\SetKwData{pdo}{done}

\SetKwFunction{SDS}{SearchDataStructure}
\SetKwFunction{Skey}{SelectKey}
\SetKwFunction{Sop}{SelectOperation}

\SetAlgoLined
\While{! \pdo}{\nllabel{alg:li-bwl}
\key $\leftarrow$ \Skey{\kpdf}\;\nllabel{alg:li-ksel}
\op $\leftarrow$ \Sop{\opdf}\;\nllabel{alg:li-osel}
\res $\leftarrow$  \SDS{\key, \op}; \nllabel{alg:li-sds}
}
\caption{AbstractAlgorithm()\label{alg:gen-name}}
\end{procedure}
}
\newcommand{\trav}[1]{\ema{\mathit{Traverse}_{#1}}}
\newcommand{\casrv}[2]{\ema{\mathit{CAS}_{#1}^{\mathit{#2}}}}
\newcommand{\hitrv}[2]{\ema{\mathit{Hit}_{#1}^{\mathit{#2}}}}
\newcommand{\casexe}[1]{\casrv{#1}{exe}}
\newcommand{\cassta}[1]{\casrv{#1}{stall}}
\newcommand{\casrec}[1]{\casrv{#1}{reco}}
\newcommand{\hitcache}[2]{\hitrv{#1}{cache_{#2}}}
\newcommand{\hittlb}[2]{\hitrv{#1}{tlb_{#2}}}
\newcommand{\hitsim}[1]{\hitrv{#1}{}}
\newcommand{\obj}[1]{\ema{O_{#1}}}
\newcommand{\oprv}{\ema{\mathit{Op}}}
\newcommand{\formins}{\ema{\mathit{ins}}}
\newcommand{\formdel}{\ema{\mathit{del}}}
\newcommand{\formsrc}{\ema{\mathit{src}}}
\newcommand{\chatio}[2]{\ema{T_{#1}^{#2}}}
\newcommand{\chati}[1]{\ema{T_{#1}^{\fdat}}}
\newcommand{\chatit}[1]{\ema{T_{#1}^{\ftlb}}}
\newcommand{\casi}[1]{\ema{C_{#1}^{\fdat}}}
\newcommand{\casit}[1]{\ema{C_{#1}^{\ftlb}}}
\newcommand{\xcaf}[2]{\ema{X^{#1}(#2)}}
\newcommand{\xcas}[1]{\ema{X\!(#1)}}
\newcommand{\hito}[2]{\ema{\mathit{hit}_{#1}^{#2}}}
\newcommand\rr[1]{#1}
\newcommand\pp[1]{}
\newcommand\pr[2]{#2}
\title{\lftit \sdstits: Throughput Modelling with Poisson Processes}
\authorrunning{A. Atalar, P. Renaud-Goud and P. Tsigas} 
\subjclass{D.1.3 Concurrent Programming}
\keywords{Lock-free, Data Structures, Performance, Modeling, Analysis}
\author{Aras Atalar}{Chalmers University of Technology}{aaras@chalmers.se}{}{}
\author{Paul Renaud-Goud}{Informatics Research Institute of Toulouse}{Paul.Renaud.Goud@irit.fr}{}{}
\author{Philippas Tsigas}{Chalmers University of Technology}{philippas.tsigas@chalmers.se}{}{}
\author[1]{Aras Atalar}
\author[2]{Paul Renaud-Goud}
\author[1]{Philippas Tsigas}
\affil[1]{Chalmers University of Technology\\
  \texttt{aaras|philippas.tsigas@chalmers.se}}
\affil[2]{Informatics Research Institute of Toulouse\\
  \texttt{Paul.Renaud.Goud@irit.fr}}
\begin{document}

\SetFuncSty{textsf}


\maketitle


\begin{abstract}
\pp{{\footnotesize\bf (Eligible for best student paper award)\\}}
This paper considers the modelling and the analysis of the performance
of \lf concurrent \sdss.  Our analysis considers such
\lf \dss that are utilized through a sequence of operations
which are generated with a memoryless and stationary access pattern.
Our main contribution is a new way of analysing \lf \sdss: 
our execution model matches with the behavior 
that we observe in practice and achieves good throughput predictions.
\Sdss are formed of linked basic blocks, usually
referred as nodes, that can be accessed by two kinds of events,
characterized by their latencies; (i) \castxts events originated as a
result of modifications of the \sdss (ii) \Read events originated during traversals.
This type of \dss are usually designed to accommodate a large number of
data nodes, which makes the occurrence of an event on a given node rare 
at any given time.
The throughput is defined by the number of events per operation 
in conjunction with the factors that impact the latencies of these 
events. We frame these impacting factors under capacity and 
coherence cache misses.

In this context, we model the events as Poisson processes that we can merge
and split to estimate the latencies of the events based on the
interleaving of events from different threads, and in turn estimate
the throughput. We have validated our analysis on several 
fundamental \lf \sdss such as \dslls, \dshts, \dssls and \dsbts.

\end{abstract}
\rr{\newpage}

\rr{\tableofcontents\newpage}

\section{Introduction}

A \sds is a collection of $\langle\mathit{key}, \mathit{value}\rangle$
pairs which are stored in an organized way to allow efficient search,
delete and insert operations. \dsLls, \dshts, \dsbts are some widely
known examples. \Lf implementations of such concurrent \dss are known
to be strongly competitive at tackling scalability by
allowing processors to operate asynchronously on the \ds.

Performance (here throughput, \ie number of operations per unit of
time) is ruled by the number of events in a
\sds operation 
(\eg $O(\log \nbn)$ for the expected
number of steps in a \dssl or a \dsbt). The practical performance
estimation requires an additional layer as the cost (latency) of these
events need to be mapped onto the hardware platform; typical values of
latency varies from 4 cycles for an access to the
first level of cache, to 350 cycles for the last level
of remote cache. 
To estimate the latency of events, one needs 
to consider the misses, which are sensitive to
the interleaving of these events on the
time line. 
On the one hand, a capacity miss in data or TLB (Translation Lookaside Buffer) 
caches with LRU (Least Recently Used) policy
arise when the interleaving of memory accesses evicted a cacheline.
On the other hand, the coherence cache misses arise as a result of the modifications, 
that are often realized with \castxtl (\castxts) instructions, in the \lf \sds.
The interleaving of events that originate
from different threads, determine the 
frequency and severity of these misses, hence
the latencies of the events.

In the literature, there exist many asymptotic analyses on the time
complexity of sequential \sdss and amortized analyses for the
concurrent \lf variants that involve the interaction between multiple
threads. But they only consider the number of events, ignoring the
latency.  On the other side, there are performance analyses that aim
to estimate the coherence and capacity misses for the programs on a
given platform, with no view on \dss.
We will mention them in
the related work. However, there is a lack of results that merge these
approaches in the context of \lf \dss to analytically predict the
practical performance.

An analytical performance prediction framework could be useful in many
ways: (i) to facilitate design decisions by providing an extensive
understanding; (ii) to rank different designs in various contexts; (iii)
to help the tuning process. On this last point, \lf \dss come with
specific parameters, \eg padding, back-off and memory management
related parameters, and become competitive only after picking their
hopefully optimal values.

In this paper, we aim to compute the {\it average} throughput of \sdss 
for a sequence of operations, generated by a memoryless and stationary
access pattern. The threads execute the same piece of code on
the same platform, throughput \thr can be estimated on the long-term
as the expected latency of an operation (subjected to the distribution
of the operations) divided by the number of threads \nbth. As the
traversal of a \sds is light in computation, the latency of an
operation is dominated by the memory access costs to the nodes that
belong to the path from the entry of the \ds to the targeted node.

Therefore, part of this paper is dedicated to the discovery of the
route(s) followed by a thread on its way to reach any node in the \ds.
In other words, what is the sequence of nodes that are accessed when a
given node is targeted by an operation.

As the latency of an operation is the sum of the latency of each
memory access to the nodes that are on the path, we obviously need to
estimate the individual latency of each traversed node. Even if, in
the end, we are interested in the average throughput, this part of the
analysis cannot be satisfied with a high-level approach, where we
would ignore which thread accesses which node across time. For
instance, the cache, whose misses are expected to greatly impact
throughput, should be taken carefully into account. This can only be
done in a framework from which the interleaving of memory accesses among
threads can be extracted\cut{ (more details follow, in
Section~\ref{sec:factors})}. That is why we model the distribution of
the memory accesses for every thread.

More precisely, a memory access ({\it traversal}) can be either the
read or the modification of a node, and two point distributions per
node represent the triggering instant of either a \Read or a
\castxts. These point distributions are modelled as Poisson point
processes, since they can be approximated by Bernoulli processes, in
the context of rare events. Knowing the probabilistic ordering of
these events gives a decisive information that is used in the estimate
of the traversal latency associated with the triggered event. Once
this information is grabbed, we roll back to the expectation of the
traversal of a node, then to the expectation of the latency of an
operation.

We validate our approach through a large set of experiments on several
\lf \sdss based on various algorithmic designs, namely \dslls, \dshts,
\dssls and \dsbts. We feed our experiments with different key
distributions, and show that our framework is able to predict and
explain the observed phenomena.

The rest of the paper is organized as follows. We discuss related work
in Section~\ref{sec:related}, then the problem is formulated in
Section~\ref{sec:problem}. We present the framework in
Section~\ref{sec:framework} and the computation of throughput in
Section~\ref{sec:thput}. In Section~\ref{sec:dss}, we show how to
initiate our model by considering the particularity of different
\sdss. Finally, we describe the experimental results in
Sections~\ref{sec:expMain} and~\ref{sec:padMain}.\vssm

\section{Related Work}
\label{sec:related}

The search path length of skiplists is analysed
in~\cite{skiplist1,skiplist2}.  In~\cite{skiplist1}, the search path
length is split into vertical and horizontal components, where the
horizontal cost is modelled with the number of right-to-left maximas
(which corresponds to the traversed node) in a sequence of nodes with
random heights.  In~\cite{tree2,tree,tree3},
various performance shapers for the randomized trees are studied, such
as the time complexity of operations, the expectation and
distribution of the depth of the nodes based on their keys.


Previously mentioned studies are not concerned with the interaction
between the algorithms and the hardware. The following approaches rely on
the independent reference model (IRM) for memory references and derive
theoretical results or performance analysis.
In~\cite{cachemiss}, data reuse distance patterns are modelled and
then exploited to predict the cache miss ratio.
In~\cite{flajolet}, the
exact cache miss ratio is derived analytically (computationally
expensive) for LRU caches under IRM.
As an outcome of this approach, the cache miss ratio of a static
binary tree is estimated by assigning independent reference probababilities to the
nodes in~\cite{fix}.

For the time complexity of lock-free \sdss, asymptotic amortized analyses
\cite{fomitchev,bapi} are conducted since it is not possible to bound the
execution time of a single operation, by definition. Apart from these theoretical studies,
the performance of concurrent lock-free \sdss are studied and investigated through
empirical studies in~\cite{gramoli,trigonakis}.
In~\cite{davidfail}, it is shown experimentally that the conflicts
between threads occur very rarely in the context of concurrent \sdss,
which is confirmed by our analysis.\vssm

\section{Problem Statement}
\label{sec:problem}

We describe in this section the structure of the algorithm and the
system that is covered by our model. We target a multicore platform
where the communication between threads takes place through
asynchronous shared memory accesses. The threads are pinned to
separate cores and call \ref{alg:gen-name} (see
Figure~\ref{algo:abst}) when they are spawned.


\begin{wrapfigure}[9]{L}{0.44\textwidth}
\footnotesize
\abstalgo
\caption{Generic framework\label{algo:abst}}
\end{wrapfigure}


A concurrent \sds is a shared collection of data elements, each
associated with a key, that support three basic operations holding a
key as a parameter. \src (resp. \ins, \del) operation returns
(resp. inserts, deletes) the element if the
associated key is present (resp. absent, present) in the \sds,
otherwise returns \nulltxt.

The applications that use a \sds can be seen as a
sequence of operations on the \shds, interleaved by
application-specific code containing at least the key and operation
selection, as reflected in~\ref{alg:gen-name}.

The access pattern (\ie the output of the key and operation
selections) should be considered with care since it plays a decisive
role in the throughput value. An application that always
looks for the first element of a \dsll will obviously lead to very
high throughput rates.
%
%
In this study, we consider a memoryless and stationary key and
operation selection process \ie such that the probability of selecting
a key (resp. an operation type) is a constant.

A \sds is modelled as a set of basic blocks
called nodes, which either contain a value ({\it valued nodes}) or
routes towards nodes ({\it router nodes}).
W.l.o.g. the key set can be reduced to
\inti{\range}, where \range is the number of possible keys. We denote
by $\left(\node{i}\right)_{i \in \inti{\nbn}}$ the set of \nbn potential nodes,
and by \keyv{i} the key associated with \node{i}.
Until further notice (see Section~\ref{sec:padMain}), we assume that
we have exactly one node per cacheline.

An operation can trigger two types of events in a
node. We distinguish these events as \Read and \castxts events. The
latency of an event is based on the state of the hardware platform at
the time that the event occurs, \eg the level of the cache where a node
belongs to for a \Read request. We summarize the parameters of our model
as follows:

\begin{itemize}
\item \textit{Algorithm parameters:} Expected latency of the
  application call \latapp, expected computational cost to traverse a
  node \latcmp, probability mass functions for the key and operation
  selection.
\item \textit{Platform parameters:} Cache hit latencies
  (resp. capacity) from level $\ell$: \latdat{\ell}
  (resp. \casi{\ell}) for the data caches and \lattlb{\ell}
  (resp. \casit{\ell}) for TLB caches;
  other memory instruction latencies (that depends on \nbth): \latcas for a \castxts execution and
  \latrec to recover from an invalid state; number of threads \nbth.
\end{itemize}

\cut{The aim of this study is to estimate the throughput \thr
of~\ref{alg:gen-name}, \ie the number of completed operations per unit
of time, from this set of parameters.}

\section{Framework}
\label{sec:framework}

\vsme
\subsection{Event Distributions}
\label{sec:distrib}

We consider first a single thread running~\ref{alg:gen-name} on a \ds
where only search operations happen, and we observe the distribution
of the \Read triggering events on a given node \node{i}. The execution
is composed of a sequence of search operations, where each operation
is associated with a set of traversed nodes, which potentially
includes \node{i}. If we slice the time into consecutive intervals,
where an interval begins with a call to an operation, we can model the
\Read events as a Bernoulli process (where a success means that a
\Read event on \node{i} occurs), where the probability of having a
\Read event during an interval depends on the associated operation
(recall that the operation generating process is stationary and
memoryless).

\Sdss have been designed as a way to store large data sets while still
being able to reach any node within a short time: the set of traversed
nodes is then expected to be small in front of the set of all
nodes. This implies that, given an operation, the probability that
\node{i} belongs to the set of traversed nodes is small.
Therefore we can map the Bernoulli process on the
timeline with constant-sized interval of length $\thr^{-1}$ instead of
mapping it with the actual operation intervals: as the probability of
having a \Read event within an operation is small, the duration
between two events is big, and this duration is close to the number of
initial intervals within this duration, multiplied by $\thr^{-1}$
(with high probability, because of the \clt).

When we increase the scope of the operations to insertion and
deletion, the \shds is no longer static and the probability for a node
to appear in an interval is no longer uniform, since it can move
inside the \ds. There exists a long line of research in approximating
Bernoulli processes by Poisson point processes
\cite{poisson1,poisson2,poisson3}. In particular,
\cite{poisson-nonuni} has dealt with non-uniform Bernoulli
processes.
Their error bounds, which are proportional to the success probability,
strengthen the use of Poisson processes in our context: the events on
\node{i} are rare, thus the probabilities in Bernoulli processes are
small and the approximation is well-conditioned.




%
Once the \Read and \castxts triggering events are modelled as Poisson
processes for a single thread, the merge of several Poisson processes
models the multi-thread execution.

Lastly, we specify a point on the dynamicity: since we have insertions
and deletions, nodes can enter and leave the \ds. This is modelled by
the masking random variable \pres{i} which expresses the presence of
\node{i} in the \shds. At a random time, we denote by \DS the set of
nodes that are inside the \ds, and \pres{i} is set to 1 iff $\node{i}
\in \DS$. We denote by \prespro{i} its probability of success
($\prespro{i}=\proba{\pres{i}=1}$).
Its evaluation will often rely on the probability that the last update
operation on key $k$ was an \ins; we denote it by \plin{k},
and
\pr{$\plin{k} = \proba{\oprv=\op{\formins}{k}}/(\proba{\oprv=\op{\formins}{k}}+ \proba{\oprv=\op{\formdel}{k}})$.}
{\[ \plin{k} = \frac{\proba{\oprv=\op{\formins}{k}}}{\proba{\oprv=\op{\formins}{k}} + \proba{\oprv=\op{\formdel}{k}}}.\]}
Note that the \sdss contain generally several {\it sentinel nodes}
which define the boundaries of the \shds and are never removed from
the \shds: their presence probability is $1$.

%
%

For a given node \node{i}, we denote by \rateTra{i}
(resp. \rateRead{i}, \rateCas{i}) the rate of the events triggering a
traversal (resp. \Read, \castxts) of \node{i} due to one thread, when
$\node{i} \in \DS$. \op{\formdel}{k} (resp. \op{\formins}{k},
\op{\formsrc}{k}) stands for a \del (resp. \ins, \src) on node
key $k$. The probability for the application to select \op{o}{k},
where $o \in \{\formins, \formdel, \formsrc\}$ is denoted by
\proba{\oprv=\op{o}{k}}. $\op{o}{k} \leadsto \formcas(\node{i})$
(resp. \formread(\node{i})) means that during the execution of \op{o}{k}, a \castxts
(resp. a \Read) occurs on \node{i}. Putting all together, we derive
the rate of the triggering events:

\vspace{-.3cm}\begin{equation}
  \label{eq:lambdas}
  \forall e \in \{ \formcas, \formread \} \; : \;
  \genrate{i}{e}=\frac{\thr}{\nbth} \times \sum_{o \in \{\formins, \formdel, \formsrc\}} \sum_{k=1}^{\range} \proba{\oprv=\op{o}{k}} \times \proba{ \op{o}{k} \leadsto e(\node{i}) \, |\, \node{i}\in\DS}\\
\end{equation}

Recall for later that Poisson processes have useful properties, \eg
merging two Poisson processes produces another Poisson process whose
rate is the sum of the two initial rates. This implies especially that
the traversal triggering events follows a Poisson process with rate
$\rateTra{i} = \rateRead{i} + \rateCas{i}$, and that the read
triggering events that originates from $\nbth'$ different threads and
occurs at \node{i} follow a Poisson process with rate $\nbth' \times
\rateRead{i}$.









\vsme
\subsection{Validity of Poisson Process Hypothesis}
\label{sec:valid}

\pr{

\begin{wrapfigure}[10]{l}{.42\textwidth}
\vspace{-.35cm}
\includegraphics[width=.42\textwidth]{./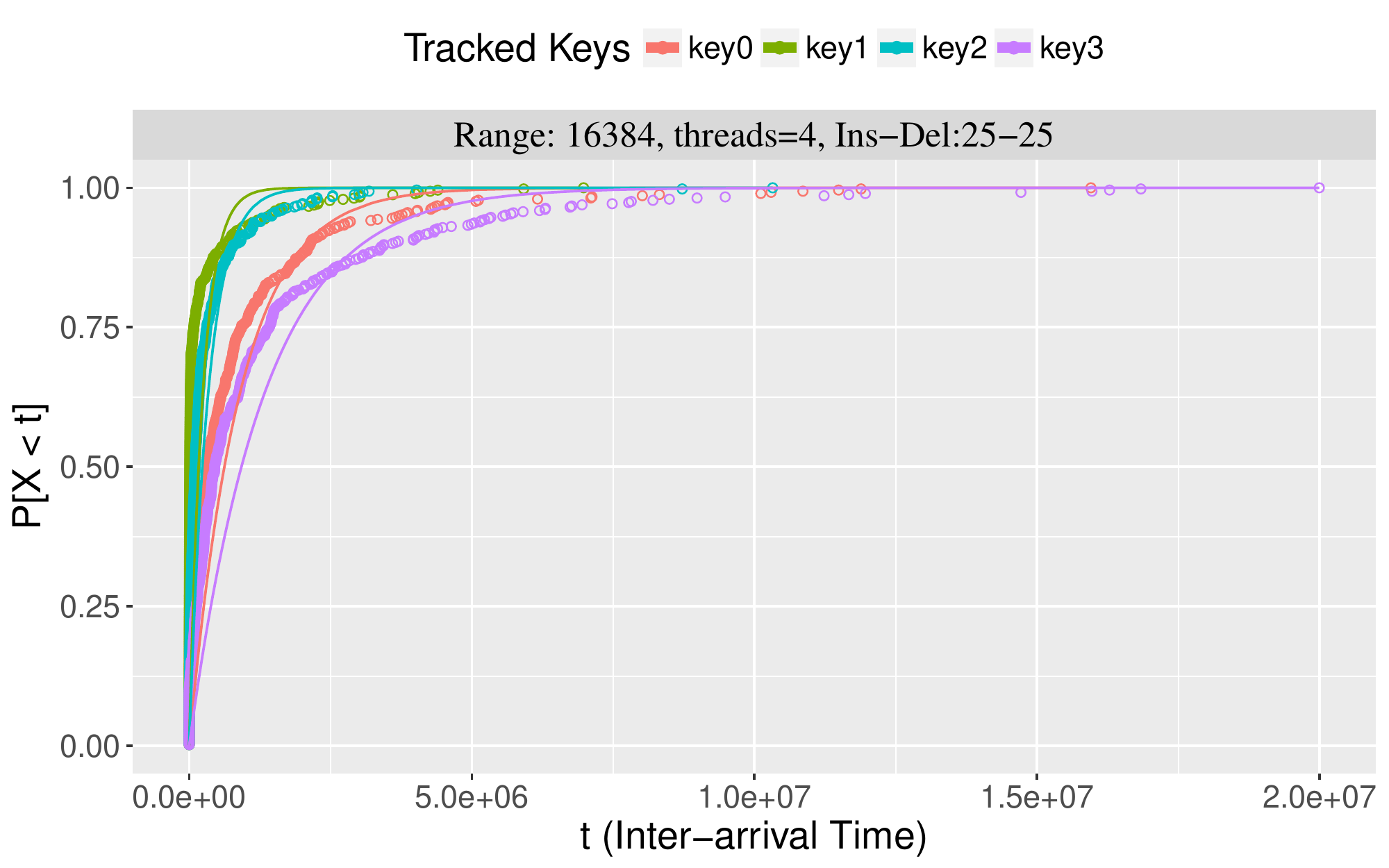}
\vspace{-.55cm}
\caption{Read Events for Skiplist}
\label{fig:poisModelSik}
\end{wrapfigure}

}
{

\begin{figure}[h!]
\begin{center}
\begin{minipage}{.47\textwidth}\begin{center}
\includegraphics[width=\textwidth]{./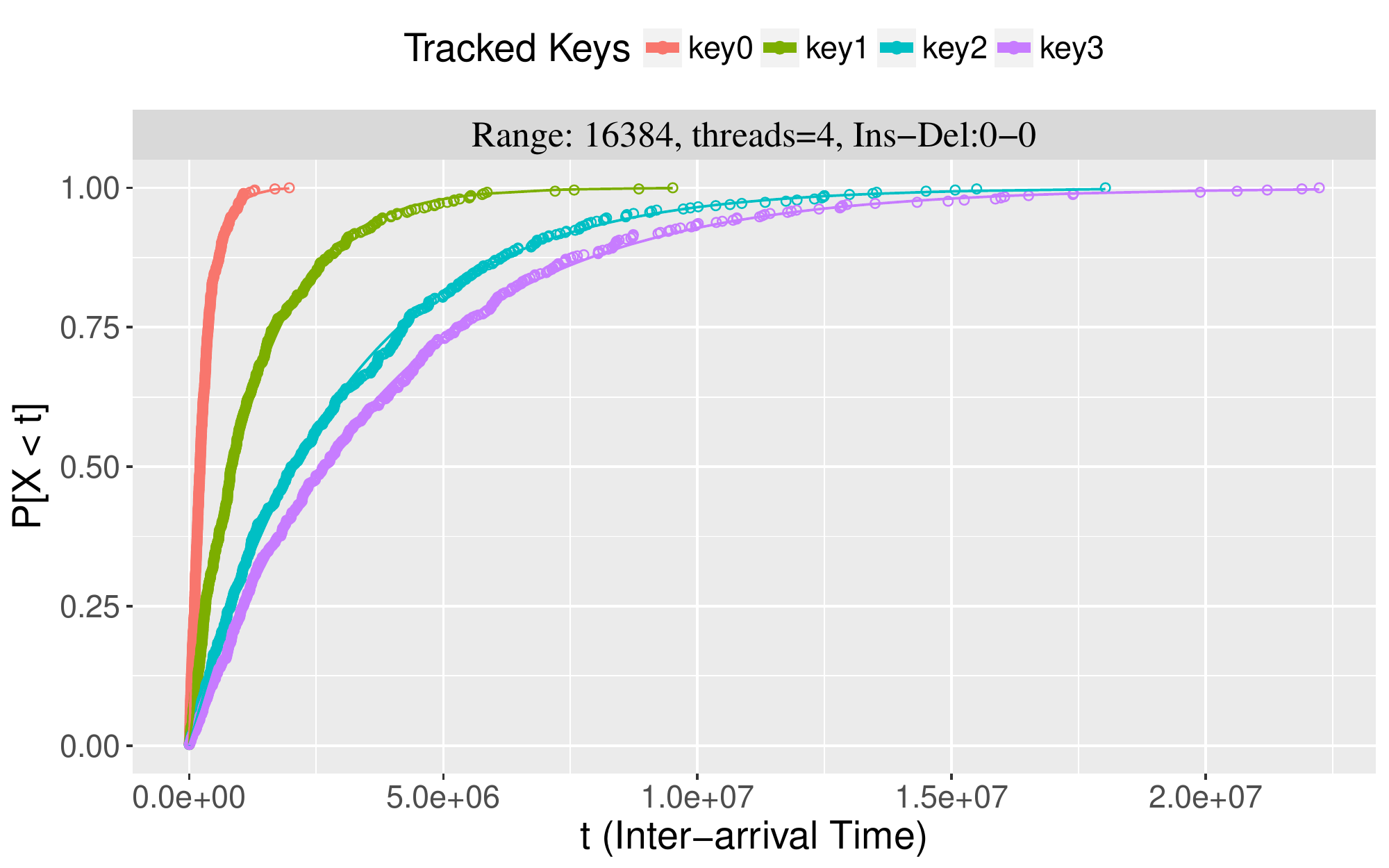}

(a) \Read Events for Skiplist
\end{center}\end{minipage}\hfill%
\begin{minipage}{.47\textwidth}\begin{center}
\includegraphics[width=\textwidth]{./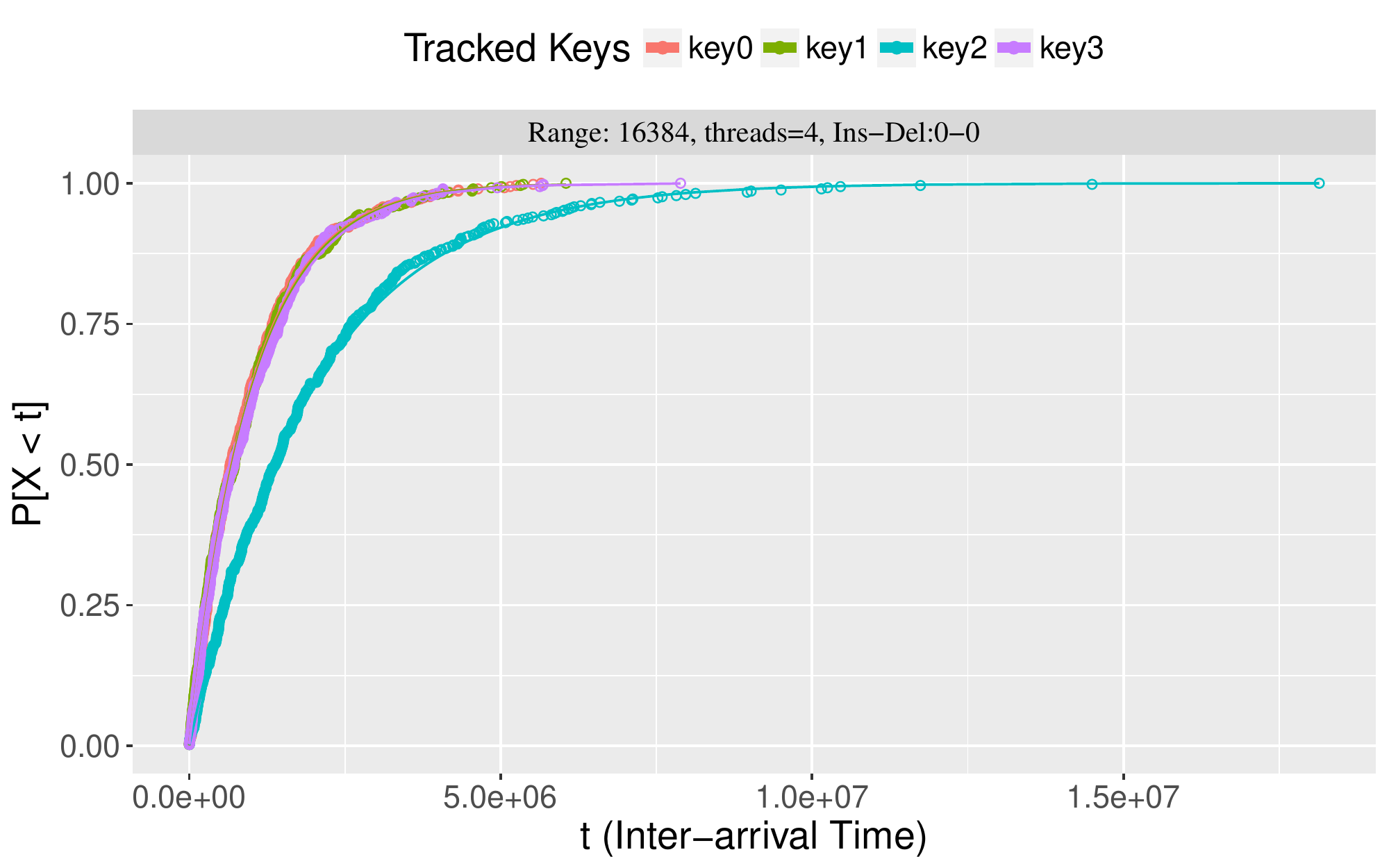}

(b) \Read Events for Hash Table
\end{center}\end{minipage}

\begin{minipage}{.47\textwidth}\begin{center}
\includegraphics[width=\textwidth]{./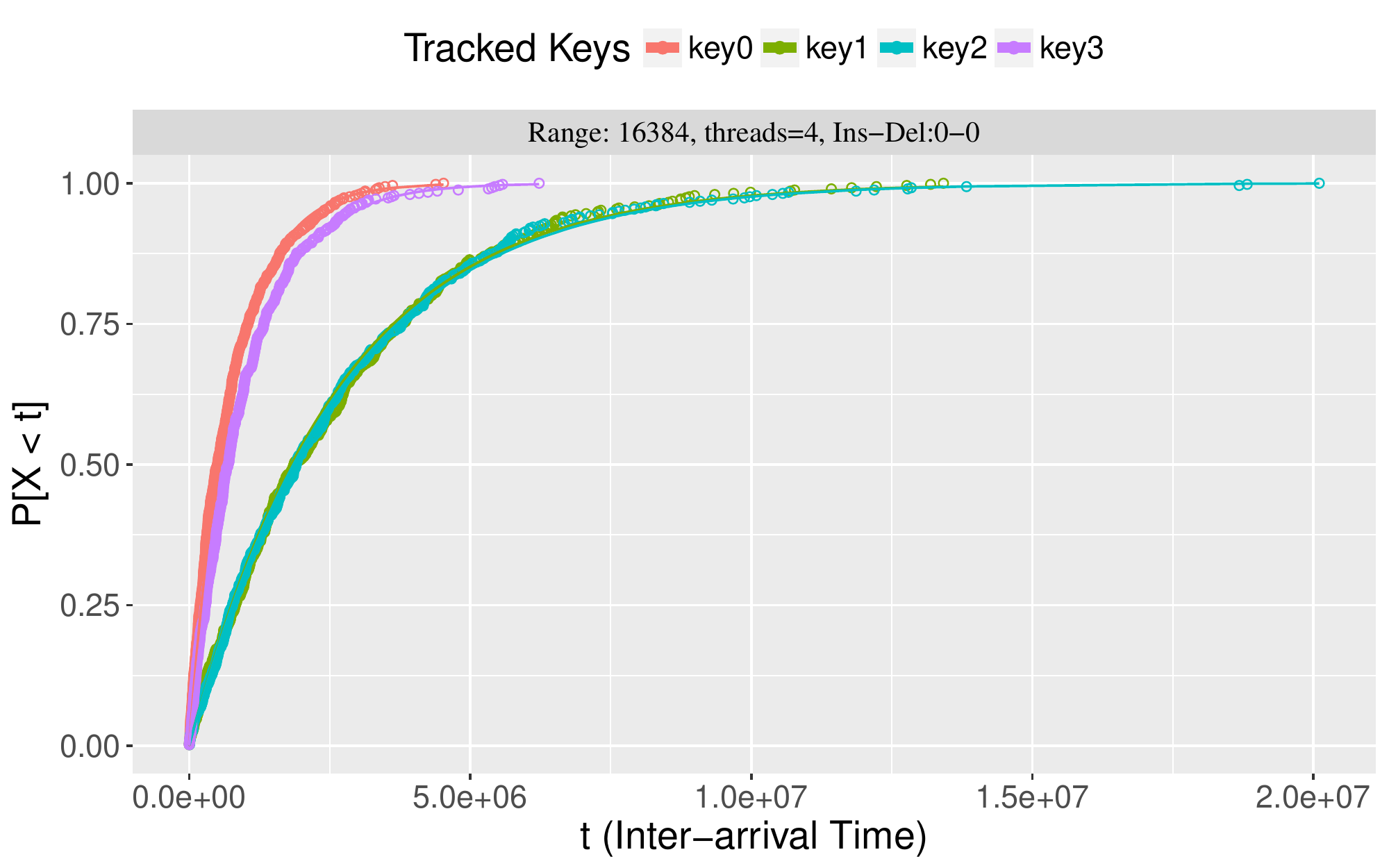}

(c) \Read Events for Binary Tree
\end{center}\end{minipage}\hfill%
\begin{minipage}{.47\textwidth}\begin{center}
\includegraphics[width=\textwidth]{./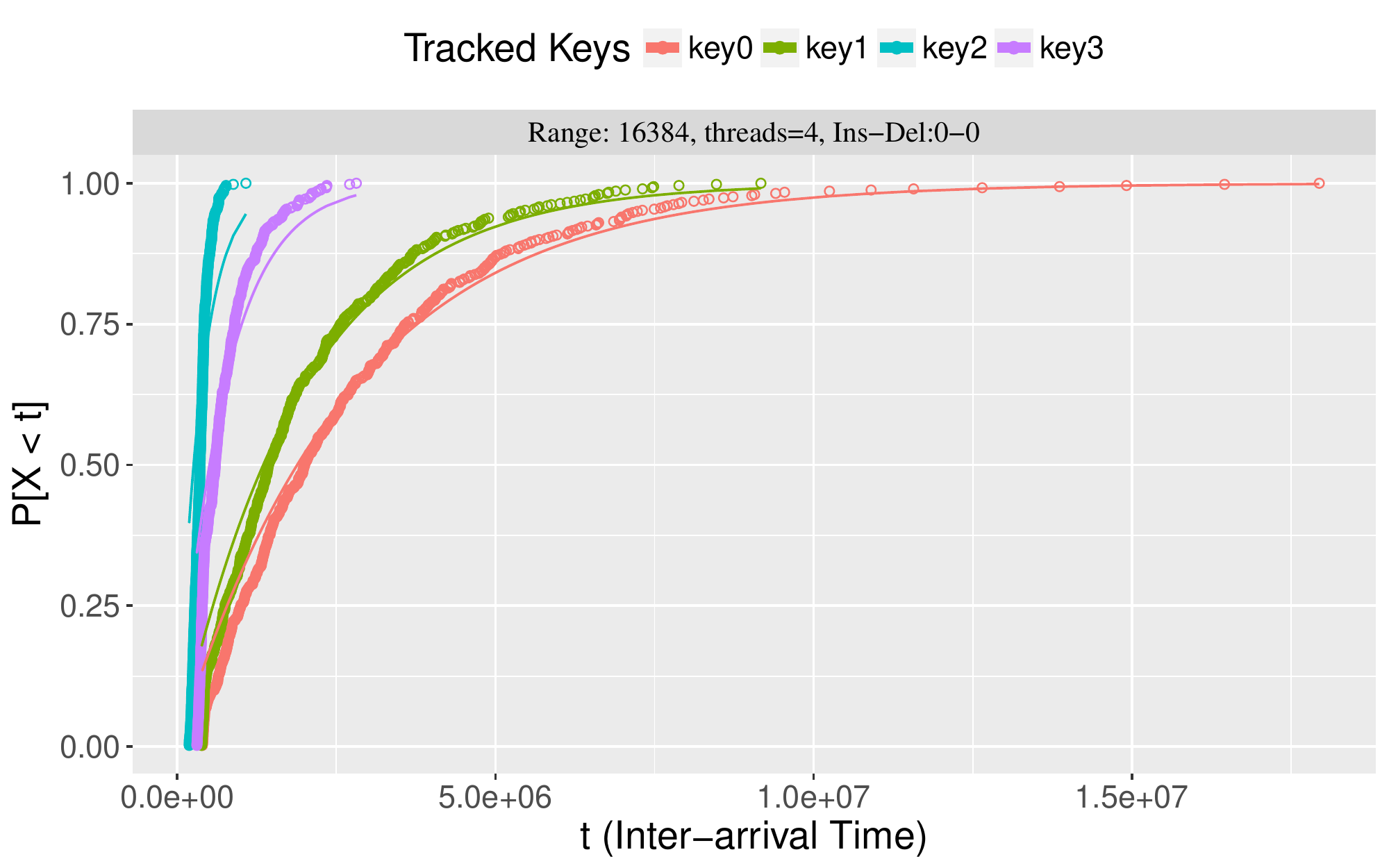}

(d) \Read Events for Linked List
\end{center}\end{minipage}

\end{center}
\caption{Poisson Process Modeling - Search Only\label{fig:poisModelRR}}
\end{figure}

\begin{figure}[h!]
\begin{center}
\begin{minipage}{.47\textwidth}\begin{center}
\includegraphics[width=\textwidth]{./results/pdfs/sklsik.pdf}

(a) \Read Events for Skiplist
\end{center}\end{minipage}\hfill%
\begin{minipage}{.47\textwidth}\begin{center}
\includegraphics[width=\textwidth]{./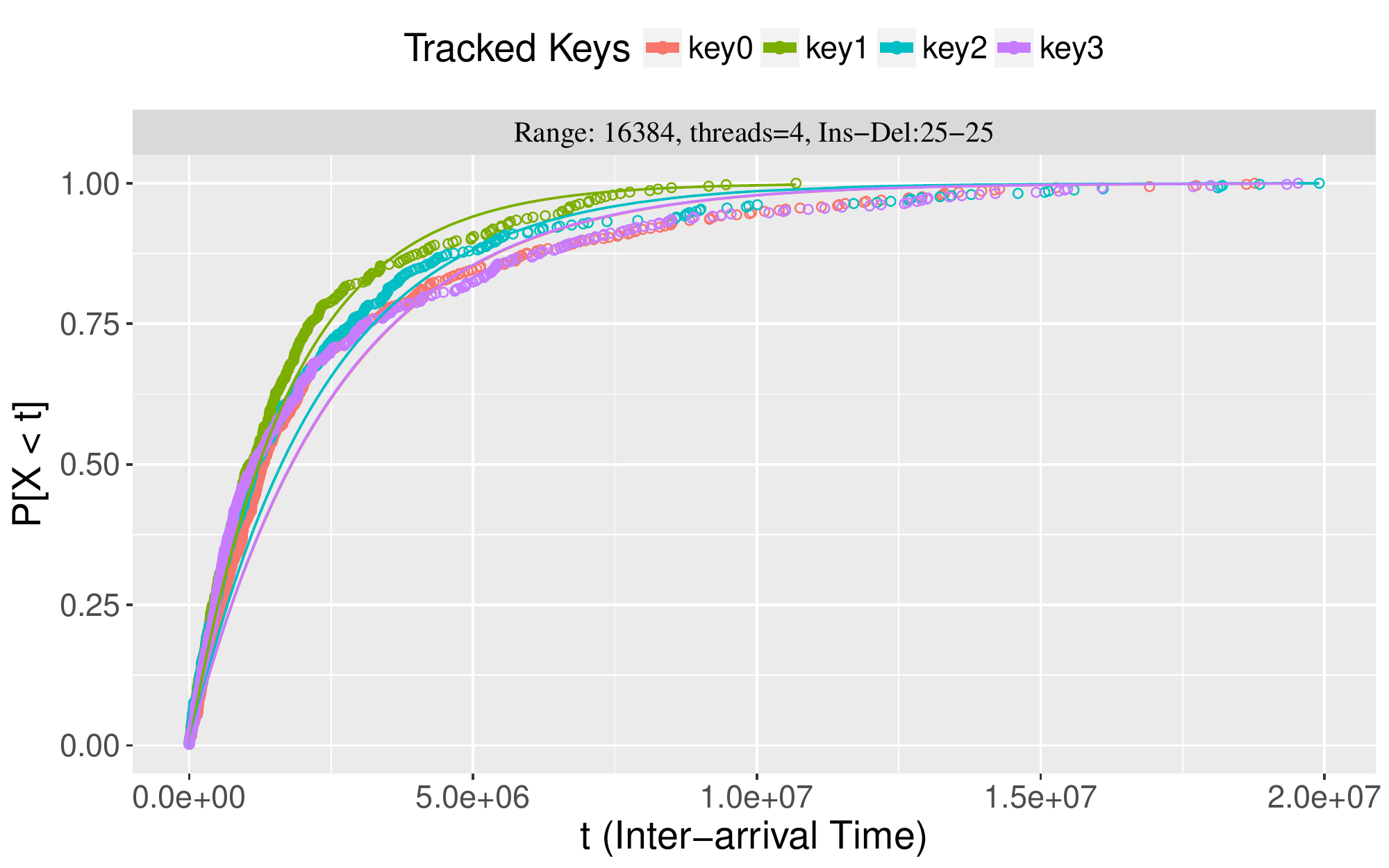}

(b) \Read Events for Hash Table
\end{center}\end{minipage}

\begin{minipage}{.47\textwidth}\begin{center}
\includegraphics[width=\textwidth]{./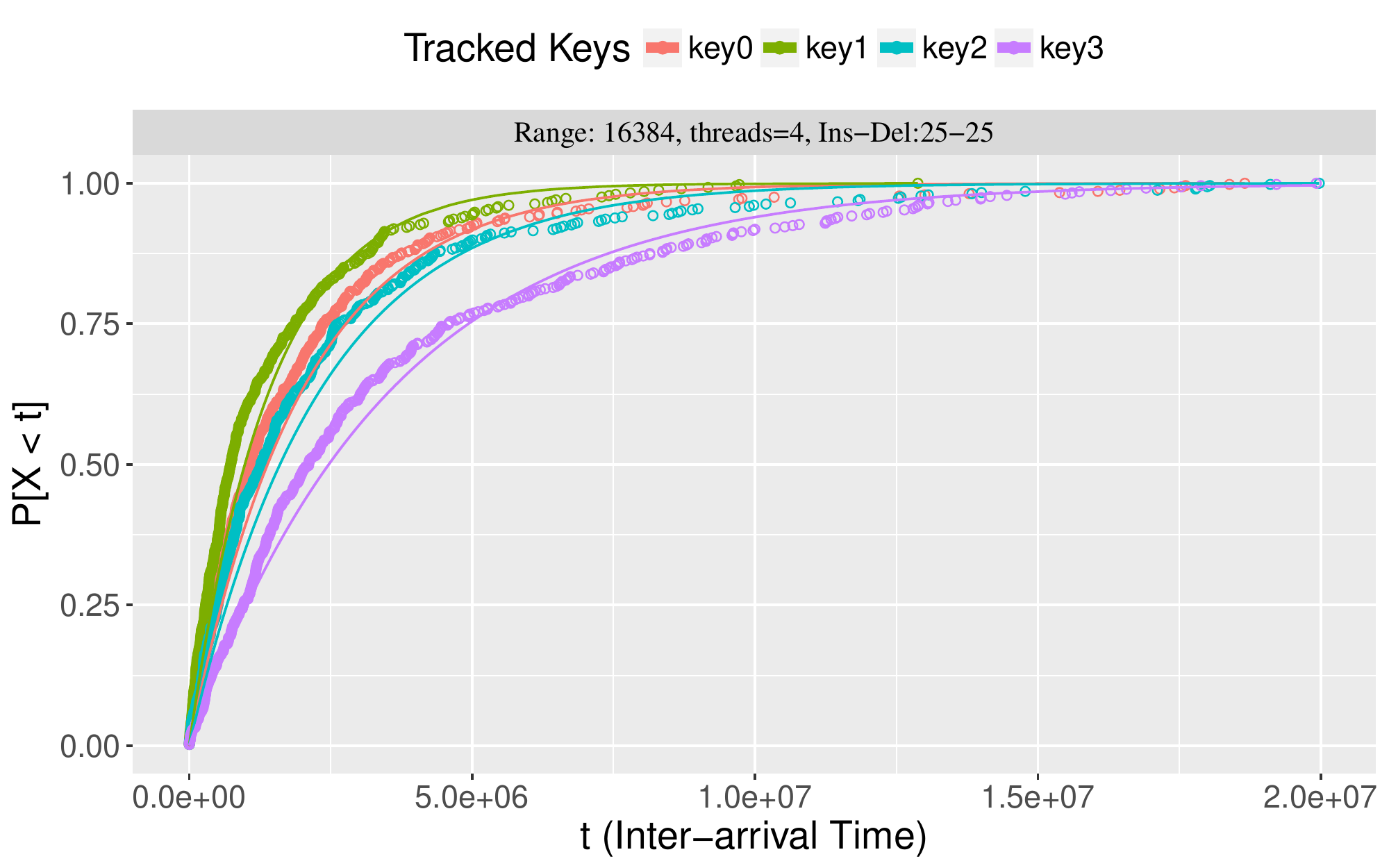}

(c) \Read Events for Binary Tree
\end{center}\end{minipage}\hfill%
\begin{minipage}{.47\textwidth}\begin{center}
\includegraphics[width=\textwidth]{./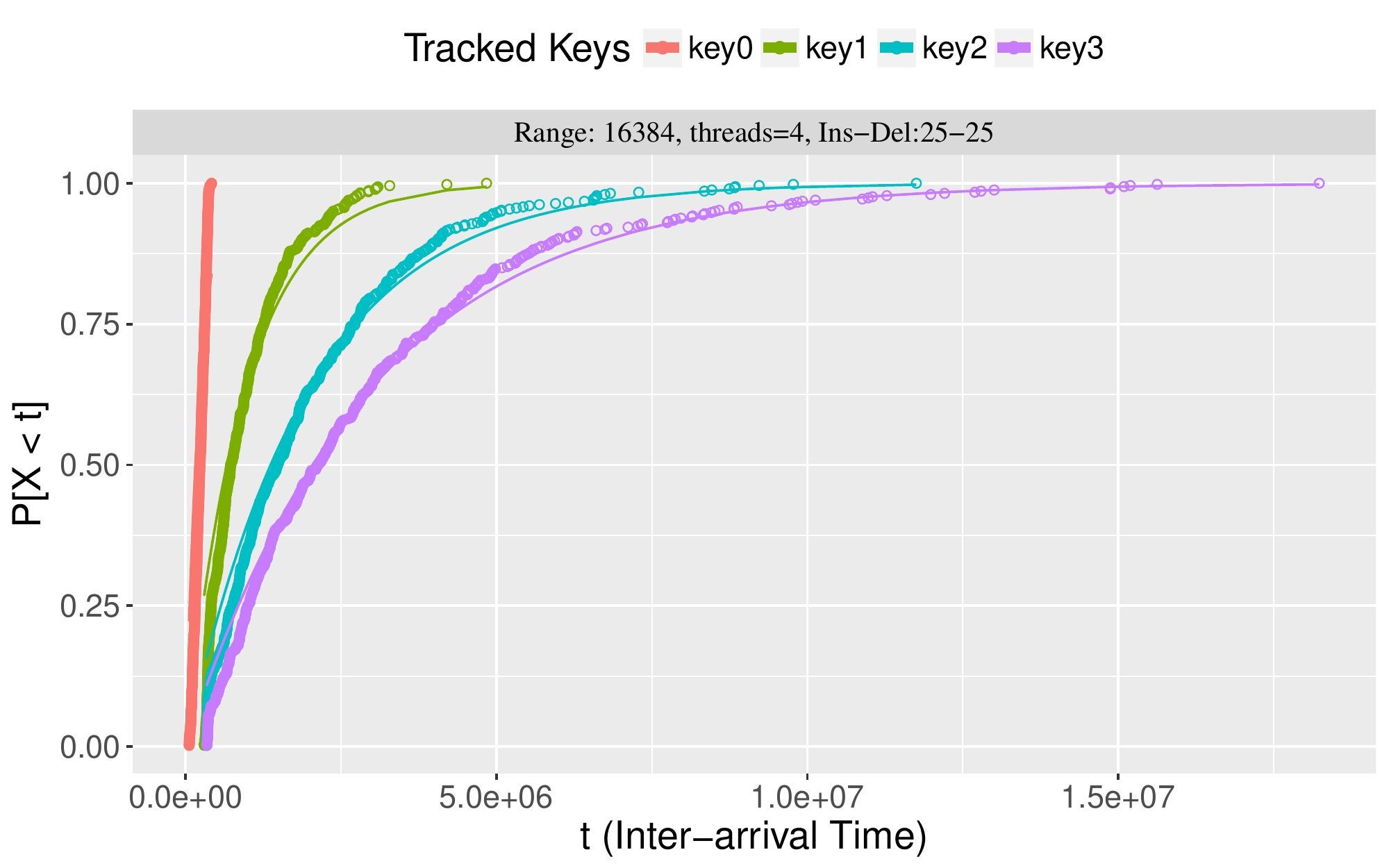}

(d) \Read Events for Linked List
\end{center}\end{minipage}

\end{center}
\caption{Poisson Process Modeling - 50/50 Search/Update\label{fig:poisModelRRSik}}
\end{figure}

}

To illustrate the validity of modeling the events as Poisson
processes, we experimentally extract the cumulative distribution
function of the inter-arrival latency of \Read events that occur on a
given node in a \dssl and we compare it against the corresponding exponential
distribution (recall that the time between events in a Poisson process
is exponentially distributed).

\pr{We consider here a scenario with $50/50$ search/update.}
{We consider a search only scenario and $50/50$ search/update scenario.}
Each thread initially picks a random key and tracks the instants when
a node associated with the chosen key is traversed during the
execution. To facilitate the recording of the inter-arrival times, we
disable the deletion of these particular keys (deletion is still
enabled for any other key).

\pr{In Figure~\ref{fig:poisModelSik}}{In
Figure~\ref{fig:poisModelRR}   and Figure~\ref{fig:poisModelRRSik}},
we illustrate the results, where the dots represent the experimental
measurements and the lines are generated by exponential
distributions. The mean of each distribution is instantiated as the
mean of the experimental measurements.
One can observe the grounds {\it a posteriori} of our Poisson process
modeling, and the variation of the event rates across keys, issuing
from the differences between the node characteristics (key, height,
location; see Section~\ref{sec:dss}).
\pp{A more comprehensive set of comparisons (other \dss and a search
  only scenario) can be found in~\cite{long-version}.}

\subsection{\vssm{}Impacting Factors}
\label{sec:factors}

We have identified five factors that dominate the traversal
latency of a node, distributed into two sets.
On the one hand, the first set of factors only emerges in the parallel
executions as a result of the coherence issues on the \sdss.
Atomic
primitives, such as a \castxts, are used to modify the shared \sdss
asynchronously.
%
%
To execute a \castxts in multi-core architectures, the cache coherency
protocol
enforces exclusive ownership of the target cacheline by
a thread (pinned to a core) through the invalidation of all the other
copies of the cacheline in the system, if needed. One can guess the
performance implications of this process that triggers back and forth
communication among the cores. As the first factor, \castxts
instruction has a significant latency. The thread that executes the
\castxts
pays this latency cost. Secondly, any other thread has to stall until
the end of the \castxts execution if it attempts to access (read or
modify) the node while the \castxts is getting executed.
Last and most importantly, any thread pays a cost to bring a cacheline
to a valid state if it attempts to access a node that resides in this
cacheline and that has been modified by another thread after its
previous access to this node.

On the other hand, the capacity misses in the data and TLB caches are
other performance impacting factors for the node traversals. Consider
a cache of size $C$ (fully associative), assume a node is traversed by
a thread at time $t$ and the next traversal (same thread and node)
occurs at time $t'$. The thread would experience a capacity miss for
the traversal at time $t'$ if it has traversed at least $C$ distinct
nodes in the interval $(t, t')$.  The same applies for TLB caches
where the references to the distinct pages are counted instead of the
nodes.

At a given instant, we denote by \trav{i} the latency of traversing
node \node{i}, either due to a \Read event or a \castxts event, for a
given thread. This latency is the sum of random variables that 
correspond to the previous respective five impacting factors:

\vspace{-.2cm}
\begin{equation} \label{eq:traverse}
  \trav{i} = \casexe{i} + \cassta{i} + \casrec{i} + \sum_{\ell} \hitcache{i}{\ell} + \sum_{\ell}\hittlb{i}{\ell},\vsme%
\end{equation}
where, at a random time, \casexe{i} is
the latency of a \castxts,
\cassta{i} 
the stall time implied by other threads executing a
\castxts on \node{i}, \casrec{i} 
the time needed to fetch the data
from another modifying thread, \hitcache{i}{\ell} 
the latency
resulting from a hit on the data cache in level $\ell$, and
\hittlb{i}{\ell} 
the latency coming from a hit on the TLB cache in
level $\ell$.\vssm

\vsme\subsection{Solving Process}
\label{sec:solving}

\vssm%
The solving decomposes into three main steps. Firstly, we can notice
that Equation~\ref{eq:lambdas} exposes $2\range + 1$ unknowns (the
$2\range$ access rates and throughput) against $2\range$ equations.
To end up with a unique solution, a last equation is necessary. The
first two steps provide a last sufficient equation thanks to Little's
law (see Section~\ref{sec:little}), which links throughput with the
expectation of the traversal latency of a node, computed from
Sections~\ref{sec:casexe} to~\ref{sec:pagemiss}. We show in these
sections that they can be expressed according to the access rates
\rateRead{i} and \rateCas{i}.
The last step focuses on the values of the probabilities in
Equation~\ref{eq:lambdas}, which are strongly related with the
particular \ds under consideration; they are instantiated in
Section~\ref{sec:ll} (resp.~\ref{sec:ht}, \ref{sec:sl}, \ref{sec:bt})
for \dslls (resp. \dshts, \dssls, \dsbts).\pr{\vsme}{}

\section{Throughput Estimation}
\label{sec:thput}

\subsection{Traversal Latency}

Applying expectation to Equation~\ref{eq:traverse} leads to
$\expec{\trav{i}} = \expec{\casexe{i}} + \expec{\cassta{i}} +
\expec{\casrec{i}} + \expec{\sum_{\ell} \hitcache{i}{\ell}} +
\expec{\sum_{\ell} \hittlb{i}{\ell}}$. We express here each term
according to the rates at every node \rateCas{\star} and
\rateRead{\star}.  


\compsubsubsection{CAS Execution}
\label{sec:casexe}

Naturally, among all traversal events, only the events originating
from a \castxts event contribute, with the latency \latcas of a
\castxts: $\expec{\casexe{i}} = \latcas \cdot
\rateCas{i}/(\rateRead{i}+\rateCas{i})$.




\compsubsubsection{Stall Time}
\label{sec:stall}

A thread experiences stall time while traversing \node{i} when a
thread, among the $(\nbth-1)$ remaining threads, is currently executing
a \castxts on the same node.\cut{ As mentioned before, \castxts indeed
  compels the exclusive ownership of the cacheline, thus, any request
  to access this cacheline during the execution of a \castxts has to
  stall until the end of the \castxts execution.} As a first
approximation, supported by the rareness of the events, we assume that
at most one thread will wait for the access to the node.


Firstly, we obtain the rate of \castxts events 
generated by $(\nbth-1)$ threads through the merge of their poisson
processes.
%
Consider a traversal of \node{i} at a random time; (i) the probability
of being stalled is the ratio of time when \node{i} is
occupied by a \castxts of $(\nbth-1)$ threads, given by:
$\rateCas{i}(\nbth-1)\latcas$; (ii) the stall time that the thread
would experience is distributed uniformly in the interval
$[0,\latcas]$. Then, we obtain: $\expec{\cassta{i}} =
\rateCas{i}(\nbth-1)\latcas(\latcas/2)$.


\compsubsubsection{Invalidation Recovery}
\label{sec:recovery}

Given a thread, a coherence cache miss occurs if \node{i} is modified
by any
other
thread in between two consecutive traversals \cut{(could be \Read or
  \castxts events) }of \node{i}. The events that are concerned are:
(i) the \castxts events from any thread; (ii) the \Read events from the
given thread. When \node{i} is traversed, we look back at these
events, and if among them, the last event was a \castxts from another
thread, a coherence miss occur:
$\proba{\text{Coherence Miss on
    \node{i}}} = \frac{\rateCas{i}(\nbth-1)}{\rateCas{i}\nbth+
  \rateRead{i}}$.
We derive the expected latency of this factor during a traversal at
\node{k} by multiplying this with the latency penalty of a coherence
cache miss: $\expec{\casrec{i}} = \proba{\text{coherence miss on
    \node{i}}} \times \latrec$.


\compsubsubsection{Che's Approximation}
\label{sec:che}

\pr{

Che's Approximation~\cite{che} is a technique to estimate the hit
ratio of an LRU cache of size $C$, where the object (here, node)
accesses follow IRM (Independent Reference Model).  IRM is based on
the assumption that the object references occur in an infinite
sequence from a fixed catalog of $\nbn$ objects. The popularity of
object $i$ (denoted by $\popu{i}$, where $i \in \inti{\nbn}$)
is a constant that does not depend on the reference history and does
not vary over time.

Starting from $t=0$, let the time of reference to object $i$ be
denoted by \obj{i}, then the time for $C$ unique references is given by:
$t_c = \inf\{t>0 : X(t) = C\}$, where $X(t)=\sum_{j=1}^{\nbn}
\mathbf{1}_{0<\obj{j}\leq t}$.  Che's approximation estimates the hit
ratio of object $i$ with $\hitsim{i} \approx 1 - e^{-\popu{i} T}$,
where the so-called characteristic time (denoted by $T$ that
approximates $t_c$) is the unique solution of the following equation:
$ C =\sum_{j=1}^{\nbn} (1 - e^{-\popu{j} T})= \expec{X(T)}$.  The
accuracy of the approximation is rooted at the random variable $X(t)$
that is approximately Gaussian since it is defined as the sum of many
independent random variables (\clt). We provide a more detailed
discussion on this approximation in~\cite{long-version} 
based on the analysis in~\cite{versatile}.}
{


Che's Approximation is a technique to estimate the hit ratio of a LRU
cache, where the object (nodes for our case) accesses follow IRM
(Independent Reference Model).  Che's approximation is concerned with
the capacity misses in a cache.  We apply the approximation to the
\sdss to estimate \expec{\hitcache{i}{\ell}} and \expec{\hittlb{i}{\ell}}. In this
part, we give a brief discussion on Che's Approximation and in the
following sections (see~\ref{sec:cachemiss}, \ref{sec:pagemiss}), we
have shown how we adapt this scheme for our purposes.

IRM is based on the assumption that the object references occur in an
infinite sequence from a fixed catalog of \nbn objects.  The
probability of referencing object $i$ at any point in the sequence
(denoted by \popu{i}, where $i \in \inti{\nbn}$) 
is a constant that does not depend on the reference history and does
not vary over time.
Under LRU policy with cache of size \casi{\ell} and subject to IRM demand of
\nbn objects, an object reference would lead to a capacity miss if at
least \casi{\ell} unique object references take place after the previous
reference to the same object.  Let a reference to object $i$ (\obj{i})
occurs at time $t_0$, the characteristic time for the object $i$ is
defined by the random variable:

\[ \chatio{\ell}{i} = \inf\{t>0 : \xcaf{i}{t} = \casi{\ell}  \} \mathrm{, where,}\]

\[ \xcaf{i}{t} =\sum_{j=1,j \neq i}^{\nbn} \mathbf{1}_{t_0<\obj{j}\leq t} \]

Briefly, Che's approximation, first combines all \chatio{\ell}{i},
where $i \in \inti{\nbn}$ in a single variable by assuming \popu{i} is
negligible compared to $\sum_{j=1}^{\nbn} \popu{j}$ and then
approximates \chatio{\ell}{i} with a constant \chati{\ell} over
objects. Consider a sequence of references that follows an IRM demand
for \nbn objects, with reference probability \popu{i}, where $i \in
\inti{\nbn}$. The characteristic time \chati{\ell} of a cache with size
\casi{\ell} is the unique solution of the following equation:

\[ \casi{\ell} = \sum_{i=1}^{\nbn} (1 - e^{-\popu{i} \chati{\ell}}) \]


In~\cite{versatile}, they analyse and illustrate the reason behind the
accuracy of the approximations for a quite large spectrum of object
reference distributions.  Their argument relies on the random variable
$\xcas{t} = \sum_{j=1}^{\nbn} \mathbf{1}_{t_0<\obj{j}\leq t}$, that provides the
number of unique object references that have occured in the interval
$[0,t]$. As the crucial property, \xcas{t} is defined as the sum of
independent random variables. Based on the central limit theorem, they
show that a Gaussian approximation for this sum is quite reasonable,
for all $t$.

Without loss of generality, let an object $i$ is referenced
consecutively at time $0$ and $t$.  We know that the second reference
would be cache miss, in a cache of size \casi{\ell}, if $X(t) >
\casi{\ell}$, where by assumption \xcas{t} is a Gaussion random
variable. The cache hit ratio of cacheline is given by:

\begin{equation} 
\label{eq:3}
\hito{\ell}{i}  = 1 - \int_{0}^{+\infty}  \proba{\xcas{t} > \casi{\ell}} \popu{i} e^{-\popu{i} t}dt
\end{equation}

Che's approximation, basically, approximates the cumulative distribution
function of \xcas{t} with a step function that cuts this S-shaped
cumulative distribution function at the $ \expec{\xcas{t}}=\sum_{i=1}^{\nbn}  
(1 - e^{-\popu{i}t}) $, denoted by $m(t)$. Thus, it approximates 
\hito{\ell}{i} in Equation~\ref{eq:3} with:
\begin{align*}
 \hito{\ell}{i}  & \approx 1 - \int_{0}^{+\infty} \mathbf{1}_{m(t)>\casi{\ell}} \popu{i} e^{-\popu{i} t}dt\\
		& = 1 - \int_{0}^{+\infty} \mathbf{1}_{t>\chati{\ell}} \popu{i} e^{-\popu{i} t}dt
\end{align*}

In this study, we have exploited Che's approximation to estimate the data 
and TLB cache hit ratios with a slight modification by keeping our 
arguments along the same lines with the ones presented above.




}

\compsubsubsection{Cache Misses}
\label{sec:cachemiss}

We consider a data cache at level $\ell$ of size \casi{\ell} and
compute the hit latency due to \Read events on this cache. We assume
that \node{i} is either present in the \sds or not, during the
characteristic time of the cache.
\Read events at \node{i} are indeed much more frequent than the
removal or insertion of \node{i}. This implies that if the
characteristic time is long enough to accommodate the intervals where
$\node{i} \in \DS$ and $\node{i} \not\in \DS$, then the cache miss
ratio of \node{i} should be quite low, which would be underestimated
due to our assumption. We can employ the \Read rates as popularities,
\ie $\popu{i} = \rateRead{i}$, and modify Che's approximation to
discriminate whether, at a random time, \node{i} is inside the \ds
or not.

We integrate the masking variable \pres{i} into Che's approximation.
We have: $X^{\fcache}(t)=\sum_{i=1}^{\nbn}
\pres{i}\step{0 < \obj{i} \leq t}$, where \obj{i} denotes the reference time
of \node{i}.
We can still assume $X^{\fcache}(t)$ is gaussian, as a sum of many
independent random variables. We estimate the characteristic time as
follows with the linearity of expectation and the independence of the
random variables:
\pr{$\expec{X^{\fcache}(t)} = \sum_{i=1}^{\nbn} \expec{\pres{i}{ \step{0<\obj{i}\leq t}}}
                = \sum_{i=1}^{\nbn} \expec{\pres{i}}\expec{\step{0<\obj{i}\leq t}}
		= \sum_{i=1}^{\nbn} \prespro{i}(1 - e^{-\rateRead{i} t})$.}
   {
     \[ \expec{X^{\fcache}(t)} = \sum_{i=1}^{\nbn} \expec{\pres{i}{ \step{0<\obj{i}\leq t}}}
                = \sum_{i=1}^{\nbn} \expec{\pres{i}}\expec{\step{0<\obj{i}\leq t}}
		= \sum_{i=1}^{\nbn} \prespro{i}(1 - e^{-\rateRead{i} t}).\]
     }
Lastly, we solve the equation for the characteristic time \chati{\ell}
of level $\ell$ cache: $\sum_{i=1}^{\nbn} \prespro{i}(1 - e^{-\rateRead{i}
  \chati{\ell}}) = \casi{\ell}$ thanks to a fixed-point approach.
After computing \chati{\ell}, we estimate the
cache hit ratio (on level $\ell$) of \node{i}:
$1 - e^{-\rateRead{i} \chati{\ell}}$.


\compsubsubsection{Page Misses}
\label{sec:pagemiss}

In this paragraph, we aim at computing the page hit ratio of \node{i}
for the TLB cache at level $\ell$ of size \casit{\ell}. The total
number \nbpg of pages that are used by the \sds can be regulated by a
parameter of the memory managements scheme (frequency of recycling
attempts for the deleted nodes), as the total number of nodes is a
function of \range. 
Different from the
cachelines (corresponding to the nodes), we can safely assume that a
page accommodates at least a single node that is present in the \shds
at any time.


We cannot apply straightforwardly Che's approximation since the page
reference probabilities are unknown. 
However, we are given the cacheline reference
probabilities $\popu{i} = \rateRead{i}$
for $i \in \inti{\nbn}$ and we assume that \nbn cachelines are mapped
uniformly to \nbpg pages, $\inti{\nbn} \to \inti{\nbpg}$, $\nbn >
\nbpg$. Under these assumptions, we know that the resulting page
references would follow IRM because aggregated Poisson processes form
again a poisson process.


\cut{We need to project the skewness in cacheline reference
  probabilities to our page references to obtain accurate
  estimations. }We follow the same line of reasoning as in the cache
miss estimation. First, we consider a set of Bernoulli random
variables $(\prs{i}{j})$, leading to a success if \node{i} is mapped
into page $j$, with probability $\prespro{i}/\nbpg$ (hence \prs{i}{j}
does not depend on $j$).  Under IRM, we can then express the page
references as point processes with rate $r_j = \sum_{i=1}^{\nbn}
\prs{i}{j} \popu{i}$, for all $j \in \inti{\nbpg}$.

%
%
		
Similar to the previous section, we denote the time of a reference to
page $j$ with \obj{j} and we define the random variable
$X^{page}(t)=\sum_{j=1}^{\nbpg} \mathbf{1}_{0<\obj{j}\leq t}$ and compute its expectation:
\begin{align*}
\expec{X^{page}(t)} & = \sum_{j=1}^{\nbpg} \expec{\mathbf{1}_{0<\obj{j}\leq t}}
		 = \sum_{j=1}^{\nbpg} \expec{1 - e^{-r_j t}}
		 = \sum_{j=1}^{\nbpg} \expec{1 - e^{-\sum_{i=1}^{\nbn} \prs{i}{j} \rateRead{i} t}}\\
		& = \sum_{j=1}^{\nbpg} \left( 1 - \prod_{i=1}^{\nbn} \expec{e^{-\prs{i}{j} \rateRead{i} t}} \right)
		= \sum_{j=1}^{\nbpg} \left(1 - \prod_{i=1}^{\nbn}  \left(\frac{\nbpg-\prespro{i}}{\nbpg} + \frac{\prespro{i} e^{-\rateRead{i} t}}{\nbpg}\right) \right)\\
\expec{X^{page}(t)} & = \nbpg \left(1 - \prod_{i=1}^{\nbn}  \left(\frac{\nbpg-\prespro{i}}{\nbpg} + \frac{\prespro{i}e^{-\rateRead{i} t}}{\nbpg}\right) \right),
\end{align*}

Assuming $X^{page}(t)$ is Gaussian as it is sum of many independent 
random variables,
we solve
the following equation for the constant \chatit{\ell} (characteristic
time of a TLB cache of size $C$):
$\expec{X^{page}(\chatit{\ell})}=\casit{\ell}$.

Lastly, we obtain the TLB hit rate for \node{i} by relying on the
average \Read rate of the page that \node{i} belongs to; we should add 
to the contributions of \node{i}, the references to
of the nodes that belong to the same page as \node{i}. 
Then follows the TLB hit ratio:
$1 - e^{-z_{i}\chatit{\ell}}$, where
\pr{$z_i = \rateRead{i} +
  \expec{\sum_{j=1, j \neq i}^{\nbn} \prs{j}{k} \rateRead{j}}= \rateRead{i} +
  \sum_{j=1, j \neq i}^{\nbn} \prespro{j}\rateRead{j}/\nbpg$.}
   {
     \[z_i = \rateRead{i} +
  \expec{\sum_{j=1, j \neq i}^{\nbn} \prs{j}{k} \rateRead{j}}= \rateRead{i} +
  \sum_{j=1, j \neq i}^{\nbn} \prespro{j}\rateRead{j}/\nbpg.\]
     }

\compsubsubsection{Interactions}

To be complete, we mention the interaction between impacting 
factors and the possibility of latency overlaps in the pipeline. 
Firstly, the traversal latency of different nodes can not be overlapped 
due to the semantic dependency for the linked nodes. For a single node traversal, 
the latency for \Cas execution and stall time can not be overlapped with any other 
factor. We consider inclusive data and TLB caches. It is not possible to have a 
cache hit on level $l$, if the cache on level $l-1$ is hit, and we do not 
consider any cost for the data cache hit if invalidation recovery 
(coherence) cost is induced 
(\ie $\expec{\hitcache{i}{\ell}} = (1-\proba{coherence\text{ }miss}) 
(\proba{hit\text{ }cache_l}-\proba{hit\text{ }cache_{l-1}}) 
\latdat{\ell}$).\pp{\vsme}

\subsection{Latency vs. Throughput}
\label{sec:little}

In the previous sections, we have shown how to compute the expected
traversal latency for a given node. There remains to combine these
traversal latencies in order to obtain the throughput of the \sds.  Given
$\node{i}\in\DS$, the average arrival rate of threads to \node{i} is
$\rateTra{i}=\rateRead{i}+\rateCas{i}$. Thus the average arrival rate
of threads to \node{i} is: $\prespro{i}\rateTra{i}$.  It can then be
passed to Little's Law~\cite{littles-law}, which states that the
expected number of threads (denoted by $t_i$) traversing \node{i} obeys to
$t_i = \prespro{i}\rateTra{i}\expec{Traverse_{i}}$.
The equation holds for any node in the \sds, and for the application
call occurring in between \sds operations. Its expected latency is a
parameter ($\expec{\trav{0}}=\latapp$) and its average arrival rate
 is equal to the throughput ($\rateTra{0}=\thr$). Then, we
have: $\sum_{i=0}^{\nbn} t_i = \sum_{i=0}^{\nbn}
(\prespro{i}\rateTra{i}\expec{\trav{i}})$, where \rateTra{i} and
\expec{Traverse_{i}} are linear functions of $\thr$. We also know
$\sum_{i=0}^{\nbn} t_i = \nbth$ as the threads should be executing
some component of the program.  We define constants with $a_i, b_i,
c_i$ for $i \in \inti[0]{\nbn}$. And, we represent $\rateTra{i} =
a_i \thr$ and $\expec{Traverse_{i}} = b_i \thr + c_i$ and we obtain
the following second order equation: $\sum_{i=0}^{\nbn} (\prespro{i}
a_i b_i)\thr^2 + \sum_{i=0}^{\nbn} (\prespro{i} a_i c_i) \thr - \nbth
= 0$.
This second order equation
has a unique positive solution that provides the expected throughput,
\thr.\pp{\vsme}

\section{Instantiating the Throughput Model\pp{\vssm}}
\label{sec:dss}

In this section, we show how to initialize our model with widely known
\lf \sdss, that have different operation time complexities.
In order to obtain a throughput estimate for a \shds, we need to
compute the rates \rateRead{\star} and \rateCas{\star}, and
$\proba{\op{o}{k} \leadsto \opmem{\node{i}} | \node{i}\in\DS}$, \ie
the probability that, at a random time, an operation of type $o$ on
key $k$ leads to a memory instruction of type $\formmem$ on node
\node{i}, knowing that \node{i} is in the \ds. For the ease of
notation, nodes will sometimes be doubly or triply indexed, and when
the context is clear, we will omit $|\node{i}\in\DS$ in the
probabilities.

We first estimate the
throughput of \dslls and \dshts, on which we can directly apply our
method, then we move on more involved \sds, namely \dssls and
\dsbts, that need a particular attention.\pp{\vsme}

\subsection{\dslltit}
\label{sec:ll}

We start with the \lf \dsll implementation of Harris~\cite{harris}.
All operations in the \dsll start with the search phase in which the
\dsll is traversed until a key. At this point all operations terminate
except the successful update operations that proceed by modifying a
subset of nodes in the structure with \castxts instructions.  The
\shds contains only valued node and two sentinel nodes \node{0} and
\node{\range+1}, so that $\nbn = \range+2$ and for all $i \in
\inti{\range}$, \node{i} holds key $i$, \ie $\keyv{i} = i$.

First, we need to compute the probabilities of triggering a \Read
event and \castxts event on a node, given that the node is in the
\sds, for all operations of type $t \in \{\ins, \del, \src\}$ 
targeted to key $k$.

At a random time, \node{k}, for $k \in \inti{\range}$, is in the \dsll
iff the last update operation on key $k$ is an insert:
$\prespro{k}=\plin{k}$, by definition of \plin{k}.
Moreover, when \node{k} is in the \shds (condition that we omit in the
notation), \op{t}{k'} reads \node{k}, either if \node{k} is before
\node{k'}, or if it is just after \node{k'}. Formally,
$\proba{\op{o}{k'} \leadsto \opread{\node{k}}} = 1$ if $k \leq k'$ and
$\proba{\op{o}{k'} \leadsto \opread{\node{k}}} = \prod_{i=k'}^{k-1}
(1-\prespro{i})$ if $k>k'$.

\cut{for $k \in \inti{\range}$ and
  $\prespro{0}=\prespro{\range+1}=1$.}

%
%
%

\castxts events can only be triggered by successful \ins and \del
operations. A successful \ins operation, targeted to \node{k'}, is
realized with a \castxts that is executed on \node{k}, where $k =
\sup\{\ell < k': \node{\ell} \in \DS\}$. The probability of success,
which conditions the \castxts's, follows from the presence
probabilities:
\cut{
In other words, it happens at
\node{i} if there are not any nodes in between \node{i} and \node{k}.
A successful \del, targeted to key $k$, is realized with two \castxts
instructions: (i)\cut{executed} at \node{k} to mark it;
(ii)\cut{executed} at \node{i}, where $i = \sup\{\ell < j: \node{\ell}
\in \DS\}$. Furthermore, we obtain the success probabilities of \ins
and \del operations, targeted to key $k$, with $(1-\prespro{k})$ and
\prespro{k}, respectively.  Hence,
}
\setlength\arraycolsep{1.5pt}

\pr{
\begin{minipage}{.4\textwidth}
\small 
\begin{multline}
\proba{\op{\formins}{k'} \leadsto \opcas{\node{k}}} \\
= \left\{
\begin{array}{ll}
0, & \text{if } k \geq k'\\
\displaystyle\prod_{i=k+1}^{k'} (1-\prespro{i}),
& \text{if } k < k'
\end{array} \right.\quad ;
 \end{multline}
\end{minipage}\hfill
\begin{minipage}{.4\linewidth}
\small
\begin{multline}
\proba{\op{\formdel}{k'} \leadsto \opcas{\node{k}}} \\ 
= \left\{
\begin{array}{ll}
1, & \text{if } k = k'\\
0, & \text{if } k > k'\\
\prespro{k'}\displaystyle\prod_{i=k+1}^{k'-1} (1-\prespro{i}),
& \text{if } k < k'
\end{array} \right.
 \end{multline}
\end{minipage}%
}
{
\[ \proba{\op{\formins}{k'} \leadsto \opcas{\node{k}}} 
= \left\{
\begin{array}{ll}
0, & \text{if } k \geq k'\\
\displaystyle\prod_{i=k+1}^{k'} (1-\prespro{i}),
& \text{if } k < k'
\end{array} \right.\quad
\]
\[
\proba{\op{\formdel}{k'} \leadsto \opcas{\node{k}}} \\ 
= \left\{
\begin{array}{ll}
1, & \text{if } k = k'\\
0, & \text{if } k > k'\\
\prespro{k'}\displaystyle\prod_{i=k+1}^{k'-1} (1-\prespro{i}),
& \text{if } k < k'
\end{array} \right.
\]
}

\subsection{\vssm{}\dshttit}
\label{sec:ht}

We analyse here a chaining based hash table where elements are hashed
to $B$ buckets implemented with the \lf \dsll of
Harris~\cite{harris}. The \shds is parametrized with a load factor
\loadf which determines $B$ through $B = \range/\loadf$. The hash
function $h:k \mapsto \ceil{k/\loadf}$ maps the keys sequentially to
the buckets, so that, after including the sentinel nodes (2 per
bucket), we can doubly index the nodes: \node{b,k} is the node in
bucket $b$ with key $k$, where $b \in \inti{B}$ and $k \in
\inti{\loadf}$ (the last bucket may contain less elements).\pp{\vsbi}


\pr{

\begin{center}\small
\begin{align*}
\proba{\op{o}{b',k'} \leadsto \opread{\node{b,k}}} =
\left\{\begin{array}{ll}
0, & \text{if } b' \neq b \\
1, & \text{if } b'=b \text{ and } k' \geq k\\
\displaystyle\prod_{j=k'}^{k-1} (1-\prespro{b,j}), & \text{if } b'=b \text{ and } k'<k
\end{array}\right.
\end{align*}\end{center}

\begin{minipage}{.4\textwidth}
  \small
\begin{multline*}
\proba{\op{\formins}{b',k'} \leadsto \opcas{\node{b,k}}} =\\
\left\{\begin{array}{ll}
0, & \text{if } b' \neq b \text{ or } k' \leq k\\
\displaystyle\prod_{j=k+1}^{k'} (1-\prespro{b,j}), & \text{if } b'=b \text{ and }k'>k
\end{array}\right.
\end{multline*}
\end{minipage}
\begin{minipage}{.4\textwidth}
\small
\begin{multline*}
\proba{\op{\formdel}{b',k'} \leadsto \opcas{\node{b,k}}} =\\
\left\{\begin{array}{ll}
0, & \text{if } b' \neq b \text{ or } k' < k\\
1, & \text{if } b' = b \text{ and } k'=k \\
\prespro{b,k'}\displaystyle\prod_{j=k+1}^{k'-1} (1-\prespro{b,j}), & \text{if }
b'=b \text{ and } k' > k 
\end{array}\right.
\end{multline*}
\end{minipage}%
}
{
  \[
  \proba{\op{o}{b',k'} \leadsto \opread{\node{b,k}}} =
\left\{\begin{array}{ll}
0, & \text{if } b' \neq b \\
1, & \text{if } b'=b \text{ and } k' \geq k\\
\displaystyle\prod_{j=k'}^{k-1} (1-\prespro{b,j}), & \text{if } b'=b \text{ and } k'<k
\end{array}\right.
\]
\[
\proba{\op{\formins}{b',k'} \leadsto \opcas{\node{b,k}}} =
\left\{\begin{array}{ll}
0, & \text{if } b' \neq b \text{ or } k' \leq k\\
\displaystyle\prod_{j=k+1}^{k'} (1-\prespro{b,j}), & \text{if } b'=b \text{ and }k'>k
\end{array}\right.
\]
\[
\proba{\op{\formdel}{b',k'} \leadsto \opcas{\node{b,k}}} =
\left\{\begin{array}{ll}
0, & \text{if } b' \neq b \text{ or } k' < k\\
1, & \text{if } b' = b \text{ and } k'=k \\
\prespro{b,k'}\displaystyle\prod_{j=k+1}^{k'-1} (1-\prespro{b,j}), & \text{if }
b'=b \text{ and } k' > k 
\end{array}\right.
\]
}


\medskip

In the previous two \dss, we do observe differences in the traversal
rate from node to node, but the node associated with a given key does
not show significant variation in its traversal rate during the course
of the execution: inside the \shds, the number of nodes preceding (and
following) this node is indeed rather stable. In the next two \dss,
node traversal rates can change dramatically according to node
characteristics, that may include its position in the \shds. In a
\dssl, a node \node{i} containing key \keyv{i} with maximum height will be
traversed by any operation targeting a node with a higher
key. However, \node{i} can later be deleted and
inserted back with the minimum height; the operations that traverse it
will then be extremely rare. The same reasoning holds when comparing
an internal node with key \keyv{i} of a \dsbt located at the root or
close to the leaves.

As explained before, an accurate cache miss analysis cannot be
satisfied with average access rates. Therefore, the information on the
possible significant variations of rates should not be diluted into a
single access rate of the node. To avoid that, we pass the information
through virtual nodes: a node of the \shds is divided into a set of
virtual nodes, each of them holding a different flavor of the initial
node (height of the node in the \dssl or subtree size in the
\dsbt). The virtual nodes go through the whole analysis instead of the
initial nodes, before we extract the average behavior of the system
hence throughput.\pp{\vsme}

\subsection{\dssltit\pp{\vssm}}
\label{sec:sl}

There exist various \lf \dssl implementations and we study here the
\lf \dssl~\cite{sundell}. 
\dsSls offer layers of \dslls.  Each layer is a sparser version of the
layer below where the bottom layer is a \dsll that includes all the
elements that are present in the \sds. An element that is present in
the layer at height $h$ appears in layer at height $h+1$ with a fixed
appearance probability ($1/2$ for our case) up to some maximum layer
\hmax that is a parameter of the \dssl.

\dsSl implementations are often realized by distinguishing two type of
nodes: (i) valued nodes reside at the bottom layer and they hold the
key-value pair in addition to the two pointers, one to the next node
at the bottom layer and one to the corresponding routing node (could
be \nulltxt); (ii) routing nodes are used to route the threads towards
the search key. Being coupled with a valued node, a routing node does
not replicate the key-value pair. Instead, only a set of pointers,
corresponding to the valued node containing the next key in different
layers, are packed together in a single routing node (that fits in a
cacheline with high probability).
Every \Read event in a routing node is preceded by a \Read in the
corresponding valued node.

We denote by \nodero{k}{h} the routing node containing key $k$, whose
set of pointers is of height $h$, where $h \in \inti{\hmax}$. A valued
node containing the key $k$ is denoted by \nodeda{k}{h} when connected
to \nodero{k}{h} ($h=0$ if there is no routing node). Furthermore,
there are four sentinel nodes \nodeda{0}{\hmax}, \nodero{0}{\hmax},
\nodeda{\range+1}{\hmax}, \nodero{\range+1}{\hmax}.
The presence probabilities result from the coin flips (bounded by
\hmax): for $z \in \{\formdat, \formrou\}$,
$\presprord{z}{k}{h} = 2^{-(h+1)} \plin{k}$ if $h<\hmax$,
$\presprord{z}{k}{h} = \plin{k} - \sum_{\ell=0}^{\hmax-1} \presprord{z}{k}{\ell}$ otherwise.

\begin{figure}%
    \begin{center}{\scalebox{.55}{\input{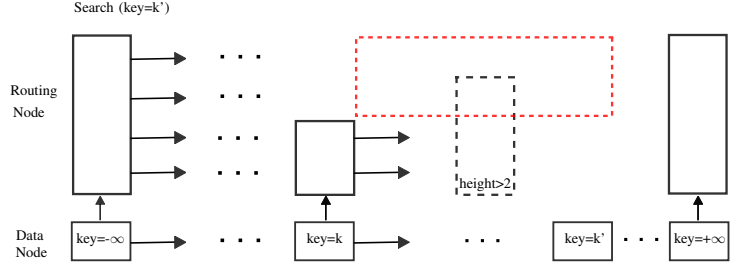}}}\end{center}
    \caption{\dssltit Events: Read Event Probability}%
    \label{fig:sklread}%
\end{figure}

By decomposing into three cases, we compute the probability that an
operation \op{o}{k'} of type $o \in \{\formins, \formdel, \formsrc\}$,
targeted to $k'$, causes a \Read triggering event at
\noderd{z}{k}{h} when $\noderd{z}{k}{h} \in \DS$.
Let assume first that $k'>k$. The operation triggers a \Read event at
node \noderd{z}{k}{h} if for all $(x, y)$ such that $y>h$ and $k < x
\leq k'$, \nodet{z}{x}{y} is not present in the \dssl (\ie in
Figure~\ref{fig:sklread}, no node in the \dssl overlaps with the
red frame).
Let assume now $k'<k$. The occurrence of a \Read event requires that:
for all $(x,y)$ such that $y\geq h$ and $k' \leq x < k$,
\noderd{z}{x}{y}, is not present in the \shds.
Lastly, a \Read event is certainly triggered if $k'=k$. The final
formula is given by:\pp{\vsbi{}\vsbi}
%

{\small \[
\proba{\op{o}{k'} \leadsto \opread{\noderd{z}{k}{h}}} \
= \begin{cases}
\prod_{x=k+1}^{k'} \left(1 - \left(\sum_{y=h+1}^{\hmax} \presprord{z}{x}{y}\right)\right), & \text{if } k \leq k'\\
\prod_{x=k'}^{k-1} \left(1 - \left(\sum_{y=h}^{\hmax} \presprord{z}{x}{y}\right)\right), & \text{if } k > k'
\end{cases}
\]\pp\vsmb}

\pr{
To be complete, we describe in~\cite{long-version} how to
compute the probability for \castxts events, following a
similar approach.\vsme
}
{


\begin{figure}%
\begin{center}{\scalebox{.55} {\input{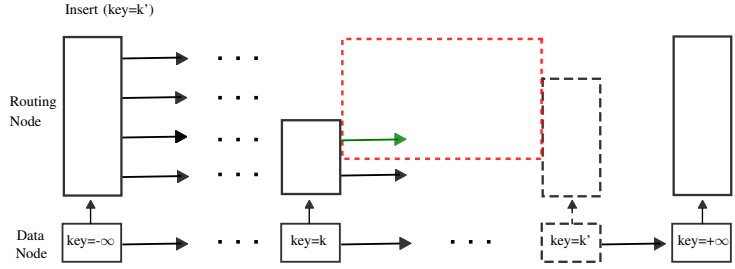}}}\end{center}
    \caption{Skiplist Events: \castxts Event Probability}%
    \label{fig:sklcas}%
\end{figure}

Next, we apply a similar approach for \castxts events.  In
Figure~\ref{fig:sklcas}, we illustrate an example. A \castxts event
occurs at the green pointer, as a result of the removal (or insertion)
of \keyv{k} if there is no node in the red frame.  For all node and
operation couples, $\proba{\op{o}{k'} \leadsto \opcas{\noderd{z}{k}{h}}}$
is simply obtained in those lines.

The insertion of an element with \keyv{k'} introduces $\nodet{z}{k'}{h}$
with probability $2^{-(h+1)}$ if $h \in \inti{\hmax-1}$, and $1 -
\sum_{i=0}^{\hmax-1} 2^{-(h+1)}$ when the maximum height. The data node is linked
to the list at the bottom layer with a \castxts that is executed on
the previous data node.  If a routing node is introduced, it is linked
to lists at $h$ different layers, thus leads to $h$ \castxts
instructions that are applied on the other nodes.

The deletion of an element is composed of two phases. The first phase
is to mark the data node, $\nodeda{k'}{h}$ and the pointers in the
routing node with height $k'$, if it exists. If the height of the
routing node is more than one, it is possible that multiple \castxts
intructions are executed on the same routing node. But, we only
consider the first one. The latency and also the effect of remaining
ones would be negligible, as they are applied on the same cacheline
one after each other. This repetitive behavior guarentees that the
cacheline has already been exclusively owned before the next \castxts
instructions run. To recall, this is consistent with our assumption
that an event can occur at most once per operation on a node. The
second phase of deletion operation follows the same path with the
insertion operation. Simply, a \castxts, on the previous node, is
executed for each layer that the data and routing nodes span.

We have denoted the success probability of an \ins operation with 
$\plin{k'}=\frac{\proba{op=\op{\ins}{k'}}}{\proba{op=\op{\ins}{k'}}+\proba{op=\op{\del}{k'}}}$. 
Also, the factor $2^{-(h+1)}$ provides the probability of the insertion of a
routing node with height $h$, coupled with its data node. Based on the
non-existence of any node that overlaps with the area that is enclosed
with the red frame in Figure~\ref{fig:sklcas}, we obtain:

\[
\proba{\op{\formins}{k'} \leadsto \opcas{\nodet{z}{k}{h}}} =\\
\begin{cases}
(1-\plin{k'})(\sum_{h=0}^{\hmax} 2^{-(h+1)} (\prod_{x=k+1}^{k'-1} (1 - (\sum_{y=h}^{\hmax} \presprord{z}{x}{y} ))))  , & \text{if } k < k'\\
0, & \text{if } k \geq k' \\
\end{cases}
\]

\[
\proba{\op{\formdel}{k'} \leadsto \opcas{\nodet{z}{k}{h}}} =\\
\begin{cases}
1, & \text{if } k = k'\\
\plin{k'} (\sum_{h=0}^{\hmax} 2^{-(h+1)} (\prod_{x=i+1}^{k'-1} (1 - (\sum_{y=h}^{\hmax} \presprord{z}{x}{y}))))  , & \text{if } k < k'\\
0, & \text{if } k > k' \\
\end{cases}
\]

%
%
%
%
%
%
%
%
%
%
%
%
%
%

}

\subsection{\dsbttit\pp\vssm}
\label{sec:bt}

\newcommand{\floor}[1]{\lfloor #1 \rfloor}
\newcommand{\pri}[1]{prio_#1}

We show here how to estimate the throughput of external \dsbts. They
are composed of two types of nodes: internal nodes route the search
towards the leaves (routing nodes) and store just a key, while leaves,
referred as external nodes contain the key-value pair (valued
node). We use the external \dsbt of Natarajan~\cite{natarajan} to
initialize our model. The search traversal starts and continues with a
set of internal nodes and ends with an external node. We denote by
\nodeint{k} (resp. \nodeext{k}) the internal (resp. external) node
containing key $k$, where $k \in \inti{\range}$. The tree contains
two sentinel internal nodes that reside at the top of the tree (hence
are traversed by all operation): \nodeint{-1} and \nodeint{0}.

Our first aim is to find the paths followed by any operation through
the \dsbt, in order to obtain the access triggering rates, thanks to
Equation~\ref{eq:lambdas}. \dsBts are more complex than the previous
\shdss since the order of the operations impact the positioning of the
nodes. The random permutation model proposes a framework for
randomized constructions in which we can develop our model. Each key
is associated with a priority, which determines its insertion order:
the key with the highest priority is inserted first. The performance
characteristics of the randomized \dsbts are studied
in~\cite{tree}. In the same vein, we compute the traversal probability
of the internal node with key $k$ in an operation that targets
key $k'$.

\begin{lemma}
\label{th:pint}
Given an external \dsbt, the probability of traversing \nodeint{k} in
an operation that targets key \keyv{k'} is given by: (i) $1/f(k,k')$
if $k' \geq k$; (ii) $1/(f(k',k)-1)$ if $k' < k$, where $f(x,y)$
provides the number internal nodes whose keys are in the interval
$[x,y]$.
\end{lemma}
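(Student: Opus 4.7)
The plan is to exploit the random permutation model just introduced: assign an independent uniform priority to every key $i \in \inti{\range}$, and think of the keys as being inserted in decreasing order of priority; by the memoryless stationary access assumption, the resulting external \dsbt has the distribution we want to analyse. The event ``$\nodeint{k}$ is traversed by the search for $\keyv{k'}$'' is equivalent to ``$\nodeint{k}$ is an ancestor of $\nodeext{k'}$ in the tree,'' so I would reduce this ancestry to a first-insertion event on a suitable set of competing keys, and then use the symmetry of the uniform priorities to read off the probability.

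The first and main step is to identify, in each case, the set $S(k,k')$ of keys such that $\nodeint{k}$ is an ancestor of $\nodeext{k'}$ iff $k$ has the highest priority (is inserted first) in $S(k,k')$. For case (i), $k' \geq k$, I would show that $S(k,k')$ consists of all keys in $[k,k']$ that may host an internal node: if some $j \in (k,k']$ were inserted before $k$, the routing node carrying $j$ would split the range $[k,k']$, and $k$ and $k'$ would then land in disjoint subtrees, preventing $\nodeint{k}$ from ever becoming an ancestor of $\nodeext{k'}$; conversely, if $k$ is the earliest, the routing node $\nodeint{k}$ gets placed on the future path to $\nodeext{k'}$. Restricted to actual internal keys this set has size $f(k,k')$, and by the symmetry of the priorities the probability that $k$ is first is $1/f(k,k')$.

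For case (ii), $k' < k$, the same argument applies to the interval $[k', k]$, with one asymmetric twist inherited from the labelling convention in Natarajan's external \dsbt, where each internal node takes the larger of its two child keys at creation. Under this convention, $\nodeext{k'}$ is attached as the leftmost leaf of a subtree rooted at an internal node whose routing key is strictly greater than $k'$, so the key $k'$ itself never plays the role of a routing boundary between $k$ and $k'$ and must be removed from the list of competitors. The remaining candidates in $[k', k]$ therefore number $f(k', k) - 1$, and the symmetry argument yields $1/(f(k', k) - 1)$.

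The hard part will be the first step, that is, turning the equivalence ``$\nodeint{k}$ is an ancestor of $\nodeext{k'}$ iff $k$ is first-inserted in $S(k,k')$'' into a rigorous induction on the insertion sequence, and in particular justifying the $-1$ correction in case (ii) from the structural asymmetry of the external \dsbt, since a symmetric routing-key convention would instead produce $1/f(k', k)$. Once that equivalence is nailed down, the probability computation is an immediate consequence of the exchangeability of the random permutation model.
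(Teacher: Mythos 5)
Your proposal is correct and takes essentially the same approach as the paper: both reduce the traversal of \nodeint{k} to the event that $k$ carries the highest priority among the internal keys lying between $k$ and $k'$, and then read the probability off by the symmetry of the random permutation. The paper handles your case (ii) by simply switching to the half-open interval $(k',k]$ without further comment, which is precisely the exclusion of $k'$ that you justify (in more detail) via the routing-key convention of the external \dsbt.
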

\begin{proof}
\nodeint{k} would be traversed if it is on the search path to the
external node with key $k'$. Given $k' \geq k$, this happens iff
\nodeint{k} has the highest priority among the internal nodes in the
interval $[k,k']$. This interval contains $f(k,k')$ internal nodes, thus,
the probability of \nodeint{k} to possess the highest priority is
$1/f(k,k')$. Similarly, if $k' < k$, then \nodeint{k} is traversed iff
it has the highest priority in the interval $(k',k]$. Hence, the lemma.
\end{proof}

Even if in the \dsbt, nodes are inserted and deleted an infinite number
of times, Lemma~\ref{th:pint} can still be of use.
The number of internal nodes in the interval $[k,k']$ (or $(k',k]$ if
  $k' < k$)
is indeed a random variable which is the sum of independent Bernoulli
random variables that models the presence of the nodes. As a sum of
many independent Bernoulli variables, the outcome is expected to have
low variations because of its asymptotic normality. Therefore, we
replace this random variable with its expected value and stick to this
approximation in the rest of this section. The number of internal
nodes in any interval come out from the presence probabilities:
$\presproie{k}{z} = \plin{k}$, where $z \in \{\formint, \formext\}$.

In an operation is targeted to key $k'$, a single external node is
traversed (if any): \nodeext{k'}, if present, else the external node
with the biggest key smaller than $k'$, if it exists, else the
external node with the smallest key.  Then, we have:\pp\vsbi

\pr{\begin{minipage}{.45\linewidth}
\begin{multline*}
\proba{\op{o}{k'} \leadsto \opread{\nodeint{k}}}\\
= \left\{
\begin{array}{lcl}
1/(1 + \sum_{i=k'+1}^{k-1} \presproint{i}), & \text{if }  k > k'\\
1/(1 + \sum_{i=k+1}^{k'} \presproint{i}), & \text{if }  k \leq k'
\end{array} \right.,
\end{multline*}
\end{minipage}%
\begin{minipage}{.5\linewidth}
\begin{multline*}
\proba{\op{o}{k'} \leadsto \opread{\nodeext{k}}}\\
= \left\{
\begin{array}{lcl}
1, & \text{if } k=k'\\
\prod_{i=k+1}^{k'} (1-\presproext{i}), & \text{if } k<k'\\
\prod_{i=1}^{k-1} (1-\presproext{i}) , & \text{if } k>k'
\end{array} \right.
\end{multline*}
\end{minipage}%
}{
\[ \proba{\op{o}{k'} \leadsto \opread{\nodeint{k}}}
= \left\{
\begin{array}{lcl}
1/(1 + \sum_{i=k'+1}^{k-1} \presproint{i}), & \text{if }  k > k'\\
1/(1 + \sum_{i=k+1}^{k'} \presproint{i}), & \text{if }  k \leq k'
\end{array} \right.,
\]
\[ \proba{\op{o}{k'} \leadsto \opread{\nodeext{k}}}
= \left\{
\begin{array}{lcl}
1, & \text{if } k=k'\\
\prod_{i=k+1}^{k'} (1-\presproext{i}), & \text{if } k<k'\\
\prod_{i=1}^{k-1} (1-\presproext{i}) , & \text{if } k>k'
\end{array} \right.
\]
}

\vspace{.2cm}

These probabilities finally lead to the computation of the \Read
(resp. \castxts) rates \rateReadIE{z}{k} (resp. \rateCasEIA{z}{k}) of
$\noded{k}{z}$, where $z \in \{ \fint, \fext\}$, that will be used in the
last following step. 

We focus now on the \Read rate of the internal nodes. We have found
the average behavior of each node in the previous step; however, the
node can follow different behaviors during the execution since the
\Read rate of \nodeint{k} depends on the size of the subtree whose
root is \nodeint{k}, which is expected to vary with the update
operations on the tree. We dig more into this and reflect these
variations by decomposing \nodeint{k} into \nbvirt{k} virtual nodes,
\nodeintv{k}{h}, where $h \in \inti{\nbvirt{k}}$.  We define the \Read
rate \rateReadIntv{k}{h} of these virtual nodes as a weighted sum of the
initial node rate thanks the two equations
$\presproint{k}=\sum_{h=1}^{\nbvirt{k}} \presprointv{k}{h}$ and
$\presproint{k} \rateReadInt{k} = \sum_{h=1}^{\nbvirt{k}}
\presprointv{k}{h} \rateReadIntv{k}{h}$.

We connect the virtual nodes to the initial nodes in two ways. On the
one hand, one can remark that the \Read rate is proportional to the
subtree size: $\rateReadIntv{k}{h} \propto h \rateReadInt{k}$. On the
other hand, based on the probability mass function of the random
variable \subt{k} representing the size of the subtree rooted at
\nodeint{k}, we can evaluate the weight of the virtual nodes:
$\presprointv{k}{h}=\presproint{k} \proba{\subt{k}=h}$.

\pr{

We follow a similar approach for the other access rates, and use them
in the rest of the model to end up with the \ds throughput. More
details are to be found in~\cite{long-version},
especially on how to obtain this mass function and how to deal with
the \castxts events.\pp\vsbi
}
{



We have computed \rateReadInt{k}. These values reflect the average behaviour 
along the whole execution.
However, the average behavior is not enough to computethe traversal latency 
accurately for the internal nodes. In the execution,
there are different time intervals where \rateReadInt{k} show significant variation 
depending on the part of the tree that it is located.
For instance, it is quite improbable to observe a cache miss at \nodeint{k} when it
is positioned at the root of the tree.
One would observe a very high rate of traversals
with low latency in this case, which decreases the expected traversal 
latency of \nodeint{k} significantly. An accurate estimation for the cache 
misses requires the consideration of this particularity of the binary tree.
To approximate the impact of this variation, we split \nodeint{k} into a
number (let $\nbvirt{k}$ denotes this number for \nodeint{k}) of independent virtual nodes (in 
the lines of independent reference model), each representing the behavior 
of \nodeint{k} with a different \Read rate.
The virtual node, with \Read rate \rateReadIntv{k}{h}, is 
denoted by \nodet{k}{h}{int}. We will obtain
the \Read rates \rateReadIntv{k}{h} and presence probabilities
\presprointv{k}{h} for these virtual
nodes by requiring that the average behaviors are still valid:
$\presproint{k}=\sum_{h=1}^{\nbvirt{k}} \presprointv{k}{h}$ 
and $\presproint{k}\rateReadInt{k} =\sum_{h=1}^{\nbvirt{k}} \presprointv{k}{h} \rateReadIntv{k}{h}$.

\newcommand{\ind}[1]{\ema{\sigma_#1}}
\begin{theorem}
\label{th:2}
For an external binary tree with $N$ internal nodes, generated with the random permutation of insertions, the probability mass function of the size of the subtree (the random 
variable concerns only the number of the internal nodes and denoted by \subt{k}) that is 
rooted at $\nodeint{k}$ is given by: $\proba{\subt{k}= N}=1/N$ and $\proba{\subt{k} = s}=O(1/s^2)$.
\end{theorem}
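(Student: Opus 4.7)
Since the tree is built by a random permutation of insertions, I would first re-cast the structure as a treap on the $N$ internal keys whose priorities are i.i.d.\ uniform (the insertion times). In this view, the subtree rooted at $\nodeint{k}$ is characterized by its nearest ancestors on either side: let $L$ be the largest key strictly smaller than $k$ whose priority exceeds that of $\nodeint{k}$ (with $L = 0$ if no such key exists), and $R$ the smallest key strictly greater than $k$ with the same property (with $R = N+1$ otherwise). The classical treap argument then identifies the subtree rooted at $\nodeint{k}$ with the internal nodes whose keys lie in $(L, R)$, so $\subt{k} = R - L - 1$.

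Next I would decompose the event $\{\subt{k} = s\}$ according to the interval $[a,b] = (L,R)$. Writing $a = k - i$ and $b = k + (s-1-i)$ for $i \in \{0, \dots, s-1\}$ enumerates the disjoint possibilities. In the \emph{interior} case $2 \le a$ and $b \le N-1$, the event $\{L = a-1,\, R = b+1\}$ is equivalent, via symmetry of uniform priorities, to the following event on the $s+2$ keys in $[a-1,b+1]$: the two highest priorities occur at $a-1$ and $b+1$ (in either order), and the third highest occurs at $k$. A direct count of orderings gives probability $\tfrac{2 \cdot (s-1)!}{(s+2)!} = \tfrac{2}{s(s+1)(s+2)}$, independent of $(a,b)$.

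The boundary cases are handled analogously: if only one side is bounded (say $L = 0$ but $R = b+1 \le N$), the probability collapses to $\tfrac{1}{(s+1)s}$ (max at $b+1$ among $s+1$ keys, next max at $k$ among the remaining $s$); when both sides are open the interval is forced to be $[1,N]$ and we simply require $\nodeint{k}$ to hold the overall maximum priority, which happens with probability $1/N$. This last computation establishes $\proba{\subt{k} = N} = 1/N$.

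Summing over the at most $s$ interior positions and the at most two boundary configurations gives
\[
\proba{\subt{k} = s} \;\le\; s \cdot \frac{2}{s(s+1)(s+2)} + 2 \cdot \frac{1}{s(s+1)} \;=\; O\!\left(\tfrac{1}{s^2}\right),
\]
for every $s < N$, which is the second claim. The main technical care lies in the boundary bookkeeping: one has to check that discarding the boundary terms does not hide a larger contribution, and that the count of interior intervals collapses cleanly. The extra factor $1/s$ that pushes the bound from $1/s$ down to $1/s^2$ comes precisely from the constraint that $\nodeint{k}$ be the priority-maximum \emph{inside} the interval $[a,b]$, a requirement that the enumeration above makes transparent.
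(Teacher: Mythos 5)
Your proof is correct and follows essentially the same route as the paper's: you identify the subtree with the interval delimited by the nearest higher-priority keys on each side, compute the per-interval probability $\tfrac{2}{s(s+1)(s+2)}$ (resp. $\tfrac{1}{s(s+1)}$ for one-sided boundary configurations, $1/N$ for the full tree), and sum over the at most $s$ placements, exactly as in the paper's four-case enumeration. Your treap-style formulation via $L$ and $R$ is a slightly cleaner packaging of the same counting argument, and your final bound correctly dominates all of the paper's boundary cases.
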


\begin{proof}

It is clear that $\proba{\subt{k} = N}=1/N$ since it occurs iff $\nodeint{k}$ has the highest
priority among all internal nodes. For the rest, we consider four different cases. 
Let \ind{k} denotes the index of $\nodeint{k}$ in the permutation of the sequence of $N$ 
internal nodes that are arranged in the ascending order based on their keys.
 
(i) $\ind{k}+s \leq N$ and $\ind{k}-s \geq 1$:  then there exist $s$ distinct pairs 
of $(\nodeint{j},\nodeint{i})$ such that
$\ind{i}-\ind{j}=s+1$ and $\ind{j}<\ind{k}<\ind{i}$. Given a pair of such 
$(\nodeint{j},\nodeint{i})$, $\subt{k}=s$ if the priorities of
$\nodeint{j}$ and $\nodeint{i}$ are higher than the priorities of all \nodeint{x}, 
such that \ind{j} < \ind{x} < \ind{i} and also $\nodeint{k}$ has a higher priority than 
all \nodeint{y \neq k} such that \ind{j} < \ind{y} < \ind{i}.
This ($\nodeint{k}$ is the root of subtree that includes all \nodeint{y}, 
such that \ind{j} < \ind{y} < \ind{i}) can happen with probability,
$\frac{2}{(s+2)(s+1)s}$. There exist $s$ such non-overlapping cases.
We have, $\proba{\subt{k}=s} = \frac{2}{(s+1)(s+2)}$.

(ii) $\ind{k}+s > N$ and $\ind{k}-s \geq 1$:  then there exist a \nodeint{i} such that 
$\ind{i}=N-s$. $\subt{k}=s$ if $\nodeint{i}$ has higher priority than all \nodeint{x}, 
such that $\ind{i} < \ind{x} \leq N$ and $\nodeint{k}$ has higher priority than all 
\nodeint{y}, such that $\ind{i} < \ind{{y \neq k}} \leq N$. This can happen with probability,
$\frac{1}{(s+1)s}$. In addition, there can be at least $0$ and at most $s-1$ distinct pairs
of $(\nodeint{j},\nodeint{i})$ such that $\ind{i}-\ind{j}=s+1$ and $\ind{j}<\ind{k}<\ind{i}$.
We have: $\frac{1}{(s+1)s} \leq \proba{\subt{k}=s} 
\leq \frac{1}{(s+1)s} + \frac{2(s-1)}{(s+1)(s+2)s}$.

(iii) $\ind{k}+s \leq N$ and $\ind{k}-s < 1$: The bound at (ii) applies to this case also.

(iv) $\ind{k}+s > N$ and $\ind{k}-s < 1$: then there exist a $\nodeint{i}$ such that 
$\ind{i}=N-s$ and a $\nodeint{j}$ such that
$\ind{j}=s+1$. In addition, there can be at least $0$ and at most $s-2$ distinct pairs of nodes 
$(\nodeint{j},\nodeint{i})$ such that $\ind{i}-\ind{j}=s+1$ and $\ind{j}<\ind{k}<\ind{i}$. 
Similar to (i) and (ii), we obtain and sum the probabilities lead to $\subt{k}=s$. We have:
$\frac{2}{(s+1)s} \leq \proba{\subt{k}=s} \leq \frac{2}{(s+1)s} + \frac{2(s-2)}{(s+1)(s+2)s}$
\end{proof}

We start with an observation. The \Read rate of $\nodeint{k}$ is
proportional to the size of the subtree that is rooted at
$\nodeint{k}$. Given a binary tree of $N$ internal nodes, the size
of the subtree can vary in the interval $[1,N]$, which means that we
can have $\nbvirt{k}=N$ different \Read rate levels (\rateReadIntv{k}{h})
associated with their presence probabilities 
$\presprointv{k}{h}=\presproint{k} \proba{\subt{k}=h}$.
Relying on Theorem~\ref{th:2}, one can observe that
$\proba{\subt{k}=h}$ do not variate much from $c_1/(h+1)^2$ for the majority
of different values of $h$ and $k$.  Therefore, we approximate
$\proba{\subt{k}=h} \approx c_1/(h+1)^2$, with a single constant $c_1$ for all
$k$ and $h<\nbvirt{k}$.  We know, $\sum_{h=1}^{\nbvirt{k}} \proba{\subt{k}=h}=1$ and
$\proba{\subt{k}=\nbvirt{k}}=1/\nbvirt{k}$.  So, we obtain $c_1 \approx 2$ by solving the
equation $\int_{h=2}^{N} (c_1/h^2) dh = (N-1)/N$.  We set
$ \presprointv{k}{h} =  \presproint{k} (2/(h+1)^2)$ 
and $\presprointv{k}{\nbvirt{k}} = \presproint{k} / \nbvirt{k}$. 
Assuming $ \rateReadIntv{k}{h} =
c_2 h \rateReadInt{k}$ (\Read rates are proportional to the subtree
size), we require $\presproint{k} \rateReadInt{k} = 
\sum_{h=1}^{\nbvirt{k}} \presprointv{k}{h} \rateReadIntv{k}{h}$,
which leads to $\rateReadInt{k} \approx c_2 +
\int_{h=2}^{\nbvirt{k}} (2/h^2) c_2 (h-1) \rateReadInt{k} dh$. We solve and obtain
$c_2 \approx 1/(2\ln \nbvirt{k})$. We set $\rateReadIntv{k}{h} = h \rateReadInt{k}
/(2\ln \nbvirt{k})$, for the virtual internal nodes.


Now, we consider the \castxts events. \del and \ins operation start
with the search phase. \ins operation finalize with a \castxts
executed at the grandparent internal node of the inserted external
key. \del operation contains three \castxts;
(i) one at the grandparent internal node of the deleted external key;
(ii) two that are executed consecutively at the parent node of the
external key. Thus, we consider them as a single \castxts instruction,
since the second of the consecutive ones has a negligible cost because
the cacheline has already been exclusively owned by the thread.

Similar to \Read events, we first find the rate of \castxts events for
$\nodeint{k}$ and split these events to virtual nodes by requiring 
the average behavior is still valid:
$\presproint{k} \rateCasInt{k} = \sum_{h=1}^{\nbvirt{k}} 
\presprointv{k}{h} \rateCasEI{k}{h}$.
To determine the target of \castxts event, we need to determine the
probability of an internal node $\nodeint{k}$ to be the grandparent
or parent of the targetted $\nodeext{k'}$.  We examine four
different cases as illustrated in Figure~\ref{fig:bstcas}.  Given that
we are in the first case, we look for the probability that
$\nodeint{k}, k'<k$, to possess the smallest or second smallest key,
that is bigger than $k'$, among the internal nodes that are present in
the tree. Such internal nodes with the smallest key and the second
smallest key corresponds to the parent and grandparent of
$\nodeext{k'}$, respectively.
For case 1, it is possible that the grandparent node is the node which 
has the $x$th, $x>1$, smallest key that is bigger than $i$, that is 
present in the tree. But this probability decreases exponentially 
as $x$ increases. That is why, we have attributed the \castxts events
that takes place at the granparent node to the node with second smallest 
key that is bigger than $k'$.
For case 2, the parent corresponds to the
smallest key that is bigger than $k'$ and the grandparent
corresponds to the biggest key that is smaller than $k'$, that are 
present in the tree.

\begin{figure}
\small
\center
\scalebox{.7}{\input{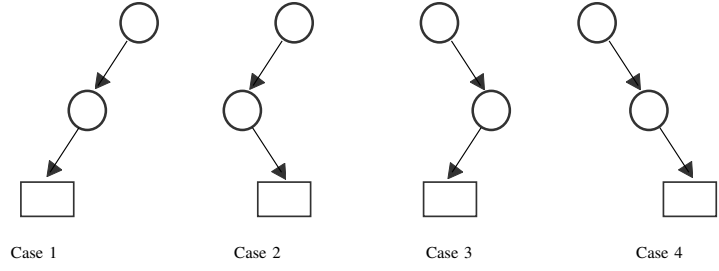}}
\caption{Binary Tree CAS Probability\label{fig:bstcas}}
\end{figure}

Formally, let $P_{k'}^{B} = \{i : i \geq k',  \nodeint{i} \in \DS\}$ and
$P_{k'}^{S} = \{i : i < k',  \nodeint{i} \in \DS\}$.
For the first case, we are interested in the probability that $\nodeint{k}$ is the
grandparent or parent node of $\nodeext{k'}$. These are given by
$\proba{k=\sup\{ P_{k'}^{S}  - \sup\{P_{k'}^{S}\}\}}$ and $\proba{k=\sup\{P_{k'}^{S}\}}$ 
respectively.
For the second case, we are interested
in $\proba{k=\sup\{P_{k'}^{S}\}}$ and $\proba{k=\inf\{P_{k'}^{B}\}}$. The third and fourth cases follows
the same lines as they are the flipped versions of the case one and two.
For all non-sentinel nodes, we have $\presproint{k} = p$. First, we compute the following 
probabilities:

For $k \geq k'$ we have: (these probabilities are zero if $k < k'$)
\[\proba{k=\sup\{P_{k'}^{S} - \sup\{P_{k'}^{S}\}\}} = p (k'-i) (1-p)^{(k'-k-1)}\]
\[\proba{k=\sup\{P_{k'}^{S}\}} = (1-p)^{(k'-k)}\]

And for $k < k'$: (these probabilities are zero if $k \geq k'$)
\[\proba{k=\inf\{P_{k'}^{B}\}} = (1-p)^{(k-k'-1)} \]
\[\proba{k=\inf\{P_{k'}^{B} - \inf\{P_{k'}^{B}\}\}} = p (k-k'-1) (1-p)^{(k-k'-2)}\]

Based on Lemma~\ref{th:pint} (assuming a constant tree size), we obtain the
expected number of internal nodes
that route the search to its left child ($c_{k',l}$) and right child($c_{k',r}$) for an operation
that is targetted to $key=k'$. On this route, we compute the probability of a random node
to be the left (right) child of its parent, with $l_{k'}=c_{k',l}/(c_{k',l}+c_{k',r})$
(and similarly $r_=c_{k',r}/(c_{k',l}+c_{k',r})$). And, we estimate the probability of observing
a case at a random time by using these values (\ie $l_{k'}^2$ for Case 1, $l_{k'}r_{k'}$
for Case 2). And finally, we obtain:

\begin{align*}
\proba{\op{\formdel}{k'} \leadsto \opcas{\nodeint{k}}}=&
\presproint{k'} (l_{k'}^{2} \proba{k=\inf\{P_{k'}^{B} - \inf\{P_{k'}^{B}\}\}}\\
& + l_{k'} (r_{k'}+1) \proba{k=\inf\{P_{k'}^{B}\}}\\
& + r_{k'} (l_{k'}+1) \proba{k=\sup\{P_{k'}^{S}\}}\\
& + r_{k'}^{2} \proba{k=\sup\{P_{k'}^{S} - \sup\{P_{k'}^{S}\}\}})
\end{align*}

\begin{align*}
\proba{\op{\formins}{k'} \leadsto \opcas{\nodeint{k}}}=&
(1-\presproint{k'}) (l_{k'}^{2} \proba{k=\inf\{P_{k'}^{B} - \inf\{P_{k'}^{B}\}\}}\\
& + l_{k'} r_{k'} \proba{k=\inf\{P_{k'}^{B}\}}\\
& + r_{k'} l_{k'} \proba{k=\sup\{P_{k'}^{S}\}}\\
& + r_{k'}^{2} \proba{k=\sup\{P_{k'}^{S} - \sup\{P_{k'}^{S}\}\}})
\end{align*}

Lastly, we split the \castxts events to the virtual nodes. 
\castxts events can happen at the internal nodes only when
they are in the last two levels of the tree (or similarly when the
size of the subtree that is rooted at the concerned internal node is
in the interval $[1,3]$). We required the average behaviour to be
valid and set $\rateCasEI{k}{x} = \presproint{k} \rateCasInt{k} / (\presprointv{k}{1} + \presprointv{k}{2} + \presprointv{k}{3}), \forall x \in \{1,2,3\}$.
%
%
For the cases where the operation key selection follows a zipf
distribution, there exist a small region of the tree that the most
operations concentrate.  
The update operations
concentrate to that region so that the nodes are expected to change
levels frequently. This means that the impact of invalidation 
recovery factor can be seen while the node is at an level. 
For this impacting
factor, for zipf distribution, we split the events to virtual nodes 
evenly, $\forall h, \rateCasEI{k}{h}=\rateCasInt{k}$.


}

\section{Experimental Evaluation\pp{\vsme}}
\label{sec:expMain}

\pp{
\begin{figure}

\begin{minipage}{.32\linewidth}
\includegraphics[width=\textwidth]{./results/pdfs/ht-1_uni-small.pdf}
\caption{\dshttit with load factor 2\label{fig:ht1uni-small}}
\end{minipage}%
\begin{minipage}{.32\linewidth}
\includegraphics[width=\textwidth]{./results/pdfs/skl_uni-small.pdf}
\caption{\dssltit\label{fig:skluni-small}}
\end{minipage}%
\begin{minipage}{.32\linewidth}
\includegraphics[width=\textwidth]{./results/pdfs/bst_uni-small.pdf}
\caption{\dsbttit\label{fig:bstuni-small}}
\end{minipage}%
\end{figure}
}

We validate our model through a set of well-known \lf \sds designs,
mentioned in the previous section. We stress the model with various
access patterns and number of threads to cover a considerable amount
of scenarios where the \dss could be exploited. 
For the key selection process, we vary the key ranges and the
distribution: from uniform (\ie the probability of targeting any key
is constant for each operation) to zipf (with $\alpha=1.1$ and the
probability to target a key decreases with the value of the key).
Regarding the operation types, we start with various balanced update
ratios, \ie such that the ratio of \ins (among all operations) equals
the ratio of \del. Then, we also consider asymmetric cases where the
ratio of \ins and \del operations are not equal,
which changes the expected size of the \shds.
\pp{Due to space constraints, most of the experimental evaluation is 
presented in~\cite{long-version}.\vsbi{}}

\subsection{Setting\pp\vssm}

We have conducted experiments on an Intel ccNUMA workstation
system. The system is composed of two sockets, each containing eight
physical cores. The system is equipped with Intel Xeon E5-2687W v2
CPUs. Threads are pinned to separate cores. One can observe the
performance change when number of threads exceeds $8$, which activates
the second socket.

In all the figures, y-axis provides the throughput, while the number
of threads is represented on x-axis. The dots provide the results of
the experiments and the lines provide the estimates of our
framework. The key range of the \ds is given at the top of
the figures and the percentage of update operations are color coded.

We instantiate all the algorithm and architecture related latencies, 
following the methodologies described in~\cite{intel-bench,measure}.
In line with these studies, we observed that the latencies of \latcas
and \latrec are based on thread placement. We distinguish two
different costs for \latcas according to the number of active
sockets.
Similarly, given a thread accessing to a node \node{i}, the recovery
latency is low (resp. high), denoted by \latrecSl (resp. \latrecSh),
if the modification has been performed by a thread that is pinned to
the same (resp. another) socket. Before the execution, we measure both
\latrecSl and \latrecSh, and instantiate \latrec with the average
recovery latency, computed in the following way for a two-socket chip.
For $s \in \{ 1, 2\}$, we denote by \nbthS{s} the number of threads
that are pinned to socket numbered $s$. By taking into account all combinations, we have
$ \latrec =  ( \nbthS{1} (\nbthS{1} \latrecSl + \nbthS{2} \latrecSh) + \nbthS{2} (\nbthS{2} \latrecSl + \nbthS{1} \latrecSh))/ \nbth^2$.
Since $\nbth = \nbthS{1} + \nbthS{2}$, we obtain 
$\latrec = \latrecSl + 2(\nbthS{1}/\nbth)(1 - \nbthS{1}/\nbth)(\latrecSh - \latrecSl)$.


For the \ds implementation, we have used ASCYLIB
library~\cite{trigonakis} that is coupled with an epoch based memory
management mechanism which introduces negligible latency.\pp\vsmb

\subsection{\sdstits\pp\vsme}

\pr{
In Figure~\ref{fig:ht1uni-small}, \ref{fig:skluni-small}
and~\ref{fig:bstuni-small}, we provide the results for the \dsht,
\dssl and \dsbt where the key selection is done with the uniform
distribution. We see that our framework provides satisfactory
estimations for all cases. One can see that the performance drops as
the update rate increases, due to the impact of \castxts related
factors. This impact magnifies with the activation of the second
socket (more than 8 threads) since the event becomes more costly.
When there is no update operation, the performance scales linearly
with the number of threads. As a results of disjoint access
parallelism, we observe a similar behaviour for the cases with
updates. See~\cite{long-version} for a more comprehensive
set of experiments.\vsbi
}
{


\subsubsection{\dslltit}

Figures~\ref{fig:lluni}, \ref{fig:llzipf} and~\ref{fig:llasym}
illustrates the results for the \lf \dsll, for various scenarios that
are described before (see~\ref{sec:expMain}).  For the majority of the
cases, our estimates look reasonable except the cases where the cache
miss ratios are underestimated due to the limitations of the
independent reference assumption. The assumption in the Independent
Reference Model is that the event at the different nodes are
independent Poisson Processes. A \dsll operation reveals a high
degree of spatial locality, implying that the Poisson Processes for
the different nodes are indeed dependent. This inaccuracy illustrates
indeed the importance of the accurate estimations for the event
latencies that are needed to capture the practical performance.

\begin{figure}[!htb]
\begin{center}
\includegraphics[width=\textwidth]{./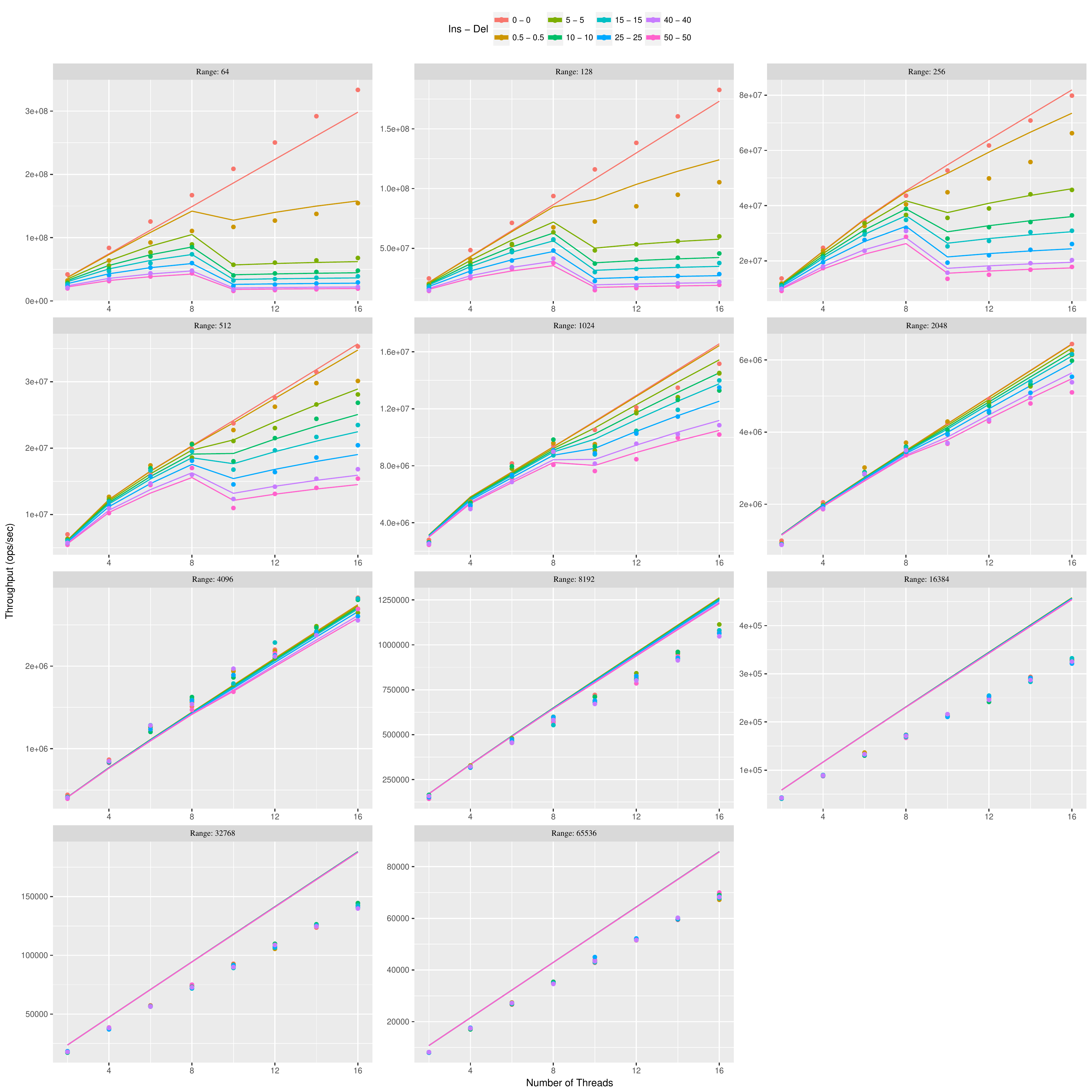}
\end{center}
\caption{LL Uniform distribution for key selection\label{fig:lluni}}
\end{figure}

\begin{figure}[!htb]
\begin{center}
\includegraphics[width=\textwidth]{./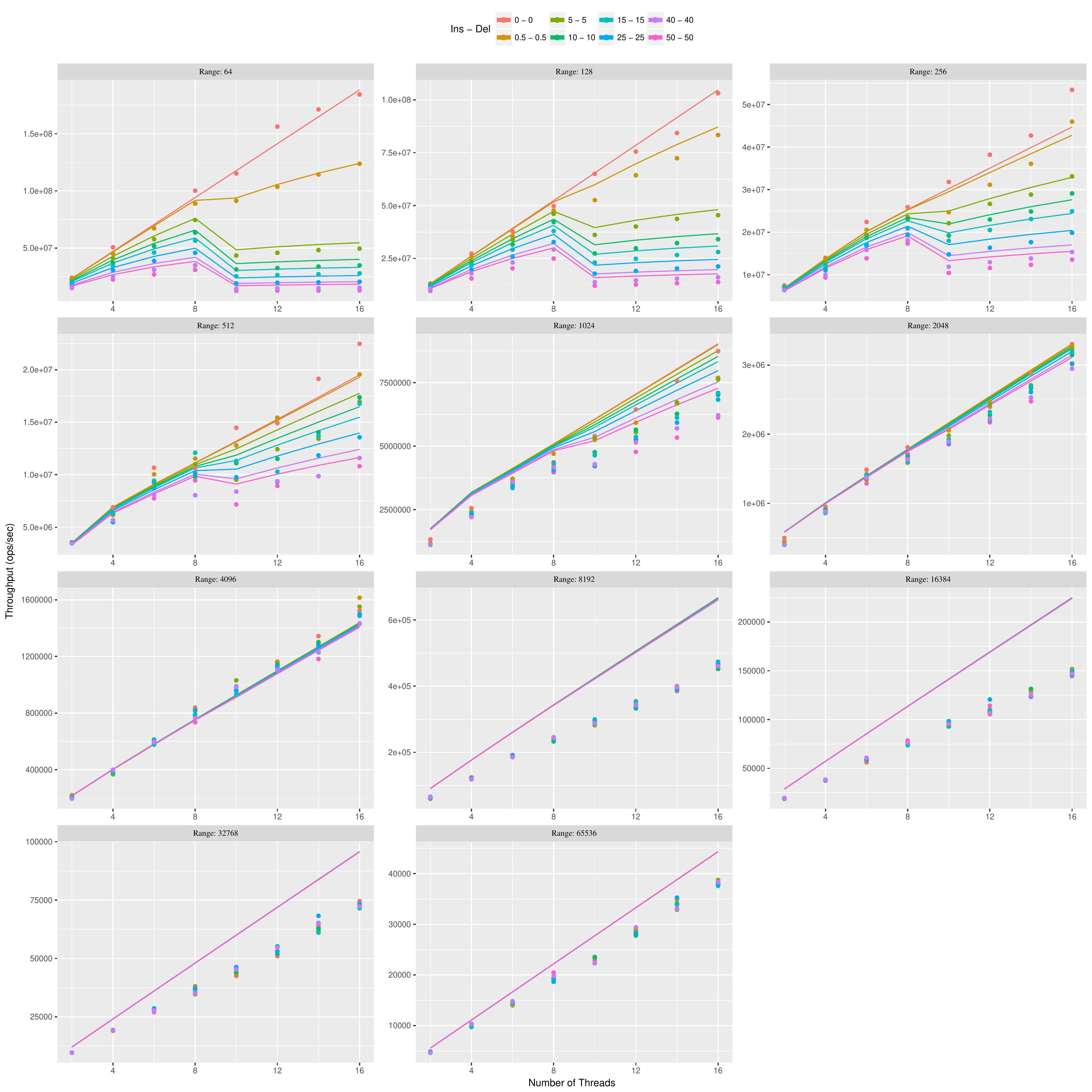}
\end{center}
\caption{LL Zipf distribution for key selection\label{fig:llzipf}}
\end{figure}

\begin{figure}[!htb]
\begin{center}
\includegraphics[width=\textwidth]{./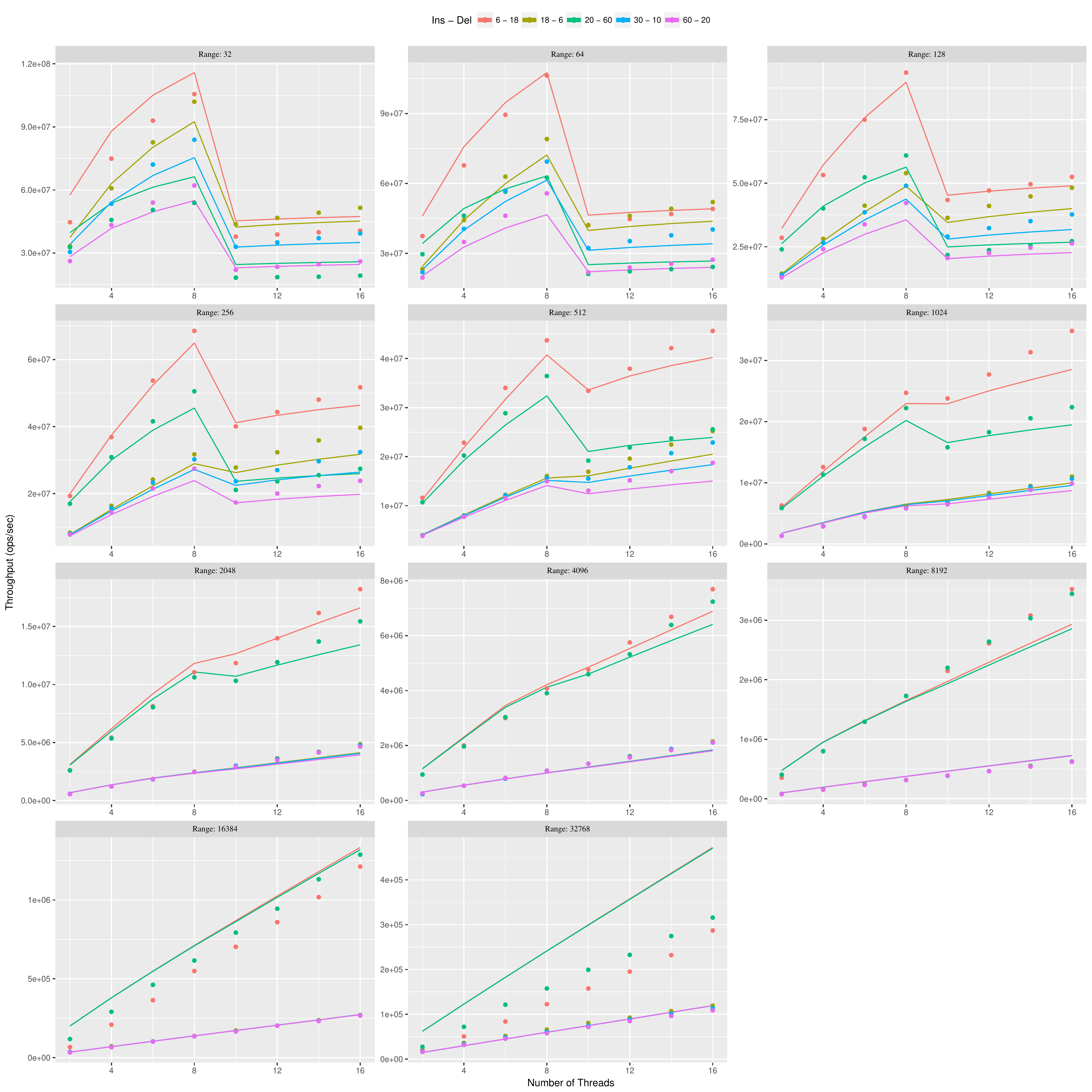}
\end{center}
\caption{LL asymmetric update rates, uniform distribution for key selection\label{fig:llasym}}
\end{figure}

\clearpage
\subsubsection{\dshttit}

Figure~\ref{fig:ht1uni}, \ref{fig:ht2uni} and~\ref{fig:ht4uni}
illustrates the results for the \lf \dsht with different load factor
values (number of slots per bucket) where the key selection process is
initiated with uniform distribution. Figure~\ref{fig:ht1zipf} shows
the results for a case where the selection process follows zipf
distribution.  Lastly, Figure~\ref{fig:ht2asym} reveals the results
for asymmetric delete and insert operation ratios where the key
selection is done with uniform distribution.  For the \dsht, our
estimates are able to capture the real behavior almost for all cases
with satisfactory precision.


\begin{figure}[!htb]
\begin{center}
\includegraphics[width=\textwidth]{./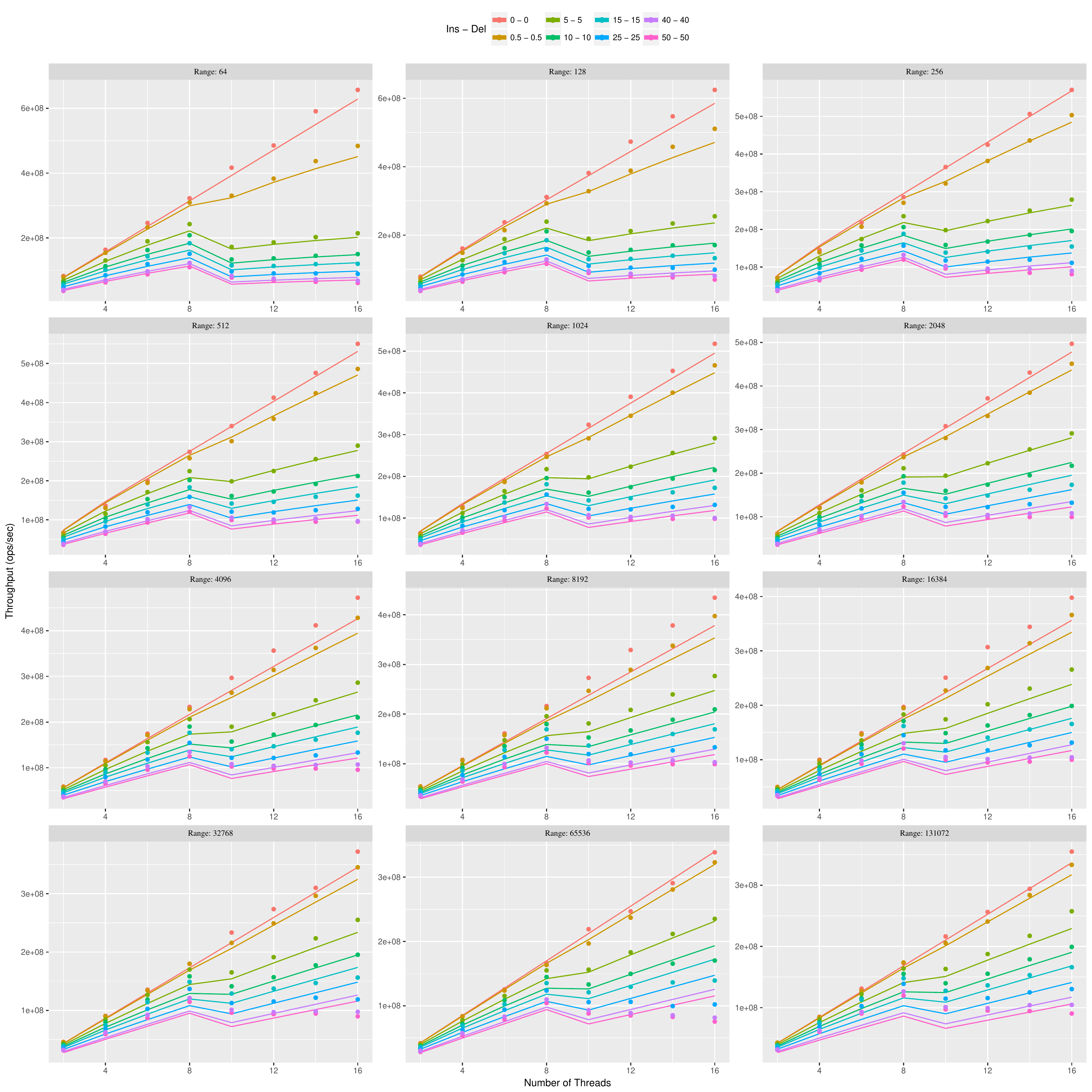}
\end{center}
\caption{HT Uniform distribution for key selection, with load factor=2\label{fig:ht1uni}}
\end{figure}

\begin{figure}[!htb]
\begin{center}
\includegraphics[width=\textwidth]{./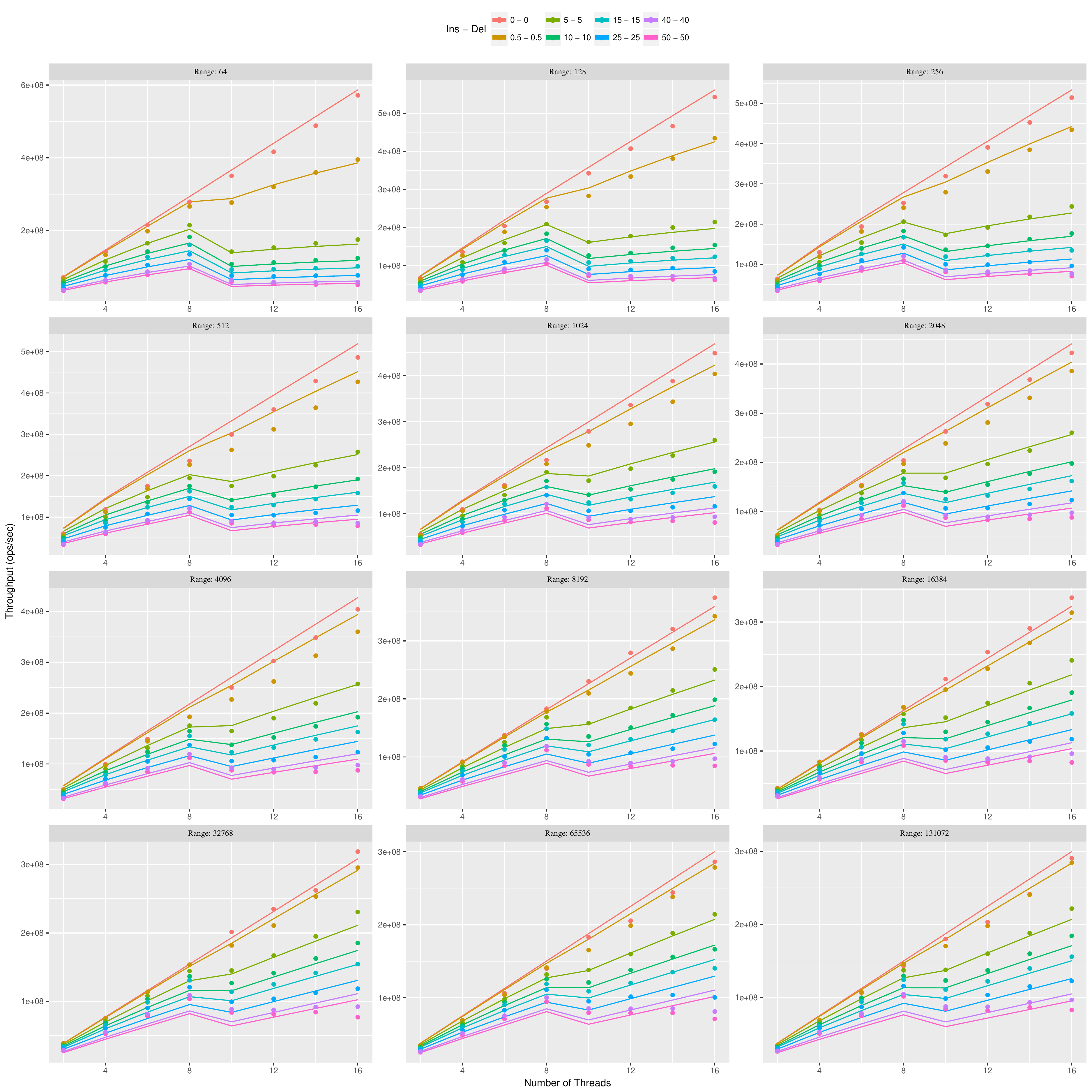}
\end{center}
\caption{HT Uniform distribution for key selection, with load factor=4\label{fig:ht2uni}}
\end{figure}

\begin{figure}[!htb]
\begin{center}
\includegraphics[width=\textwidth]{./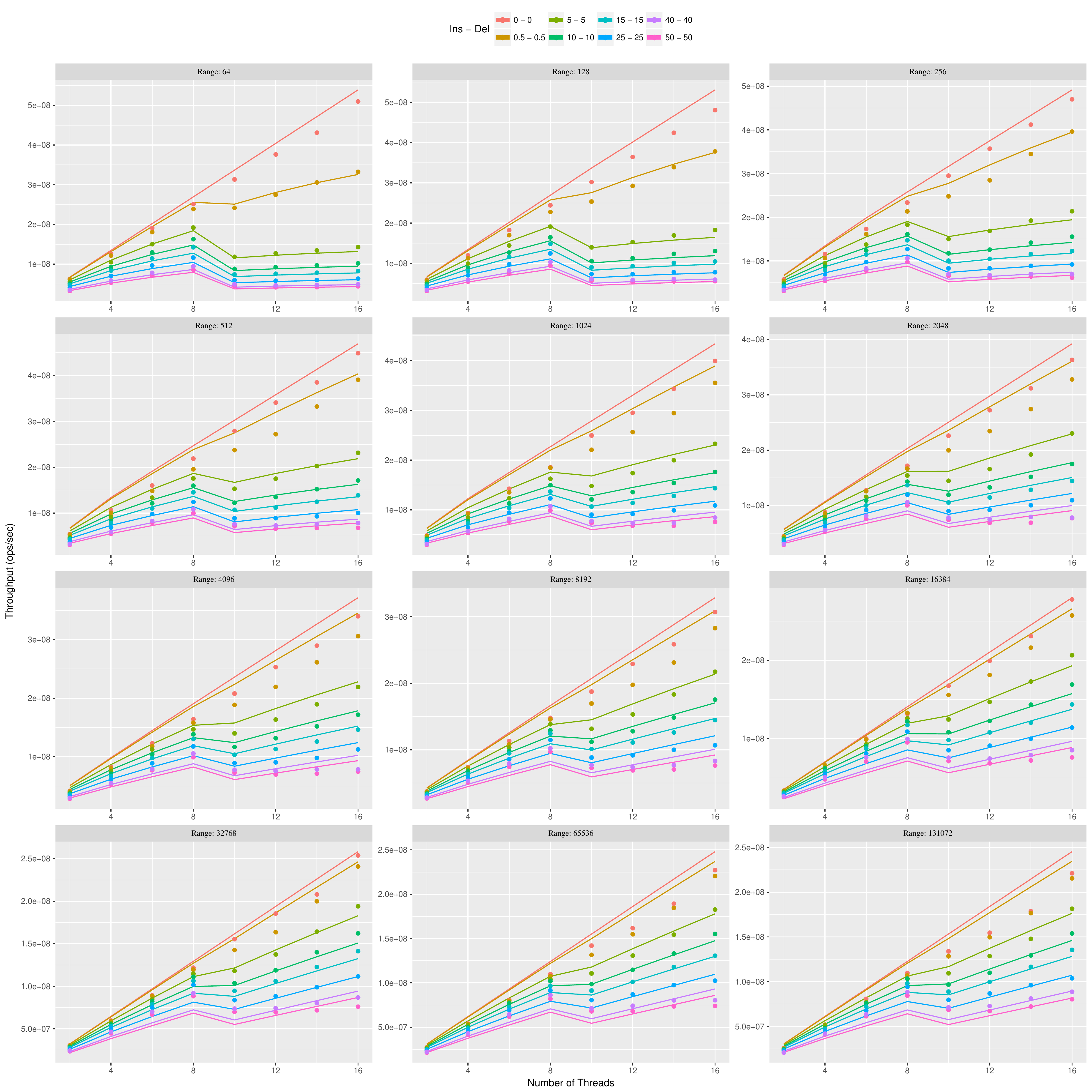}
\end{center}
\caption{HT Uniform distribution for key selection, with load factor=8\label{fig:ht4uni}}
\end{figure}

\begin{figure}[!htb]
\begin{center}
\includegraphics[width=\textwidth]{./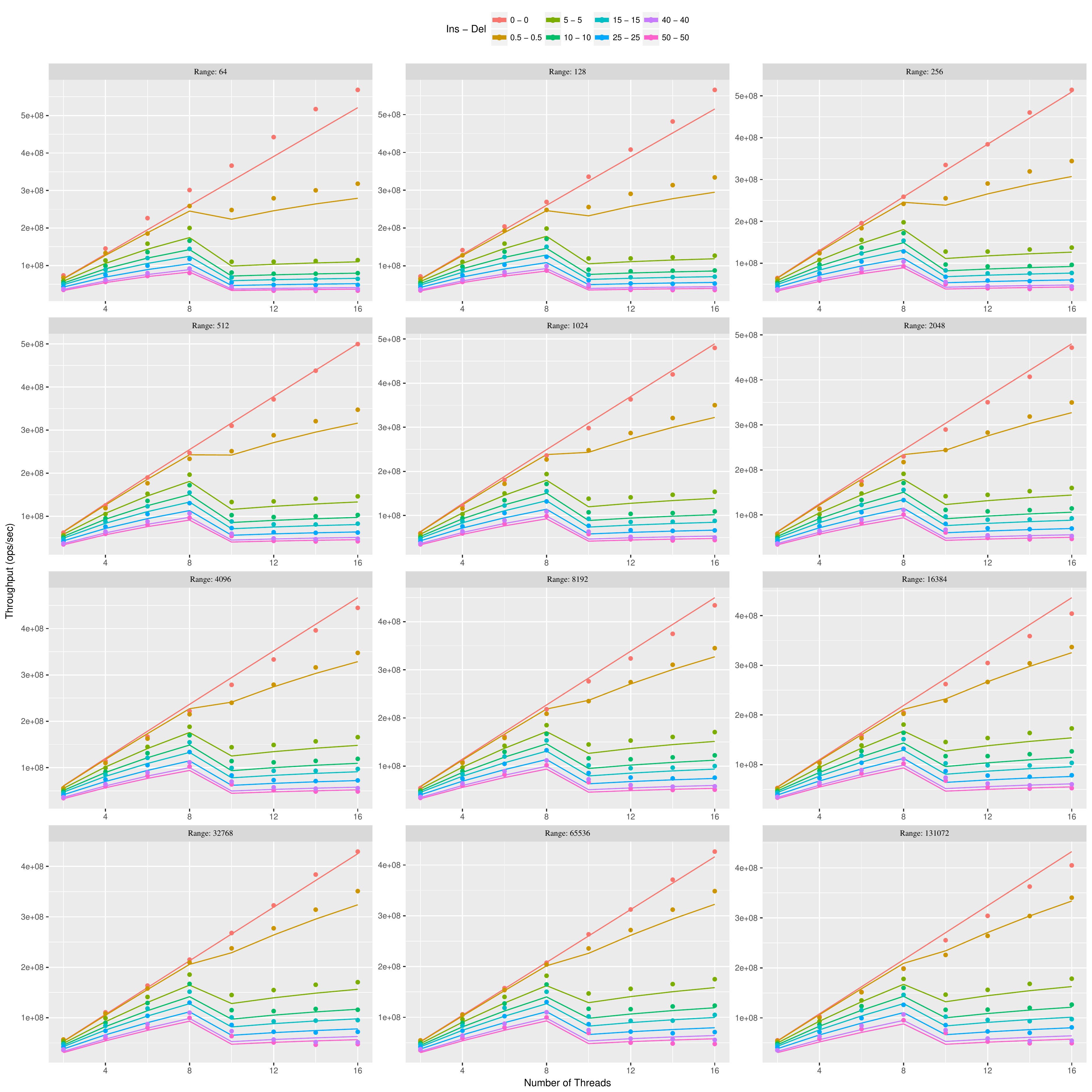}
\end{center}
\caption{HT Zipf distribution for key selection, with load factor=2\label{fig:ht1zipf}}
\end{figure}

\begin{figure}[!htb]
\begin{center}
\includegraphics[width=\textwidth]{./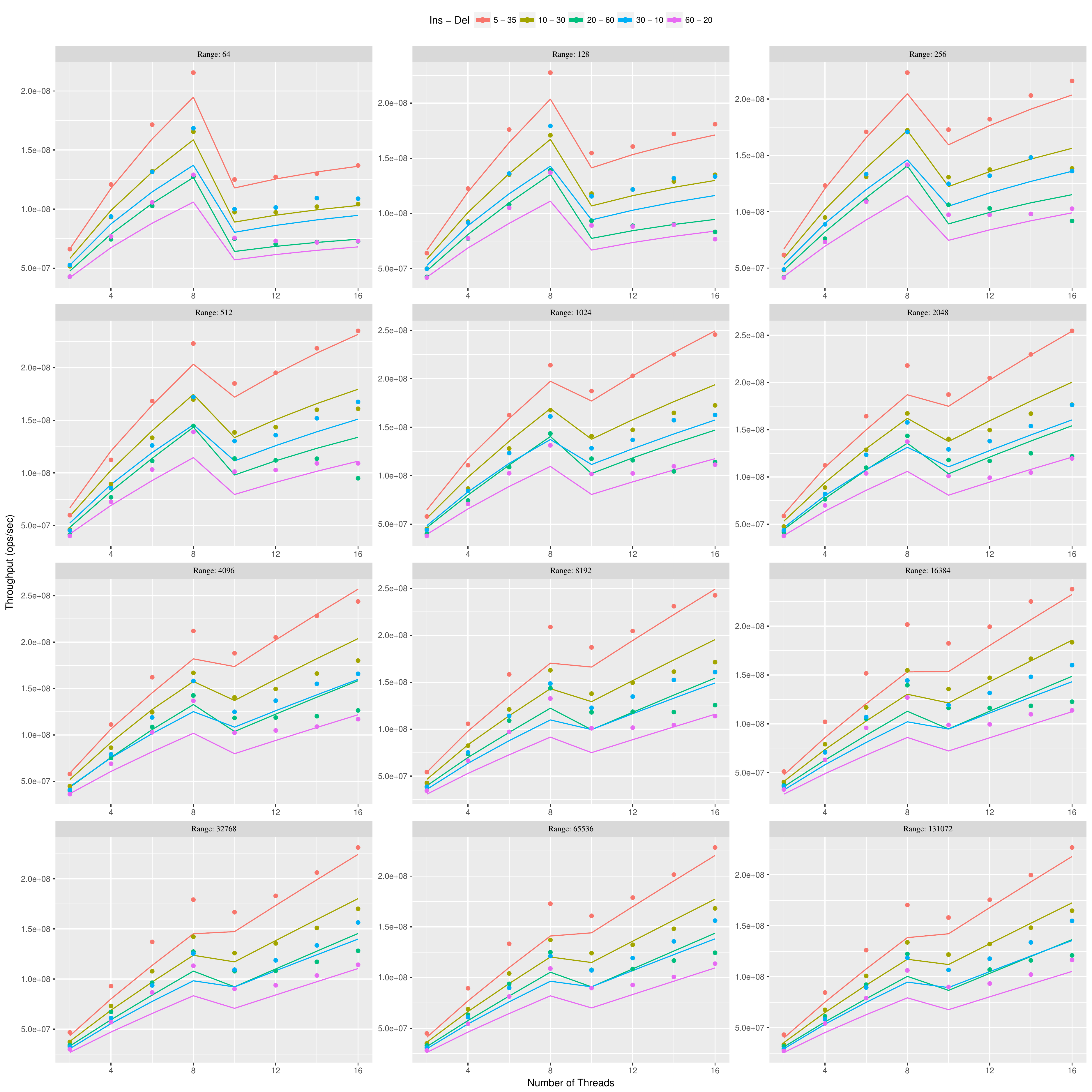}
\end{center}
\caption{HT asymmetric update operations, Uniform distribution for key selection, with load factor=4\label{fig:ht2asym}}
\end{figure}

\clearpage

\subsubsection{\dssltit}

Figure~\ref{fig:skluni}, \ref{fig:sklzipf} and~\ref{fig:sklasym}
illustrates the results for the \lf \dssl, for various scenarios that
are described before (see~\ref{sec:expMain}), where the estimations
often closely follow the real behavior.  In Figure~\ref{fig:sklasym},
we observe that our estimation show some deviation from the real
behavior, for the cases where key range is small and \del ratio is
higher than \ins.  For such cases, the expected size of \sds tends to
be very small which might lead to inaccuracies.

\begin{figure}[!ht]
\begin{center}
\includegraphics[width=\textwidth]{./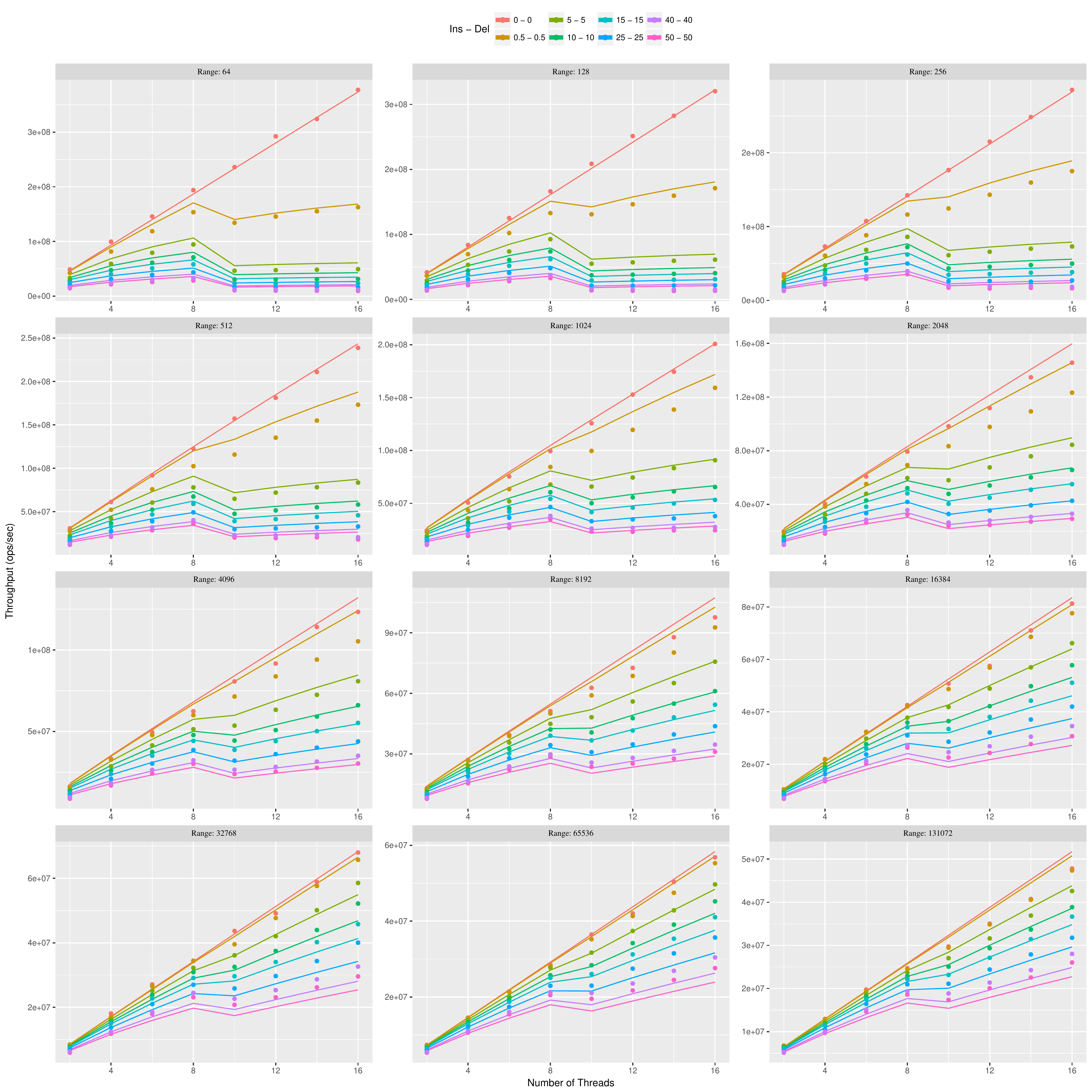}
\end{center}
\caption{Skiplist Uniform distribution for key selection\label{fig:skluni}}
\end{figure}

\begin{figure}[!ht]
\begin{center}
\includegraphics[width=\textwidth]{./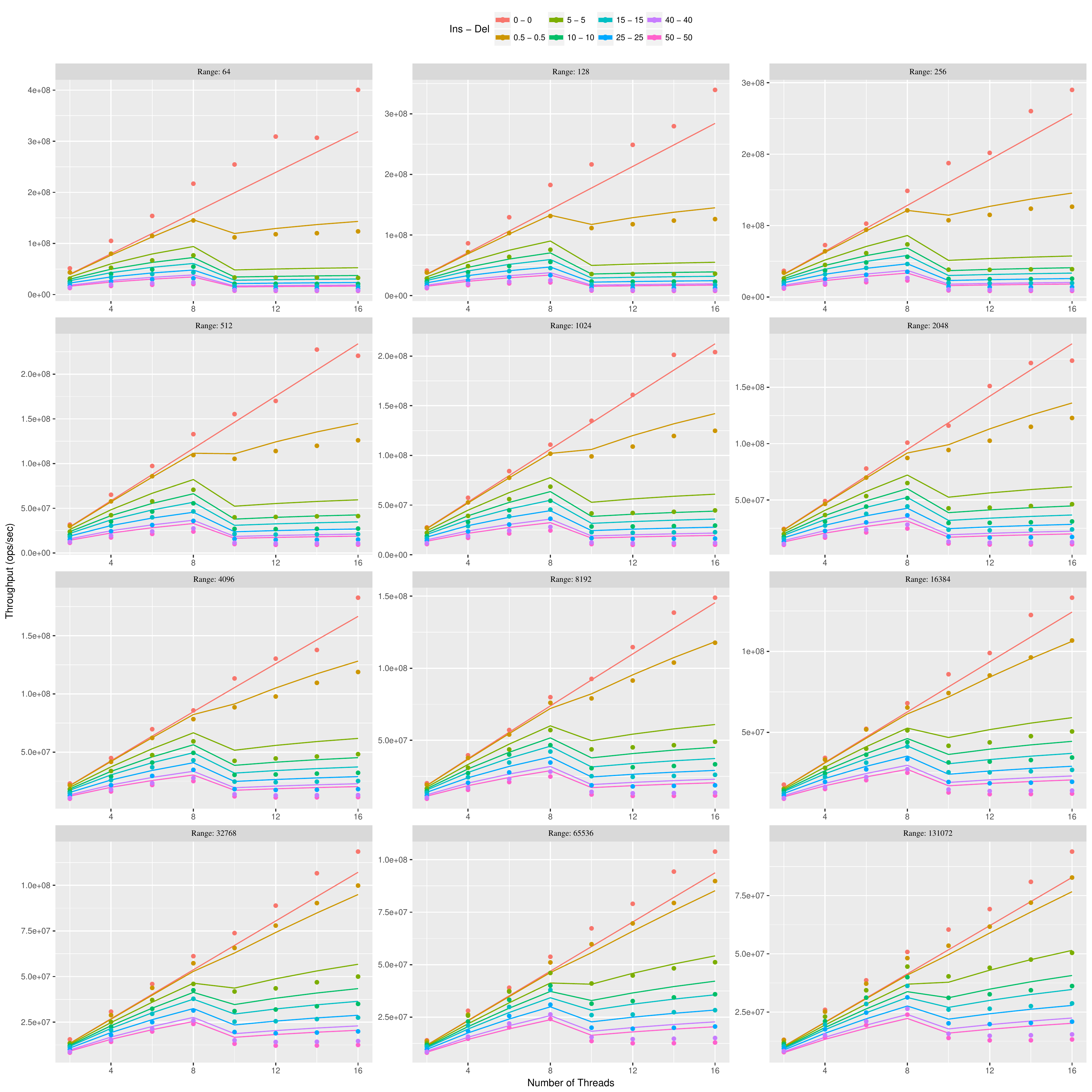}
\end{center}
\caption{Skiplist Zipf distribution for key selection\label{fig:sklzipf}}
\end{figure}

\begin{figure}[!ht]
\begin{center}
\includegraphics[width=\textwidth]{./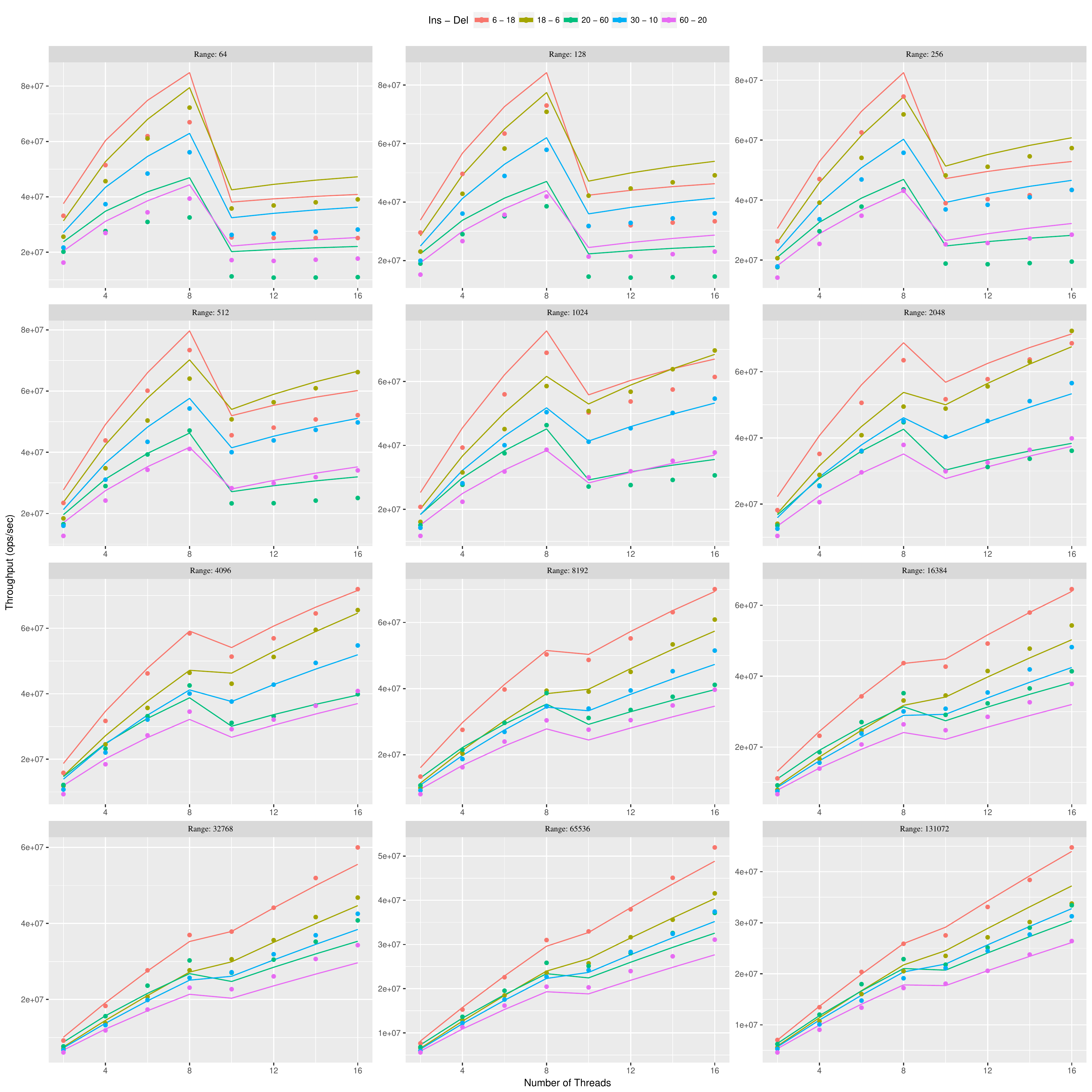}
\end{center}
\caption{Skiplist asymmetric update rates, uniform distribution for key selection\label{fig:sklasym}}
\end{figure}

\clearpage
\subsubsection{\dsbttit}

Figure~\ref{fig:bstuni}, \ref{fig:bstzipf} and~\ref{fig:bstasym}
illustrates the results for the \dsbt, for various scenarios that are
described before (see~\ref{sec:expMain}).  Here, we observe that our
estimations often closely follow the real behaviour.

\begin{figure}[!ht]
\begin{center}
\includegraphics[width=\textwidth]{./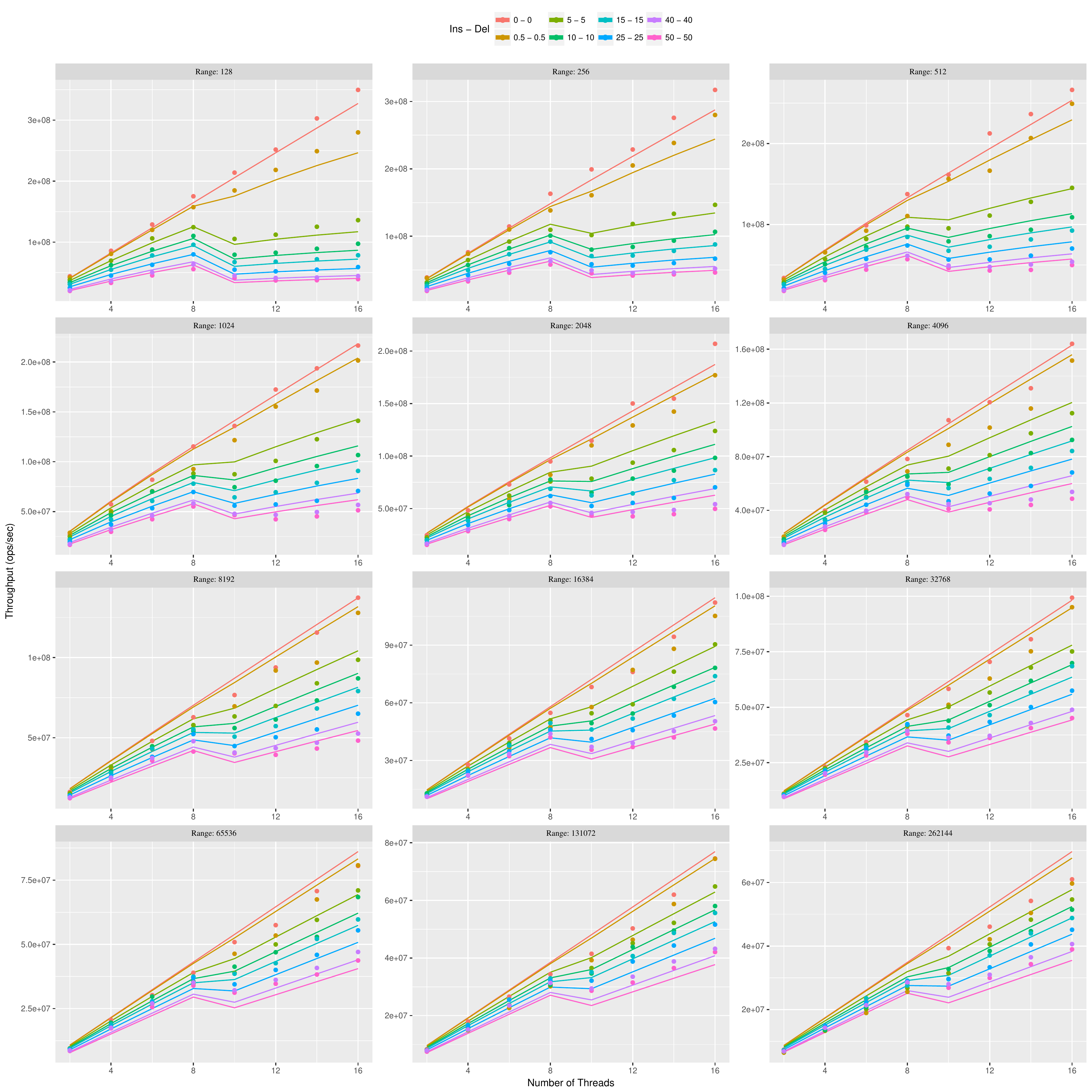}
\end{center}
\caption{BST Uniform distribution for key selection\label{fig:bstuni}}
\end{figure}

\begin{figure}[!ht]
\begin{center}
\includegraphics[width=\textwidth]{./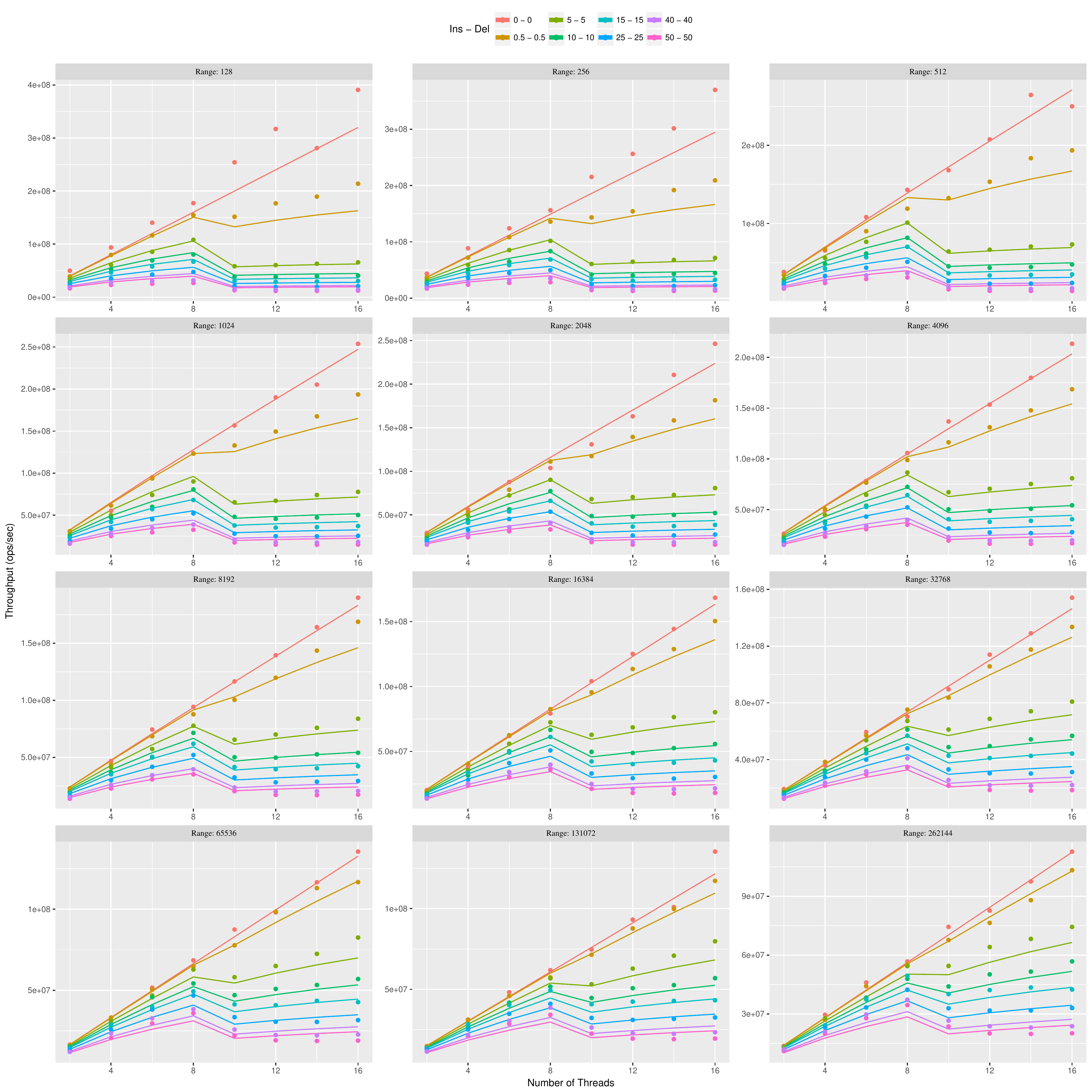}
\end{center}
\caption{BST Zipf distribution for key selection\label{fig:bstzipf}}
\end{figure}

\begin{figure}[!ht]
\begin{center}
\includegraphics[width=\textwidth]{./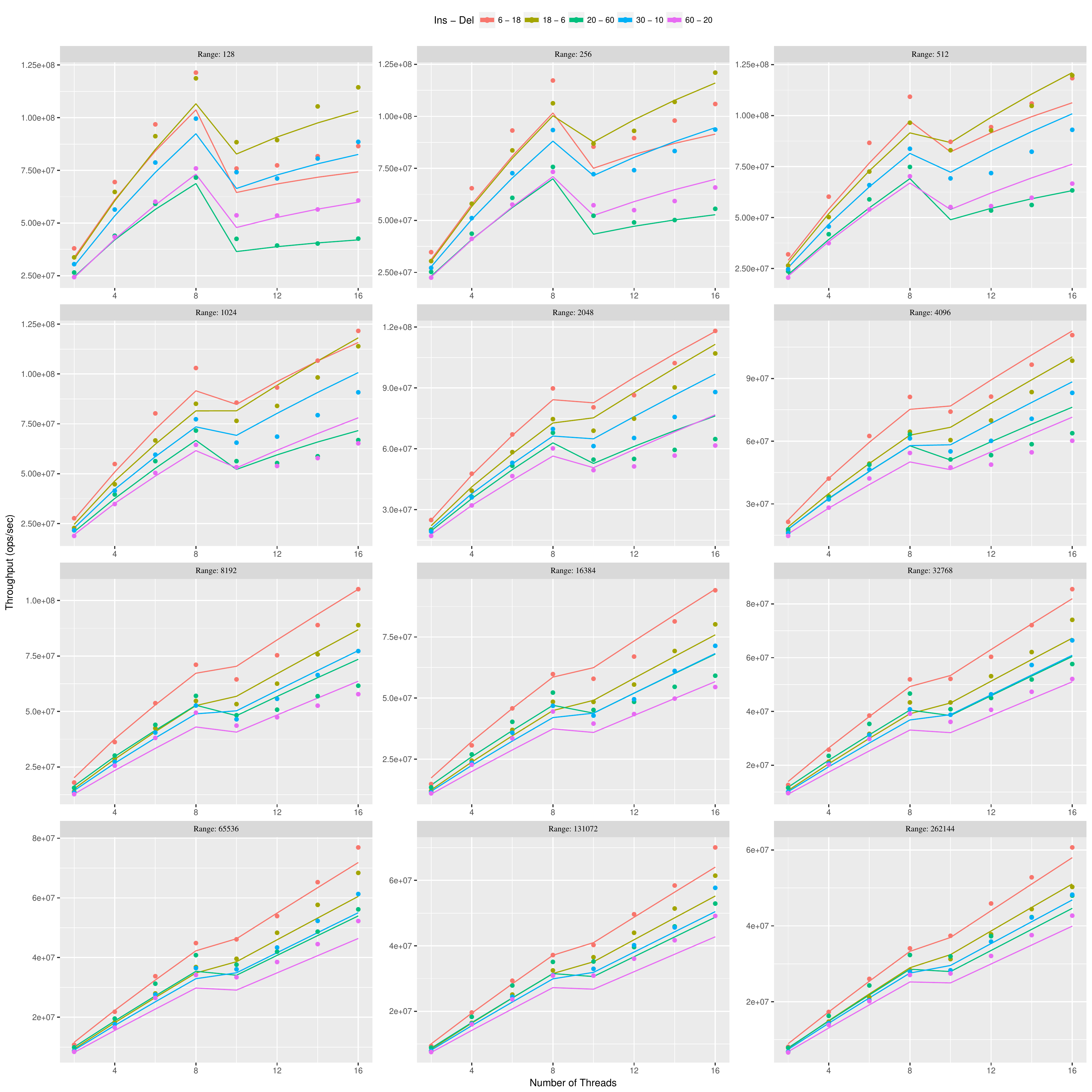}
\end{center}
\caption{BST asymmetric update rates, uniform distribution for key selection\label{fig:bstasym}}
\end{figure}

\clearpage

}

%

\section{Applications: to Pad or not to Pad\pp\vssm}
\label{sec:padMain}

\newcommand{\rateCasAddi}[1]{\ema{\lambda_{#1}^{cas, addi}}}
\newcommand{\rateReadAddi}[1]{\ema{\lambda_{#1}^{read, addi}}}


In a non-padded (packed) configuration,
multiple nodes are packed together into a single cacheline. This
implies that a modification done at a node, could lead to a
coherence cache miss in the traversal of the other nodes.
It is often referred as {\it false sharing}.
On the other
hand, the packed configurations benefit from their compact
representation by reducing the capacity misses.

Until now, we have assumed that the nodes are padded. Here, we extend
the framework to estimate the performance of a packed configuration to
facilitate the tuning process.
In such a setting, where the nodes are inserted and deleted
repeatedly, \node{i} can be alone in its cacheline with the old
versions of a set of nodes that are not present any more in the
\ds. Alternatively, it might be mapped to the same cacheline with some
number of active nodes that are present in the \sds and they all
together contribute to the event rates that are originating from the
same cacheline.

\newcommand{\nbs}{\ema{S}}
\newcommand{\pgsi}{\ema{\mathit{pageSize}}}


Firstly, we assume that at most two nodes can be packed to a cacheline
(which is the case for the \dss that we consider) and we denote the
total number of slots for the node allocations with $\nbs = 2 \nbpg
\pgsi/cacheLineSize$ (recall that \nbpg is the number of pages that are used
by the \shds). We assume that the nodes are assigned uniformly to the slots; 
given that \node{i} and \node{j} are present in the \shds, \node{j} is
mapped to the same cacheline as \node{i} with probability: $1/(\nbs-1)$.
With the linearity of expectation, the expected additional event rate
for the cacheline that \node{i} is mapped to can be given by the sum
of event rates originating from different nodes.
\rateRead{i} and \rateCas{i} provides the event rates for \node{i},
and we introduce an additive factor to represent the average event
rate contributions of other nodes to the cacheline of \node{i}:
\rateReadAddi{i} for \Read events, and \rateCasAddi{i} for \castxts
events.
\node{j} contribute to the \Read event rates with \rateRead{j} if
\node{j} and \node{i} are assigned to the same cacheline, which
happens with probability $\prespro{j}/(\nbs-1)$. Then, we have:
$\rateReadAddi{i} = \sum_{j=1, j\neq i}^{\nbn} \rateRead{j}
(\frac{\prespro{j}}{\nbs-1})$ and $\rateCasAddi{i} = \sum_{j=1, j\neq
  i}^{\nbn} \rateCas{j} (\frac{\prespro{j}}{\nbs-1})$.

\pp{
\begin{wrapfigure}[14]{l}{.5\textwidth}
\includegraphics[width=.5\textwidth]{./results/pdfs/ht1pack_uni-small.pdf}
\caption{Packed nodes (dots: measure, solid: estimate) and padded (dashed: measure) for HT}
\label{fig:htpacksmall}
\end{wrapfigure}
}

\pr{\noindent{}In~\cite{long-version}, we show how to integrate these
  additive factors to each performance impacting factor and inspect
  them. The packing
  has a positive impact on
  cache misses. A less expected outcome is that we
  do not observe any negative trace of the packing, even for the cases
  under high update rates, because the additive contributions scale
  and cancel each other in the ratio of the recovery factor.
  We depict the results in Figure~\ref{fig:htpacksmall} for \dshts with different configurations:
  padded nodes (dashed lines), packed nodes
(dots) and our estimations for the packed nodes (lines).\vsbi
}
{


With the node packing, we obtain additive components for \castxts and
\Read events.
\cut{
given that \node{i} and \node{j} are present in the \shds, \node{j} is
mapped to the same cacheline as \node{i} with probability: $1/(M-1)$.
, where probability of \node{i} to be with
a node ${{{M-2}\choose{k-2}}/ {{M-1}\choose{k-1}}} = \frac{k-1}{M-1}$
and $\frac{1}{k-1}$ is the probability of \node{j} is the one.
}
Now, we show the
integration of these additive components into the process.

\subsubsection{Cache Misses}

To begin with, packing would have a positive impact on the cache misses as it
would increase the characteristic time ($T$) of the cache, that is the
duration for $C$ unique cacheline references. To recall, \node{i} could
contribute to this $C$ references only if $\node{i} \in \DS$ and we have embedded this effect
into the process by introducing the random variable $P_i$ (see~\ref{sec:cachemiss}).
With the packing, this contribution becomes less probable, as the contribution would occur
only if the reference to \node{i} occurs before the references to the other node that is
mapped to the same cacheline with \node{i}. Otherwise, the reference to \node{i} would be
ineffective for the characteristic time. To recall, the characteristic time is the solution
of the following equation:

\[X^{cache}(t)=\sum_{j=1}^{N} P^{pack}_i\mathbf{1}_{0<\obj{j}\leq t}\]

where $P^{pack}_i$ is the variable that we modify in the process,

\[
P^{pack}_i=
\begin{cases}
p_i (\rateRead{i} / (\rateRead{i} + \rateReadAddi{i})) , & \text{if } P^{pack}_i=1\\
1-p_i (\rateRead{i} / (\rateRead{i} + \rateReadAddi{i})), & \text{if } P^{pack}_i=0
\end{cases}
\]


Having obtained the characteristic time, we involve the additive factor to estimate the
cache miss rate of \node{i}. This is because a reference leads to a cache miss (in a cache
of size $C$) only if the previous $C$ cacheline references do not include the cacheline
that \node{i} is mapped to.

\[ Hit_i^{cache} = 1 - e^{(-(\rateRead{i} + \rateReadAddi{i}))/\nbth) T} \]

\subsubsection{Page Misses}

Secondly, packing can improve the TLB cache hit ratios. This simply happens because it
reduces the total number of pages that the \sds spans. To recall, the total number of pages
is a parameter of the process that computes the expected latency for the impacting factor
($Hit_i^{tlb}$). Packing do not influence the process, so we just need to update the
value of the parameter.

\subsubsection{CAS Execution}

On the downside, packing is expected to reduce the performance through
the \castxts related impacting factors. To recall, \casrec{i}
represents the expected latency per traversal at \node{i} for
executing \castxts instructions targeted to \node{i}. This factor is
proportional to the throughput, and packing do not change the
probability of executing a \castxts at \node{i} while traversing it. So,
packing does not have a direct impact on this component.

\subsubsection{Invalidation Recovery}

The most important performance impacting \castxts related factor is
the invalidation recovery.  For each traversal of \node{i}, there
exist a possibility to pay for a coherence cache miss due to the
previous \castxts executions at the cacheline, that \node{i} is mapped
to.  To compute the probability of a coherence miss, one needs to
consider the previous events on the cacheline. The traversal (by a
thread at \node{i}) would not experience the coherence miss if the
previous traversal (on the cacheline that \node{i} is mapped to) of
the same thread is not followed by \castxts event of another
thread. Thus, we consider the additive factor for both type of events
and modify the process as follows:

\[ \proba{\text{Coherence Miss on traversal of \node{i}}} =
\frac{(\rateCas{i} +\rateCasAddi{i}) (\nbth-1)}{(\rateCas{i} + \rateCasAddi{i})\nbth + (\rateRead{i} + \rateReadAddi{i})}  \]

\subsubsection{Stall Time}

Finally, packing has a potential to increase the ratio of time that
the cacheline (that \node{i} is mapped to) is blocked due to \castxts
executions. We simply update the process by involving the additive
factor:

\[ \expec{\cassta{i}} = (\rateCas{i} + \rateCasAddi{i})(\nbth-1)\latcas\frac{\latcas}{2} \]

\subsubsection{Experiments}

In Figures~\ref{fig:llpack} and~\ref{fig:htpack}, the results are depicted
for configurations with padding (dashed lines), packing(dots) and our
packing based estimations(lines), for the linked list and hash
table (nodes for tree and skiplist is too large to be
packed in a single cacheline or already packed). The key selection is done
with the uniform distribution.
For almost every case, we observe that the packing increases the performance
and the performance do not degrade due to the false sharing, even when the update rate is high.
The stall time (\expec{\cassta{i}}) often is not significant and the
invalidation recovery (\expec{\casrec{i}}) dominates the performance when there are
update operations. As an observation, the latency induced by this factor do
not increase with packing, presumably because:

\[ \frac{(\rateCas{i} +\rateCasAddi{i}) (\nbth-1)}{(\rateCas{i} + \rateCasAddi{i})\nbth + (\rateRead{i} + \rateReadAddi{i})} \approx \frac{\rateCas{i}(\nbth-1)}{\rateCas{i}\nbth + \rateRead{i}} \]

This might explain us the reason why the false sharing do not degrade the performance,
as opposed to one might expect. However, the cache and page misses influence the
performance positively, as expected.

Our estimations show that these effects are captured by our framework. We observe a slight
increase in almost all the curves that is coupled with a slight increase in our estimations,
due to the reduced capacity cache misses.

\begin{figure}[h!]
\begin{center}
\includegraphics[width=\textwidth]{./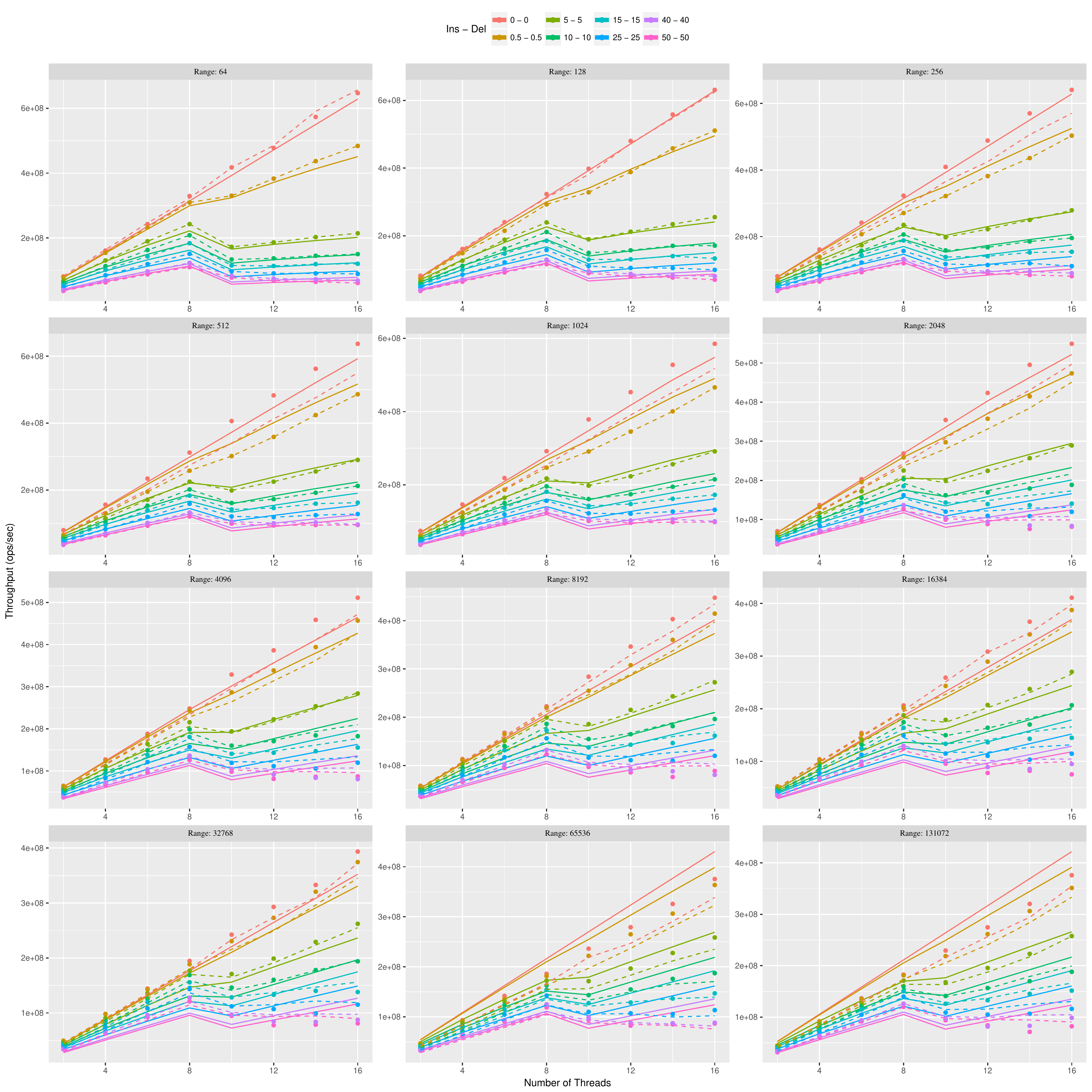}
\end{center}
\caption{Packed nodes for Hash Table, with load factor=2\label{fig:htpack}}
\end{figure}

\begin{figure}[h!]
\begin{center}
\includegraphics[width=\textwidth]{./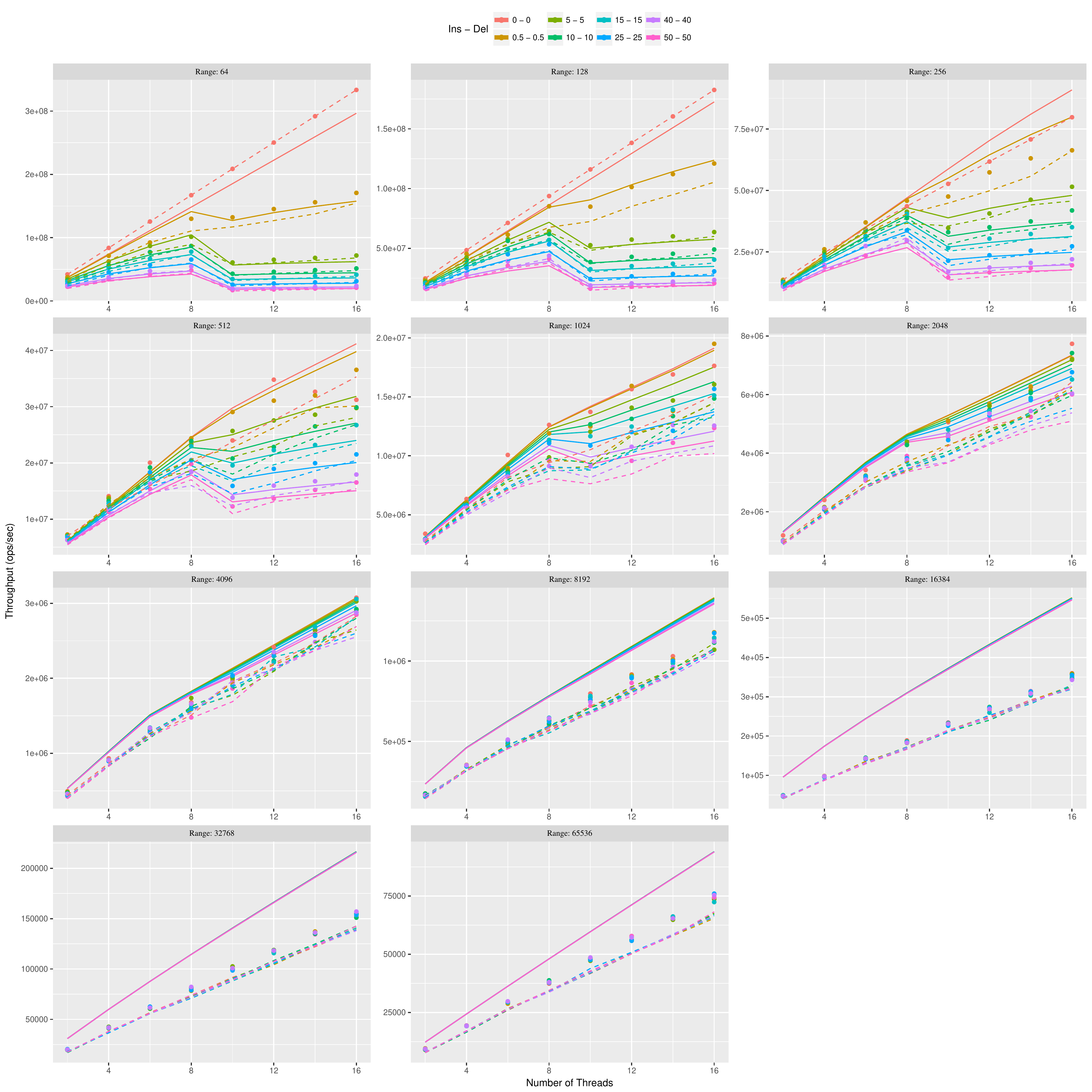}
\end{center}
\caption{Packed nodes for Linked List\label{fig:llpack}}
\end{figure}

}

\section{Conclusion\pp\vsme}

In this paper, we have modelled and analysed the performance of \sdss
under a stationary and memoryless access pattern. 
We have distinguished two types of events that occur in the \sds nodes and 
have modelled the arrival of events with Poisson processes. The properties of the
Poisson process allowed us to consider the thread-wise and system-wise interleaving of
events which are crucial for the estimation of the throughput. For the validation, we 
have used several fundemental \lf \sdss.

As a future work, it would be of interest to study to which extent the
application workload can be distorted while giving satisfactory
results. Putting aside the non-memoryless access patterns, the
non-stationary workloads such as bursty access patterns, could be
covered by splitting the time interval into alternating phases and
assuming a stationary behaviour for each phase.  Furthermore, we
foresee that the framework can capture the performance of lock-based
\sdss and also can be exploited to predict the energy efficiency of
the concurrent \sdss.

\rr{\newpage}
\bibliographystyle{plainurl}
\bibliography{./bibliography/longhead,./bibliography/biblio}

\begin{thebibliography}{10}

\bibitem{poisson3}
Richard Arratia, Larry Goldstein, and Louis Gordon.
\newblock Poisson approximation and the chen-stein method.
\newblock {\em Statistical Science}, 5(4):403--424, 1990.

\bibitem{measure}
Vlastimil Babka and Petr Tuma.
\newblock Investigating cache parameters of x86 family processors.
\newblock In {\em {SPEC} Benchmark Workshop}, volume 5419 of {\em Lecture Notes
  in Computer Science}, pages 77--96. Springer, 2009.

\bibitem{poisson1}
A.D. Barbour and T.C. Brown.
\newblock Stein's method and point process approximation.
\newblock {\em Stochastic Processes and their Applications}, 43(1):9 -- 31,
  1992.

\bibitem{poisson-nonuni}
Timothy~C. Brown, Graham~V. Weinberg, and Aihua Xia.
\newblock Removing logarithms from poisson process error bounds.
\newblock {\em Stochastic Processes and their Applications}, 87(1):149 -- 165,
  2000.

\bibitem{bapi}
Bapi Chatterjee, Nhan~Nguyen Dang, and Philippas Tsigas.
\newblock Efficient lock-free binary search trees.
\newblock In {\em Proceedings of the ACM Symposium on Principles of Distributed
  Computing (PoDC)}, pages 322--331. {ACM}, 2014.

\bibitem{poisson2}
Louis H.~Y. Chen and Adrian Röllin.
\newblock Approximating dependent rare events.
\newblock {\em Bernoulli}, 19(4):1243--1267, 09 2013.
\newblock URL: \url{https://doi.org/10.3150/12-BEJSP18}, \href
  {http://dx.doi.org/10.3150/12-BEJSP18} {\path{doi:10.3150/12-BEJSP18}}.

\bibitem{davidfail}
Tudor David and Rachid Guerraoui.
\newblock Concurrent search data structures can be blocking and practically
  wait-free.
\newblock In {\em Proceedings of the ACM Symposium on Parallelism in Algorithms
  and Architectures (SPAA)}, pages 337--348. {ACM}, 2016.

\bibitem{trigonakis}
Tudor David, Rachid Guerraoui, and Vasileios Trigonakis.
\newblock Asynchronized concurrency: The secret to scaling concurrent search
  data structures.
\newblock In {\em ACM International Conference on Architectural Support for
  Programming Languages and Operating Systems (ASPLOS)}, pages 631--644. {ACM},
  2015.

\bibitem{tree2}
Luc Devroye.
\newblock A note on the height of binary search trees.
\newblock {\em Journal of the ACM (JACM)}, 33(3):489--498, 1986.

\bibitem{fix}
James~D. Fix.
\newblock The set-associative cache performance of search trees.
\newblock In {\em Proceedings of the ACM-SIAM Symposium On Discrete Algorithms
  (SODA)}, pages 565--572, 2003.
\newblock URL: \url{http://dl.acm.org/citation.cfm?id=644108.644203}.

\bibitem{flajolet}
Philippe Flajolet, Dani{\`{e}}le Gardy, and Lo{\"{y}}s Thimonier.
\newblock Birthday paradox, coupon collectors, caching algorithms and
  self-organizing search.
\newblock {\em Discrete Applied Mathematics}, 39(3):207--229, 1992.
\newblock URL: \url{https://doi.org/10.1016/0166-218X(92)90177-C}, \href
  {http://dx.doi.org/10.1016/0166-218X(92)90177-C}
  {\path{doi:10.1016/0166-218X(92)90177-C}}.

\bibitem{fomitchev}
Mikhail Fomitchev and Eric Ruppert.
\newblock Lock-free linked lists and skip lists.
\newblock In {\em Proceedings of the ACM Symposium on Principles of Distributed
  Computing (PoDC)}, pages 50--59. {ACM}, 2004.

\bibitem{versatile}
Christine Fricker, Philippe Robert, and James Roberts.
\newblock A versatile and accurate approximation for {LRU} cache performance.
\newblock {\em CoRR}, abs/1202.3974, 2012.
\newblock URL: \url{http://arxiv.org/abs/1202.3974}, \href
  {http://arxiv.org/abs/1202.3974} {\path{arXiv:1202.3974}}.

\bibitem{gramoli}
Vincent Gramoli.
\newblock More than you ever wanted to know about synchronization:
  synchrobench, measuring the impact of the synchronization on concurrent
  algorithms.
\newblock In {\em Principles and Practice of Parallel Programming (PPoPP)},
  pages 1--10. {ACM}, 2015.

\bibitem{harris}
Timothy~L. Harris.
\newblock A pragmatic implementation of non-blocking linked-lists.
\newblock In {\em International Symposium on Distributed Computing (DISC)},
  volume 2180 of {\em Lecture Notes in Computer Science}, pages 300--314.
  Springer, 2001.

\bibitem{skiplist1}
Peter Kirschenhofer and Helmut Prodinger.
\newblock The path length of random skip lists.
\newblock {\em Acta Informatica}, 31(8):775--792, 1994.
\newblock URL: \url{https://doi.org/10.1007/BF01178735}, \href
  {http://dx.doi.org/10.1007/BF01178735} {\path{doi:10.1007/BF01178735}}.

\bibitem{littles-law}
John D.~C. Little.
\newblock A proof for the queuing formula: L= $\lambda$ w.
\newblock {\em Operations research}, 9(3):383--387, 1961.

\bibitem{tree3}
Hosam~M. Mahmoud and Ralph Neininger.
\newblock Distribution of distances in random binary search trees.
\newblock {\em The Annals of Applied Probability}, 13(1):253--276, 01 2003.
\newblock URL: \url{https://doi.org/10.1214/aoap/1042765668}, \href
  {http://dx.doi.org/10.1214/aoap/1042765668}
  {\path{doi:10.1214/aoap/1042765668}}.

\bibitem{natarajan}
Aravind Natarajan and Neeraj Mittal.
\newblock Fast concurrent lock-free binary search trees.
\newblock In {\em Principles and Practice of Parallel Programming (PPoPP)},
  pages 317--328. {ACM}, 2014.

\bibitem{intel-bench}
Gabriele Paoloni.
\newblock How to benchmark code execution times on
  {Intel}\textsuperscript{\textregistered} ia-32 and ia-64 instruction set
  architectures.
\newblock Technical Report 324264-001, Intel, September 2010.

\bibitem{skiplist2}
William Pugh.
\newblock Skip lists: A probabilistic alternative to balanced trees.
\newblock {\em Communications of the ACM}, 33(6):668--676, 1990.
\newblock URL: \url{http://doi.acm.org/10.1145/78973.78977}, \href
  {http://dx.doi.org/10.1145/78973.78977} {\path{doi:10.1145/78973.78977}}.

\bibitem{tree}
Raimund Seidel and Cecilia~R. Aragon.
\newblock Randomized search trees.
\newblock {\em Algorithmica}, 16(4/5):464--497, 1996.

\bibitem{sundell}
H{\aa}kan Sundell and Philippas Tsigas.
\newblock Fast and lock-free concurrent priority queues for multi-thread
  systems.
\newblock {\em J. Parallel Distrib. Comput.}, 65(5):609--627, 2005.

\bibitem{cachemiss}
Yutao Zhong, Steven~G. Dropsho, Xipeng Shen, Ahren Studer, and Chen Ding.
\newblock Miss rate prediction across program inputs and cache configurations.
\newblock {\em IEEE Transaction on Computers}, 56(3):328--343, 2007.

\end{thebibliography}

\end{document}